\newif\iflong
\newtheorem{theorem}{Theorem}[section]
\newtheorem{lemma}[theorem]{Lemma}
\newtheorem{proposition}[theorem]{Proposition}
\newtheorem{corollary}[theorem]{Corollary}
\theoremstyle{remark}
\newtheorem{definition}[theorem]{Definition}
\newenvironment{example}
  {\pushQED{\qed}\examplex}
  {\popQED\endexamplex}
\begin{document}

%

%

\twocolumn[

\aistatstitle{Adaptation of the Independent Metropolis-Hastings Sampler with Normalizing Flow Proposals}

\aistatsauthor{ James A. Brofos \And Marylou Gabri\'{e} \And  Marcus A. Brubaker \And Roy R. Lederman }
\aistatsaddress{ Yale University \And CDS, New York University \\  CCM, Flatiron Institute \And York University \\ Vector Institute \And Yale University } ]

\begin{abstract}
Markov Chain Monte Carlo (MCMC) methods are a powerful tool for computation with complex probability distributions.  However the performance of such methods is critically dependant on properly tuned parameters, most of which are difficult if not impossible to know a priori for a given target distribution.  Adaptive MCMC methods aim to address this by allowing the parameters to be updated during sampling based on previous samples from the chain at the expense of requiring a new theoretical analysis to ensure convergence.
In this work we extend the convergence theory of adaptive MCMC methods to a new class of methods built on a powerful class of parametric density estimators known as normalizing flows.  In particular, we consider an independent Metropolis-Hastings sampler where the proposal distribution is represented by a normalizing flow whose parameters are updated using stochastic gradient descent.  We explore the practical performance of this procedure on both synthetic settings and in the analysis of a physical field system and compare it against both adaptive and non-adaptive MCMC methods.

\end{abstract}

\section{INTRODUCTION}\label{sec:introduction}

Markov Chain Monte Carlo (MCMC) methods are procedures for generating samples from probability distributions, typically given knowledge of the density of the distribution up to proportionality. These MCMC samplers often depend on parameters; for instance, in the random walk Metropolis procedure on $\R^n$, one may treat the covariance matrix of a normal proposal distribution as a parameter of the method; see, for instance, \citet{bj/1080222083}. The performance of an MCMC procedure
will depend on these parameters.  It would be preferable if these parameters could be adapted during sampling, however such adaptions can violate the Markov property of the chain and undermine its convergence to the desired target distribution.

An important variation of MCMC is the independent Metropolis-Hastings sampler. This method samples from a target distribution by first sampling from a auxiliary proposal distribution (independently from the current state of the chain) and accepts or rejects those proposals according to the Metropolis-Hastings criterion. The effectiveness of this algorithm depends on the ratio of the target density to the ratio of the proposal density \citep{10.5555/1051451}: if the ratio is bounded over the support of the target distribution, the algorithm enjoys a powerful theory of geometric ergodicity. The independent Metropolis-Hastings algorithm is the focus of the present work.

Recently in the machine learning community, normalizing flows have emerged as a powerful mechanism for expressing complex densities, see \citep{Kobyzev_2020,papamakarios2021normalizing} for a recent reviews. Normalizing flows are defined by a parametric, smooth and invertible function which transforms a simple distribution (e.g., a Gaussian) into a more complex one (e.g., natural images) and uses the change-of-variables formula to exactly determine the resulting probability density function in the complex space.  Provided that the family of normalizing flows under consideration is sufficiently expressive, any distribution can be constructed in theory this way.
In practice, many normalizing flows exhibit a universal approximation property whereby, given suitable model capacity, they can approximate any distribution arbitrarily well, e.g., \citep{pmlr-v80-huang18d,jaini2019sumofsquares}.
Indeed, normalizing flows are distinguished among parameteric families of distributions by their expressiveness and tractability of sampling and log-density evaluation; the precise attributes that one requires for a proposal distribution in the independent Metropolis-Hastings sampler. By incorporating normalizing flows into the MCMC framework we seek to leverage their expressivity along with the ergodicity of the MCMC procedure in order to produce samples from a target distribution (see \cref{fig:adaptive-mcmc}). The principle computational challenge associated to normalizing flows is the identification of parameters that produce the best approximation of a target density. Therefore, a question of principle theoretical interest and practical importance is, ``During the course of sampling, under what conditions can the parameters of the normalizing flow be continuously adapted?''

The outline of this paper is as follows. In \cref{sec:preliminaries} we review important concepts from the analysis of Markov chains; we provide the independent Metropolis-Hastings algorithm and state the conditions under which it enjoys geometric ergodicity; we devise a metric space over transition kernels, which will be important for analyzing notions of continuity. We review recent experimental works that demonstrated the benefit of normalizing flow proposals in MCMCs and related theoretical literature in \cref{sec:related-work}. In \cref{sec:analytical-apparatus} we state our theories for the continual adaptation of Markov chains. We begin by considering {\it deterministic adaptations} wherein parameter updates are determined sequentially and deterministically without regard to the state of the chain; this case can be used to motivate the adaptation of normalizing flows as a gradient flow. We then proceed to consider {\it stochastic adaptations} wherein the state of the chain and the adaptation of the parameters of the normalizing flow at the $n^\mathrm{th}$ step are {\it not necessarily independent} given the history of the chain up to the $(n-1)^\mathrm{th}$ step. This circumstance includes the case wherein the accepted proposal sampled from the normalizing flow is also used in the computation of the adaptation, as necessary for the ``pseudo-likelihood'' algorithm we examine numerically in \cref{sec:experiments}.


\begin{figure}[t!]
\centering
\begin{tikzpicture}[
    node distance=2,thick,
    flow/.style={shorten >=3, shorten <=3, ->},
    znode/.style={circle,fill=black!10,minimum size=22,inner sep=0},
  ]

  \node[znode] (t0) {$\Theta_0$};

  \node[znode,below=0.4 of t0] (z0) {$X_0$};
  \node[znode,right=of z0] (z1) {$X_1$};
  \node[znode,above=0.4 of z1] (t1) {$\Theta_1$};

  \draw[flow] (z0) -- node[above,midway] {MH} (z1);

  \node[znode,right=of z1] (zi) {$X_n$};
  \node[znode,above=0.4 of zi] (tn) {$\Theta_n$};
  
  \draw[flow] (t0) --node[above,midway] {Adapt} (t1);
  \draw[flow] (t1) --node[above,midway] {Adapt} (tn);
  \draw[flow] (t1) --node[rectangle,fill=white,anchor=center,midway] {$\dots$} (tn);
  
  \draw[flow] (z1) -- node[above,midway] {MH} (zi);
  \draw[flow] (z1) --node[rectangle,fill=white,anchor=center,midway] {$\dots$} (zi);

  \node[outer sep=0,inner sep=0,below=0.2 of z0,label={below:$\tilde{\Pi}_{\Theta_0}$}] (f0) {\includegraphics[scale=0.1]{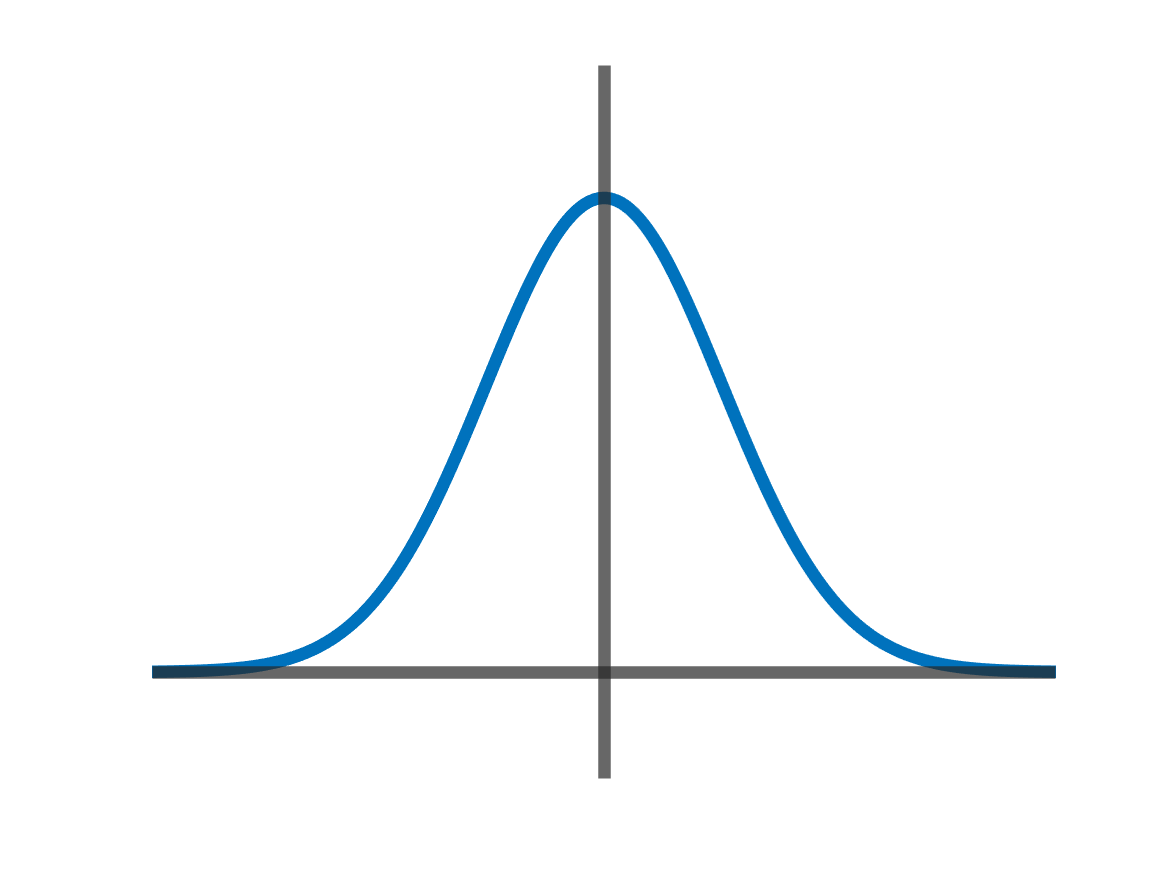}};
  \node[outer sep=0,inner sep=0,below=0.2 of z1,label={below:$\tilde{\Pi}_{\Theta_1}$}] (fi) {\includegraphics[scale=0.1]{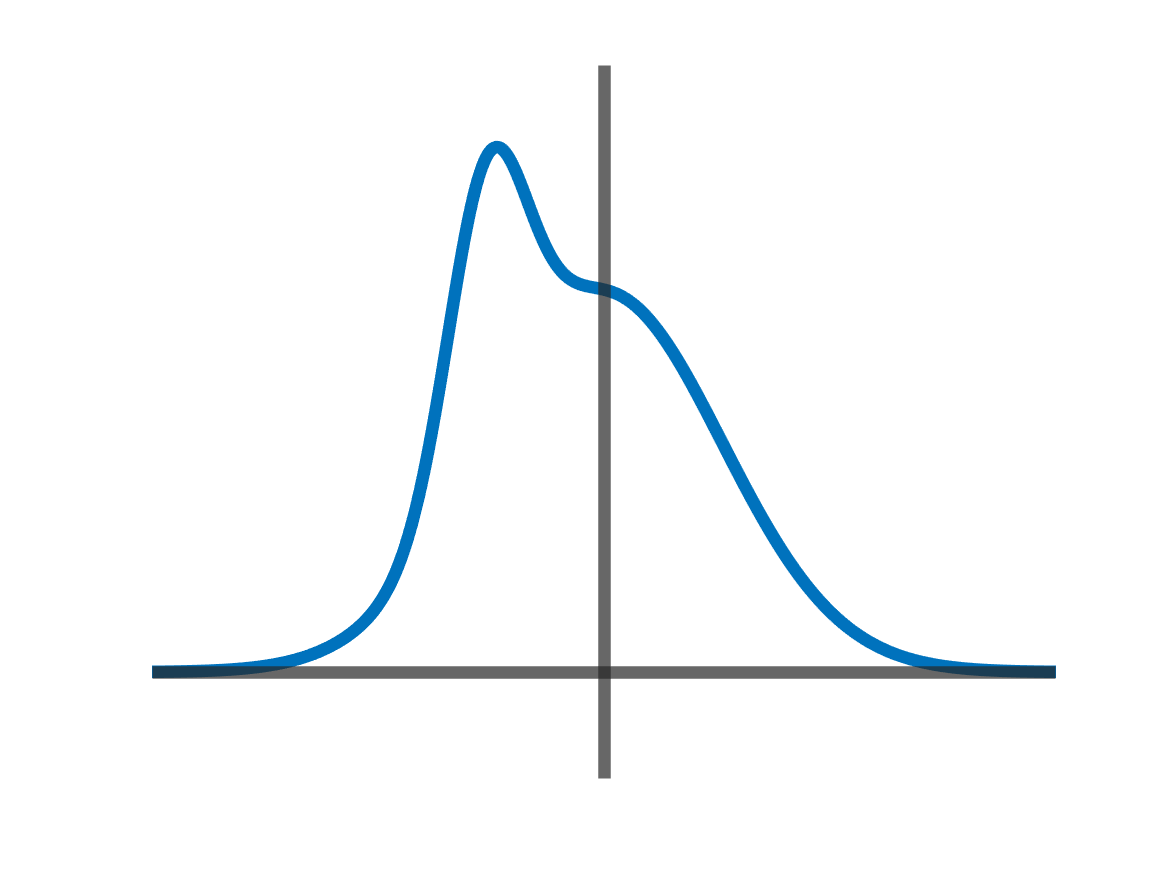}};
  \node[outer sep=0,inner sep=0,below=0.2 of zi,label={below:$\tilde{\Pi}_{\Theta_n}$}] (fk) {\includegraphics[scale=0.1]{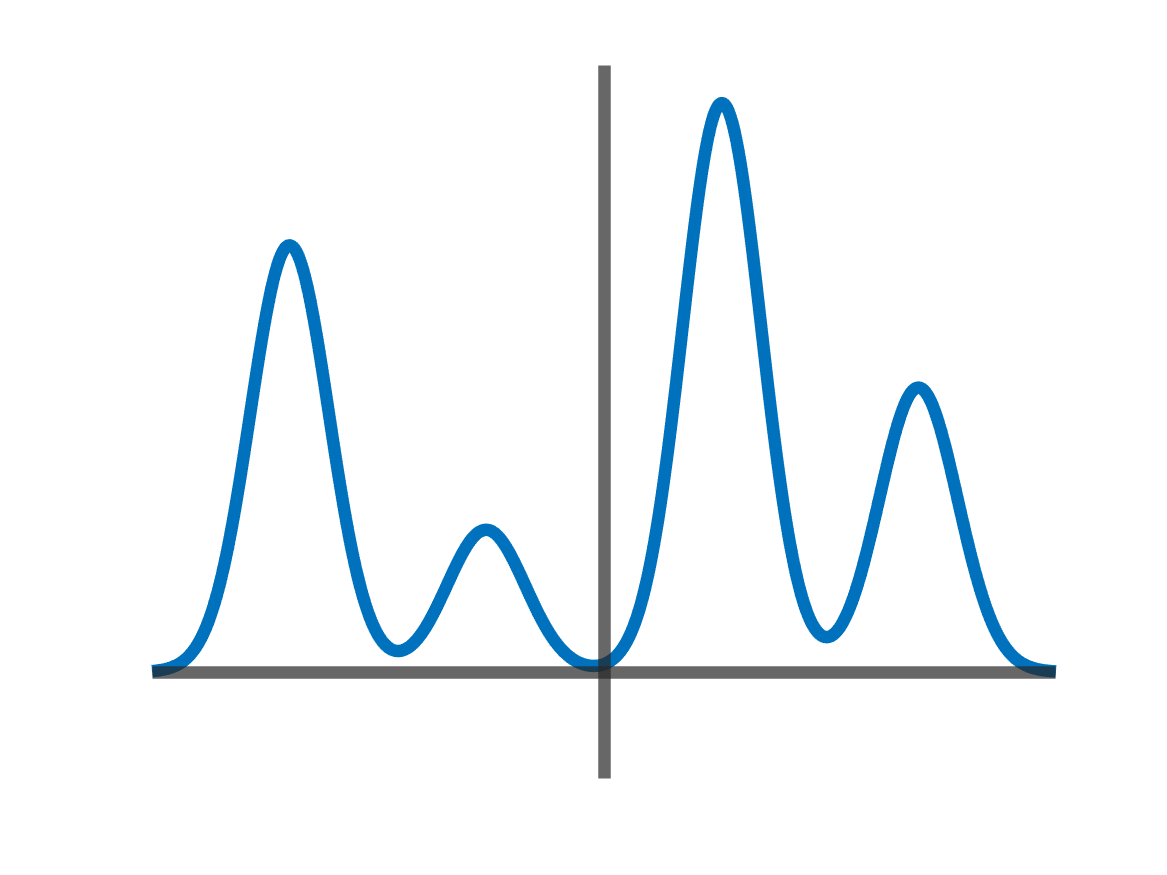}};
\end{tikzpicture}
\caption{This work examines the convergence of adaptive Markov chain Monte Carlo algorithms using the independent Metropolis-Hastings algorithm when the proposal distribution is parameterized by a normalizing flow. In this illustration, we seek to draw samples from a target distribution. We begin with an initial parameter $\Theta_0$ which parameterizes a simple proposal distribution, denoted $\tilde{\Pi}_{\Theta_0}$, which is a normalizing flow, and an initial state of the chain $X_0$; a sample from this proposal is accepted or rejected according to the Metropolis-Hastings criterion, yielding a transition to the state $X_1$. The parameters of the normalizing flow are thereafter adapted to produce a new proposal distribution $\tilde{\Pi}_{\Theta_1}$, which we hope is closer to the target distribution. Iterating this procedure we obtain both a sequence of states $(X_n)_{n\in\mathbb{N}}$ and a sequence of normalizing flow parameters $(\Theta_n)_{n\in\mathbb{N}}$. The principle question of this work is to establish when the sequence of states converges to the target density. \label{fig:adaptive-mcmc}}
\end{figure}

\section{PRELIMINARIES}\label{sec:preliminaries}

In giving an overview of Markov chains and their associated theory, we emulate the notation and presentation of \cite{meyn1993markov}. Refer to \cref{app:total-variation-review} for a review of total variation distances. Throughout, we let $\mathcal{X}$ denote a set which we equip with its Borel $\sigma$-algebra, denoted $\mathfrak{B}(\mathcal{X})$. We associate to $(\mathcal{X},\mathfrak{B}(\mathcal{X}))$ a measure $\mu : \mathfrak{B}(\mathcal{X}) \to [0,\infty)$ -- satisfying $\mu(A)\geq 0$ for all $A\in\mathfrak{B}(\mathcal{X})$, $\mu(\emptyset)=0$, and the condition of countable additivity -- to create the measure space $(\mathcal{X},\mathfrak{B}(\mathcal{X}),\mu)$. A probability measure is a measure which satisfies Kolmogorov's axioms \citep{kolmogorov1960foundations}. A signed measure relaxes the condition of non-negativity. If $X$ is an $\mathcal{X}$-valued random variable and $\Pi$ is a probability measure on $(\mathcal{X}, \mathfrak{B}(\mathcal{X}))$ we write $X\sim \Pi(\cdot)$ to mean that for any $A\in\mathfrak{B}(\mathcal{X})$ we have $\mathrm{Pr}\left[X\in A\right] = \Pi(A)$. If a probability measure $\Pi$ has a density with respect to a dominating measure $\mu$, this means that for all $A\in\mathfrak{B}(\mathcal{X})$, $\Pi(A) = \int_A \pi(x)~\mu(\mathrm{d}x)$. The support of a density $\pi$ is $\mathrm{Supp}(\pi) = \set{x\in\mathcal{X} : \pi(x)> 0}$. When we turn our attention to the discussion of parameterizations of transition kernels, we will write $\mathcal{Y}$ as a generic parameter space and use the symbol $\theta\in\mathcal{Y}$ to refer to a particular parameterization. We denote the Dirac measure concentrated at $x\in\mathcal{X}$ by $\delta_x(\cdot)$.

\subsection{Transition Kernels}

In MCMC, we generate a sequence of $\mathcal{X}$-valued random variables, denoted $(X_0, X_1,\ldots)$ that satisfy the Markov property. The transition to state $X_{n+1}$ given $X_n=x_n$ is formally captured by the notion of a transition kernel.
\begin{definition}[\citet{10.5555/1051451}]
  A transition kernel on $\mathcal{X}$ is a function $\mathcal{X}\times \mathfrak{B}(\mathcal{X}) \ni (x, A) \mapsto K(x, A)$ that satisfies the following two properties: (i) For all $x\in \mathcal{X}$, $K(x,\cdot)$ is a probability measure and (ii) For all $A\in\mathfrak{B}(\mathcal{X})$, $K(\cdot,A)$ is $\mathfrak{B}(\mathcal{X})$-measurable.
\end{definition}
Thus, the propagation of the state from step $n$ to step $n+1$ is represented by $X_{n+1} \sim K(x_n, \cdot)$. When considering Markov chains, we will frequently be interested in the $n$-step transition probability measure from some initial state $X_0=x_0$; we denote this probability measure by $K^n(x_0, \cdot) = \mathrm{Pr}\left[X_{n}\in \cdot\vert X_0=x_0\right]$, which has the following expression:
\begin{align}
\begin{split}
    &K^n(x_0, A) = \underbrace{\int_{\mathcal{X}}\cdots \int_{\mathcal{X}}}_{(n-1)-\mathrm{times}} K(x_0, \mathrm{d}x_1) K(x_1, \mathrm{d}x_2) \\
    &\qquad \cdots K(x_{n-2}, \mathrm{d}x_{n-1}) K(x_{n-1}, A).
\end{split}
\end{align}
Of principle interest to the theory of Markov chains is the limiting behavior of the $n$-step transition probability measure.
\begin{definition}
  The transition kernel $K$ with $n$-step transition law $K^n$ is ergodic for $\Pi$ if, for every $x\in\mathcal{X}$, $\lim_{n\to\infty} \Vert K^n(x, \cdot) - \Pi(\cdot)\Vert_{\mathrm{TV}} = 0$.
\end{definition}
In the sequel, we will require continuity of sequences of transition kernels, which necessitates that we equip the space of transition kernels with a metric. A natural metric considers the worst-case total variation distance between kernels.
\begin{definition}\label{def:transition-kernel-equality}
  Two transition kernels $K$ and $K'$ on $\mathcal{X}\times\mathfrak{B}(\mathcal{X})$ are equal if $\sup_{x\in\mathcal{X}} \Vert K(x, \cdot) - K'(x, \cdot)\Vert_{\mathrm{TV}} = 0$.
\end{definition}
\begin{proposition}\label{prop:transition-kernel-distance}
  Let $K$ and $K'$ be transition kernels on $\mathcal{X}\times\mathfrak{B}(\mathcal{X})$. Then the function, $ d(K, K') = \sup_{x\in\mathcal{X}} \Vert K(x, \cdot) - K'(x, \cdot)\Vert_{\mathrm{TV}}$ is a distance function on transition kernels.
\end{proposition}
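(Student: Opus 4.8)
The plan is to reduce everything to the fact, recalled in \cref{app:total-variation-review}, that for a fixed measurable space the total variation distance $\Vert\cdot-\cdot\Vert_{\mathrm{TV}}$ is itself a metric on the set of probability measures on that space, and that it is bounded (by $1$ or $2$, depending on the normalization convention). Since for each $x\in\mathcal{X}$ both $K(x,\cdot)$ and $K'(x,\cdot)$ are probability measures on $(\mathcal{X},\mathfrak{B}(\mathcal{X}))$ by property (i) of a transition kernel, the quantity $\Vert K(x,\cdot)-K'(x,\cdot)\Vert_{\mathrm{TV}}$ is well-defined, non-negative, and bounded uniformly in $x$; hence the supremum defining $d(K,K')$ exists in $[0,\infty)$ and $d$ is a genuine $\R$-valued function. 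Non-negativity $d(K,K')\ge 0$ is then immediate.

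Symmetry follows pointwise: for every $x$, $\Vert K(x,\cdot)-K'(x,\cdot)\Vert_{\mathrm{TV}}=\Vert K'(x,\cdot)-K(x,\cdot)\Vert_{\mathrm{TV}}$ by symmetry of the total variation metric, and taking suprema over $x\in\mathcal{X}$ on both sides gives $d(K,K')=d(K',K)$. The identity of indiscernibles is essentially built in: $d(K,K')=0$ is, verbatim, the condition in \cref{def:transition-kernel-equality} declaring $K$ and $K'$ equal, so the two notions coincide by construction; if one prefers to phrase equality of kernels as ``$K(x,\cdot)=K'(x,\cdot)$ for all $x$'', then $d(K,K')=0$ forces $\Vert K(x,\cdot)-K'(x,\cdot)\Vert_{\mathrm{TV}}=0$ for every $x$, and the metric property of $\Vert\cdot\Vert_{\mathrm{TV}}$ upgrades this to $K(x,\cdot)=K'(x,\cdot)$ for every $x$.

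For the triangle inequality, fix a third transition kernel $K''$. For each $x\in\mathcal{X}$, applying the triangle inequality for the total variation metric to the probability measures $K(x,\cdot)$, $K'(x,\cdot)$, $K''(x,\cdot)$ yields
\begin{align}
\Vert K(x,\cdot)-K''(x,\cdot)\Vert_{\mathrm{TV}} &\le \Vert K(x,\cdot)-K'(x,\cdot)\Vert_{\mathrm{TV}} + \Vert K'(x,\cdot)-K''(x,\cdot)\Vert_{\mathrm{TV}} \nonumber \\
&\le d(K,K') + d(K',K''),
\end{align}
where the second inequality bounds each term by its supremum over $x$. Since the right-hand side does not depend on $x$, taking the supremum over $x\in\mathcal{X}$ on the left gives $d(K,K'')\le d(K,K')+d(K',K'')$, completing the verification that $d$ is a distance function.

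The only step requiring any care, and the ``main obstacle'' such as it is, is confirming that the supremum is finite so that $d$ takes values in $\R$ rather than $[0,\infty]$; this is handled by the uniform boundedness of the total variation distance between probability measures. Everything else is a routine transfer of the metric axioms for $\Vert\cdot\Vert_{\mathrm{TV}}$ through the supremum, using that a pointwise inequality with an $x$-independent right-hand side passes to the supremum on the left.
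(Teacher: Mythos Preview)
Your proof is correct and follows essentially the same route as the paper: symmetry, identity of indiscernibles (via \cref{def:transition-kernel-equality}), and the triangle inequality are all handled by reducing to the corresponding properties of the total variation metric pointwise in $x$ and then passing to the supremum. You additionally verify that the supremum is finite (so $d$ is $\R$-valued), a point the paper leaves implicit.
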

A proof is given in \cref{app:proofs-concerning-continuity}.

\subsection{Independent Metropolis-Hastings}

\begin{definition}\label{def:independent-metropolis-hastings}
  Let $\Pi$ and $\tilde{\Pi}$ be two probability measures on $\mathfrak{B}(\mathcal{X})$ with densities with respect to some dominating measure $\mu$ given by $\pi$ and $\tilde{\pi}$, respectively. Consider a Markov chain $(X_0, X_1,X_2,\ldots)$ constructed via the following procedure given an initial state of the Markov chain $X_0=x_0$. First, randomly sample $\tilde{X} \sim \tilde{\Pi}$. Then set $X_{n+1} = \tilde{X}$ with probability $\min\set{\frac{\pi(\tilde{X}) \tilde{\pi}(X_n)}{\pi(X_n) \tilde{\pi}(\tilde{X})}, 1}$ and otherwise set $X_{n+1}=X_n$.
  The Markov chain $(X_0,X_1,X_2,\ldots)$ is called the independent Metropolis-Hastings sampler of $\Pi$ given $\tilde{\Pi}$.
\end{definition}
\begin{proposition}\label{prop:independent-metropolis-hastings-uniformly-ergodic}
  Let $K$ denote the transition kernel of the independent Metropolis-Hastings sampler. The stationary distribution of $(X_0,X_1,X_2,\ldots)$ is $\Pi$ and if there exists a constant $M\geq 1$ such that $\frac{\pi(x)}{\tilde{\pi}(x)} \leq M, ~~\forall~ x\in \mathrm{Supp}(\pi)$, then the independent Metropolis-Hastings sampler is uniformly ergodic in the sense that $\Vert K^n(x,\cdot) - \Pi\Vert_{\mathrm{TV}} \leq 2\paren{1 - \frac{1}{M}}^n$.
\end{proposition}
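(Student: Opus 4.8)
The plan is to decompose the argument into three steps: write the independent Metropolis--Hastings kernel in an explicit form that separates accepted moves from rejections, verify stationarity of $\Pi$ by reversibility, and then extract a uniform (Doeblin) minorization from the ratio bound which upgrades to the stated geometric rate via a coupling argument. Throughout I would restrict attention to $x\in\mathrm{Supp}(\pi)$: the hypothesis $\pi(z)/\tilde\pi(z)\le M$ forces $\mathrm{Supp}(\pi)\subseteq\mathrm{Supp}(\tilde\pi)$, so both densities are strictly positive there, the acceptance ratio is well defined, and since $\Pi$ charges only $\mathrm{Supp}(\pi)$ this loses nothing for a total-variation statement.

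Writing $\alpha(x,y)=\min\set{\frac{\pi(y)\tilde\pi(x)}{\pi(x)\tilde\pi(y)},1}$ for the acceptance probability, the kernel has the representation
\[
K(x,A)=\int_A \alpha(x,y)\,\tilde\pi(y)\,\mu(\mathrm{d}y)+r(x)\,\delta_x(A),\qquad r(x)=1-\int_{\mathcal X}\alpha(x,y)\,\tilde\pi(y)\,\mu(\mathrm{d}y).
\]
For stationarity I would check detailed balance: $\pi(x)\,\alpha(x,y)\,\tilde\pi(y)=\min\set{\pi(y)\tilde\pi(x),\,\pi(x)\tilde\pi(y)}$, which is manifestly symmetric in $(x,y)$; the rejection term lives on the diagonal and is trivially reversible. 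Integrating the detailed-balance identity in $x$ against $\pi$ then yields $\int_{\mathcal X}\pi(x)K(x,A)\,\mu(\mathrm{d}x)=\Pi(A)$, i.e. $\Pi K=\Pi$.

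The crux is the minorization. For $x,y\in\mathrm{Supp}(\pi)$, the bound $\tilde\pi(x)/\pi(x)\ge 1/M$ (from $\pi(x)/\tilde\pi(x)\le M$) and $\tilde\pi(y)\ge \pi(y)/M$ give
\[
\alpha(x,y)\,\tilde\pi(y)=\min\set{\tfrac{\pi(y)\tilde\pi(x)}{\pi(x)},\,\tilde\pi(y)}\ge \min\set{\tfrac{\pi(y)}{M},\,\tfrac{\pi(y)}{M}}=\tfrac{\pi(y)}{M},
\]
whence $K(x,A)\ge \int_A\alpha(x,y)\tilde\pi(y)\,\mu(\mathrm{d}y)\ge \frac1M\Pi(A)$ for every $x$ and every $A\in\mathfrak B(\mathcal X)$. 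This is a uniform minorization with constant $\varepsilon=1/M$. To conclude, decompose $K(x,\cdot)=\frac1M\Pi(\cdot)+\paren{1-\frac1M}R(x,\cdot)$, where $R$ is a genuine transition kernel by the minorization, and run the standard regeneration coupling: at each step, independently with probability $1/M$ the state is redrawn from $\Pi$, which couples a chain started at $x$ with a stationary copy; the probability of no regeneration within $n$ steps is $\paren{1-\frac1M}^n$, so the coupling inequality gives $\Vert K^n(x,\cdot)-\Pi\Vert_{\mathrm{TV}}\le 2\paren{1-\frac1M}^n$.

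The main obstacle is not any single estimate but the measure-theoretic bookkeeping: justifying the kernel decomposition and the $0/0$ conventions off $\mathrm{Supp}(\pi)$, and arguing that restricting the minorization to the support is harmless. The passage from a uniform minorization to geometric ergodicity is entirely classical (the Doeblin-condition argument, e.g. in \citet{meyn1993markov}), so I would either cite it or include the short coupling proof; the reversibility computation and the minorization inequality are both one-line symmetry/monotonicity observations.
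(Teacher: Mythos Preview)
The paper does not supply its own proof of this proposition, deferring instead to \citet{meyn1993markov,10.5555/1051451}. Your argument is correct and is the classical Doeblin-minorization / regeneration-coupling proof; the minorization inequality you derive and the mixture decomposition are precisely the ingredients the paper itself uses in \cref{lem:transition-density-lower-bound} and in the proof of \cref{thm:adaptive-total-variation-rate} for the adaptive generalization, so your approach matches the paper's methodology.
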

For a proof of these results, refer to \cite{meyn1993markov,10.5555/1051451}. There is a question of when such a $M$ as in \cref{prop:independent-metropolis-hastings-uniformly-ergodic} will exist. Under a compactness condition and assumptions of continuity on both the proposal and target densities, then an affirmative existence result can be given.
\begin{corollary}\label{cor:compact-uniformly-ergodic}
  If, in addition, $\mathcal{X}$ is a compact set and if $\pi$ and $\tilde{\pi}$ are continuous on $\mathcal{X}$, and if $\mathrm{Supp}(\pi)\subseteq \mathrm{Supp}(\tilde{\pi})$ then there exists such an $M$ as in \cref{prop:independent-metropolis-hastings-uniformly-ergodic}.
\end{corollary}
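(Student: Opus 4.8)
The plan is to reduce the claim to the extreme value theorem: produce a finite upper bound for $\pi$ and a strictly positive lower bound for $\tilde\pi$ over an appropriate compact set, and then take $M$ to be the larger of $1$ and the ratio of the two constants. The first half is immediate: since $\mathcal{X}$ is compact and $\pi$ is continuous, $P := \max_{x \in \mathcal{X}} \pi(x)$ exists and is finite.

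The substantive step is bounding $\tilde\pi$ away from zero on $\mathrm{Supp}(\pi)$. One cannot apply the extreme value theorem to $\tilde\pi$ on $\mathrm{Supp}(\pi)$ directly, because $\mathrm{Supp}(\pi) = \set{x \in \mathcal{X} : \pi(x) > 0}$ is the preimage of an open set under the continuous map $\pi$, hence only relatively open in $\mathcal{X}$ and in general not compact. The remedy is to pass to the closure $C := \overline{\mathrm{Supp}(\pi)}$, a closed subset of the compact space $\mathcal{X}$ and therefore itself compact, and to argue that $\tilde\pi > 0$ on all of $C$: at points of $\mathrm{Supp}(\pi)$ this is precisely the hypothesis $\mathrm{Supp}(\pi) \subseteq \mathrm{Supp}(\tilde\pi)$, and at the remaining boundary points one needs, in effect, that this inclusion persists for the closures of the supports. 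Granting this, $\tilde\pi$ is continuous and strictly positive on the compact set $C$, so $\epsilon := \min_{x \in C} \tilde\pi(x) > 0$; then for every $x \in \mathrm{Supp}(\pi) \subseteq C$ we have $\pi(x)/\tilde\pi(x) \le P/\epsilon$, and $M := \max\set{1, P/\epsilon}$ is an $M$ as in \cref{prop:independent-metropolis-hastings-uniformly-ergodic}. (Integrating $\pi \le M\tilde\pi$ against $\mu$ shows $P/\epsilon \ge 1$ in any case, so the truncation at $1$ is only cosmetic.)

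I expect the boundary analysis above to be the only real obstacle: if $\tilde\pi$ were permitted to vanish at a limit point of $\mathrm{Supp}(\pi)$, then $\pi/\tilde\pi$ could be unbounded as that point is approached, so this is exactly where continuity of the densities and the support-containment hypothesis must do genuine work, everything else being a routine compactness argument. Equivalently, one can run the proof by contradiction: were no such $M$ to exist, there would be $x_n \in \mathrm{Supp}(\pi)$ with $\pi(x_n)/\tilde\pi(x_n) \to \infty$; compactness of $\mathcal{X}$ yields a subsequential limit $x^\star$, and $\pi(x_n) \le P$ together with continuity forces $\tilde\pi(x^\star) = 0$, so the contradiction to be extracted is again a statement about the point $x^\star \in C$ rather than about $\mathrm{Supp}(\pi)$ itself.
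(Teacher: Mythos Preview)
Your approach is the same as the paper's: invoke the extreme value theorem on the compact space and bound the ratio by a quotient of extrema. The paper's proof is terser than yours---it simply asserts that $\pi$ and $\tilde\pi$ attain their extrema on $\mathcal{X}$ and takes $M \le \dfrac{\max_{x\in\mathcal{X}} \pi(x)}{\min_{x\in\mathcal{X}} \tilde\pi(x)}$ without further comment.

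Your extra care about the boundary of $\mathrm{Supp}(\pi)$ is not misplaced; in fact you have put your finger on a gap that the paper's own proof glosses over. With the paper's definition $\mathrm{Supp}(\pi)=\{x:\pi(x)>0\}$, the hypothesis $\mathrm{Supp}(\pi)\subseteq\mathrm{Supp}(\tilde\pi)$ does not prevent $\tilde\pi$ from vanishing on the boundary of $\mathrm{Supp}(\pi)$, and one can manufacture genuine counterexamples to the corollary as stated: on $\mathcal{X}=[0,1]$ take $\pi(x)=\tfrac{3}{2}\sqrt{x}$ and $\tilde\pi(x)=2x$, so that both supports equal $(0,1]$ yet $\pi/\tilde\pi=\tfrac{3}{4}x^{-1/2}$ is unbounded. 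The paper's displayed bound is then $\tfrac{3}{2}/0$, which is no bound at all. In the paper's actual applications the proposal density is a normalizing flow with $\mathrm{Supp}(\tilde\pi_\theta)=\mathcal{X}$ (this is assumed explicitly elsewhere, e.g.\ in \cref{thm:continuity-parameterization}), and under that stronger hypothesis both your argument and the paper's go through without difficulty. Your ``granting this'' caveat is therefore exactly the right hedge.
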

A proof is given in \cref{app:proofs-concerning-compact}. The transition kernel of the independent Metropolis-Hastings sampler has the form
\begin{align}
    \label{eq:independent-metropolis-transition-kernel} \begin{split} &K(x, \mathrm{d}x') = \min\set{1, \frac{\pi(x')\tilde{\pi}(x)}{\pi(x)\tilde{\pi}(x')}} \tilde{\pi}(x')~\mu(\mathrm{d}x') + \\&\qquad \paren{1 - \int_{\mathcal{X}} \min\set{1, \frac{\pi(w)\tilde{\pi}(x)}{\pi(x)\tilde{\pi}(w)}} \tilde{\pi}(w)~\mu(\mathrm{d}w)} \delta_{x}(\mathrm{d}x').\end{split}
\end{align}
The first term in \cref{eq:independent-metropolis-transition-kernel} is the probability of an accepted transition from $x$ to the region $\mathrm{d}x'$ whereas the second term is the probability of remaining at $x$, which only contributes if $x$ lies in the region $\mathrm{d}x'$.

\iflong
\subsection{Normalizing Flows}

Normalizing flows are families of parameterizable, smooth bijections from $\mathcal{X}$ into itself for which the  bijection, its inverse, and its Jacobian determinant are computationally tractable. Moreover, in order to comprise a practically relevant technique, the family of normalizing flows so described should be able to transform a simple base distribution (such as a standard multivariate Gaussian) into a complex distribution of interest. 
\begin{example}\label{ex:realnvp-normalizing-flow}
Let $\mathcal{X}=\R^m$ and consider a simple base probability measure $\Pi_X$ with density $\pi_X$ with respect to Lebesgue measure. Let $X\sim \Pi_X(\cdot)$. Let $\phi_\theta$ be a smooth bijection parameterized by $\theta \in \R^k$. By the change-of-variables formula, the density of $Y = \phi_\theta(X)$ is
\begin{align}
    \pi_Y(y) = \pi_X(\phi_\theta^{-1}(y)) \cdot \abs{\mathrm{det}(\nabla_y \phi_\theta^{-1}(y))}.
\end{align}
Moreover, if $\phi_\theta$ meets the computational tractability conditions, one can easily generate samples from $Y$ and compute the probability density $\pi_Y$ at arbitrary locations in $\mathcal{X}$.

A common normalizing flow on Euclidean space is the RealNVP architecture \citep{DBLP:conf/iclr/DinhSB17}, which defines a mapping $\phi_{(\theta_\mu,\theta_\sigma)} : \R^{n+m}\to\R^{n+m}$ as follows. Let $x\in \R^{n+m}$ and let $\mu_{\theta_\mu} : \R^n\to\R^m$ and $\sigma_{\theta_\sigma} : \R^n\to\R_+^m$ be parameterized, differentiable functions. Let $y\in\R^{n+m}$ be computed according to:
  \begin{align}
      y = \begin{pmatrix}
        x_{1:n} \\
        x_{(n+1):m}\otimes \sigma_{\theta_\sigma}(x_{1:n}) + \mu_{\theta_\mu}(x_{1:n})
      \end{pmatrix}
  \end{align}
  where $\otimes$ denotes element-wise multiplication and we have used the notation $x_{a:b}$ to mean $(x_a, x_{a+1},\ldots, x_{b-1}, x_b)$ for $a$ and $b$ in $\set{1, \ldots, n+m}$. The Jacobian determinant of the transformation $x\mapsto y$ is simply $\prod_{i=1}^m \sigma^{(i)}_{\theta_\sigma}(x_{1:n})$. The inverse transform is
  \begin{align}
      x = \begin{pmatrix}
        y_{1:n} \\
        (y_{(n+1):m} - \mu_{\theta_\mu}(y_{1:n})) \oslash \sigma_{\theta_\sigma}(y_{1:n})
      \end{pmatrix},
  \end{align}
  where $\oslash$ denoted element-wise division.
\end{example}

\fi

\subsection{Adaptive Transition Kernels}

As alluded to in \cref{sec:introduction}, the transition kernel may depend on parameters, denoted by $\theta$ and taking values in a set $\mathcal{Y}$. In this case, we express the dependency of the kernel $K$ on its parameters by writing $K_{\theta}$. In adaptive MCMC, given a target probability measure $\Pi$, we seek to strategically construct a sequence of transition kernels $(K_{\Theta_n})_{n\in\mathbb{N}}$ where $(\Theta_n)_{n\in\mathbb{N}}$ is a sequence of $\mathcal{Y}$-valued random variables. Ideally, the sequence $(\Theta_n)_{n\in\mathbb{N}}$ will enable sampling from $\Pi$ that becomes more effective with each step. In the adaptive MCMC framework, the one-step transition laws for $X_{n+1}$ given $X_n=x_n$ and $\Theta_n=\theta_n$ is $X_{n+1}\sim K_{\theta_n}(x_n, \cdot)$. The $n$-step transition law given $X_0=x_0$ and $(\Theta_0=\theta_0, \ldots,\Theta_{n-1}=\theta_{n-1})$ is
\begin{align}
    \label{eq:deterministic-transition-law} \begin{split} &K_{(\theta_i)_{i=0}^{n-1}}^n(x_0, A) = \underbrace{\int_{\mathcal{X}}\cdots \int_{\mathcal{X}}}_{(n-1)-\mathrm{times}} K_{\theta_{0}}(x_0, \mathrm{d}x_1) K_{\theta_1}(x_1, \mathrm{d}x_2) \\
    &\qquad\cdots K_{\theta_{n-2}}(x_{n-2}, \mathrm{d}x_{n-1}) K_{\theta_{n-1}}(x_{n-1}, A)\end{split}.
\end{align}
Therefore, by the law of total expectation, the $n$-step transition law given $X_0=x_0$ is $G^n(x_0, A) = \underset{(\Theta_0,\ldots,\Theta_{n-1})}{\mathbb{E}} K_{(\Theta_i)_{i=0}^{n-1}}^n(x_0, A)$,
where the expectation is computed over the marginal distribution of the parameters. We now give a precise definition for what it means for an adaptive MCMC procedure to be ergodic.
\begin{definition}\label{def:adaptive-ergodicity}
  The $n$-step transition law $G^n$ is said to be ergodic for the probability measure $\Pi$ if, for every $x\in\mathcal{X}$, $\lim_{n\to\infty} \Vert G^n(x, \cdot) - \Pi(\cdot)\Vert_{\mathrm{TV}} = 0$.
\end{definition}

The principal theoretical tools of our analysis are the definitions of containment, simultaneous uniform ergodicity, and diminishing adaptation. Diminishing adaptation together with either containment or simultaneous uniform ergodicity implies ergodicity of the adaptive MCMC procedure in the sense of \cref{def:adaptive-ergodicity}. The remainder of this section is a review of \cite{roberts_rosenthal_2007,yan-containment}.

\begin{definition}\label{def:diminishing-adaptation}
  The sequence of Markov transition kernels $\set{K_{\Theta_n}}_{n\in\mathbb{N}}$ is said to exhibit diminishing adaptation if $\lim_{n\to\infty} d(K_{\Theta_{n+1}}, K_{\Theta_n}) = 0$ in probability.
\end{definition}
\begin{lemma}[\citet{roberts_rosenthal_2007}]\label{lem:random-diminishing-adaptation}
  Suppose that $\Theta_{n+1} = \Theta_n$ w.p. $1-\alpha_n$ and otherwise $\Theta_{n+1} = \Theta'_n$ where $\Theta'_n\in \mathcal{Y}$ is any other element of the index set. If $\lim_{n\to\infty }\alpha_n = 0$, then $(K_{\Theta_0},K_{\Theta_1},\ldots)$ exhibits diminishing adaptation.
\end{lemma}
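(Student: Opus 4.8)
The plan is to exploit the defining feature of the sup--total-variation distance $d$ from \cref{prop:transition-kernel-distance}: it vanishes whenever the two kernels coincide, so the only way $d(K_{\Theta_{n+1}}, K_{\Theta_n})$ can be strictly positive is if an adaptation actually occurs at step $n$, and that event has probability governed by $\alpha_n$.

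Concretely, I would fix $\epsilon > 0$ and first observe that on the event $\{\Theta_{n+1} = \Theta_n\}$ the random transition kernels $K_{\Theta_{n+1}}$ and $K_{\Theta_n}$ are literally identical, so $d(K_{\Theta_{n+1}}, K_{\Theta_n}) = 0$ since $d$ is a distance function (\cref{prop:transition-kernel-distance}). Passing to contrapositives gives the inclusion of events $\{d(K_{\Theta_{n+1}}, K_{\Theta_n}) > \epsilon\} \subseteq \{\Theta_{n+1}\neq\Theta_n\}$, and hence
\begin{align}
  \mathrm{Pr}\left[d(K_{\Theta_{n+1}}, K_{\Theta_n}) > \epsilon\right] \leq \mathrm{Pr}\left[\Theta_{n+1}\neq\Theta_n\right] \leq \alpha_n,
\end{align}
where the final inequality holds because $\Theta_{n+1}$ differs from $\Theta_n$ only on the ``adapt'' branch, which is selected with probability $\alpha_n$ (with equality when the alternative $\Theta'_n$ never coincides with $\Theta_n$, but the inequality is all that is needed). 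Since $\alpha_n \to 0$ by hypothesis, the right-hand side tends to $0$; as $\epsilon > 0$ was arbitrary, this is precisely the assertion that $d(K_{\Theta_{n+1}}, K_{\Theta_n}) \to 0$ in probability, i.e.\ diminishing adaptation in the sense of \cref{def:diminishing-adaptation}.

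There is no substantive obstacle here; the only points requiring care are bookkeeping ones. One should note that $d(K_{\Theta_{n+1}}, K_{\Theta_n})$ is a genuine random variable, which rests on the measurability of $\theta \mapsto K_\theta$ into the metric space of kernels and is taken for granted in this framework. In addition, if one wishes to allow a state-dependent adaptation probability $\alpha_n = \alpha_n(X_n, \Theta_n)$ as in \citet{roberts_rosenthal_2007}, one replaces the hypothesis $\alpha_n \to 0$ by $\sup_{x,\theta} \alpha_n(x,\theta) \to 0$ and conditions on $(X_n,\Theta_n)$ before taking expectations in the display above; the argument is otherwise unchanged.
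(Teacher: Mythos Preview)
Your argument is correct and is exactly the standard one: on the non-adaptation event the kernels coincide so $d=0$, hence $\{d(K_{\Theta_{n+1}},K_{\Theta_n})>\epsilon\}\subseteq\{\Theta_{n+1}\neq\Theta_n\}$, and the latter has probability at most $\alpha_n\to 0$. Note, however, that the paper does not supply its own proof of this lemma; it is simply quoted from \citet{roberts_rosenthal_2007}, whose argument is the same inclusion-of-events bound you wrote down.
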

\begin{definition}\label{def:containment}
  Define $ W_\epsilon(x, K) = \mathrm{inf}\set{n\geq 1 : \Vert K^n(x,\cdot) - \Pi(\cdot)\Vert_\mathrm{TV} < \epsilon}$.
  The sequence $(\Theta_n)_{n\in\mathbb{N}}$ is said to exhibit containment if, for every $\epsilon>0$, the sequence $(W_\epsilon(X_0, K_{\Theta_0}),W_\epsilon(X_1, K_{\Theta_1}),\ldots)$ is bounded in probability given $X_0=x_0$ and $\Theta_0=\theta_0$, where $X_{n+1}\sim K_{\Theta_n}(X_n, \cdot)$.
\end{definition}
Containment states that for a particular stochastic sequence of adaptations $(\Theta_n)_{n\in\mathbb{N}}$ there is, with arbitrarily high probability, a finite number of steps one may take with any of the parameters in the sequence in order to be arbitrarily close to the target distribution.
The following theorems give the relationships between diminishing adaptation, simultaneous uniform ergodicity, containment, and ergodicity of the adaptive MCMC procedure. The proofs of these results may be found in \citep{roberts_rosenthal_2007}.
\begin{theorem}\label{thm:ergodic-diminishing-simultaneous}
  Let $\set{K_\theta}_{\theta\in \mathcal{Y}}$ be a family of Markov chain transition kernels that are all stationary for the same distribution $\Pi$. Suppose that the family satisfies \cref{def:simultaneous-uniform-ergodicity} and that the sequence $(\Theta_0,\Theta_1,\ldots)$ satisfies \cref{def:diminishing-adaptation}. Then the chain whose transitions are governed by $X_{n+1} \sim K_{\Theta_n}(X_n, \cdot)$ is ergodic for the distribution $\Pi$.
\end{theorem}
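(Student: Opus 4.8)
The plan is to run a hybrid (telescoping) argument over a window of fixed length, comparing the genuine non-homogeneous evolution of the adaptive chain against a ``frozen'' homogeneous chain that uses the single kernel $K_{\Theta_{n_0}}$ throughout the window, and then to absorb the resulting discrepancy using diminishing adaptation. The order of quantifiers is the crucial point: the window length is to be chosen first, depending only on the target accuracy $\epsilon$ through simultaneous uniform ergodicity, and only afterwards is the starting time of the window sent to infinity.

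First I would fix $\epsilon>0$ and invoke \cref{def:simultaneous-uniform-ergodicity} to obtain an integer $N=N(\epsilon)$ with $\sup_{\theta\in\mathcal{Y}}\sup_{x\in\mathcal{X}}\Vert K_\theta^N(x,\cdot)-\Pi\Vert_{\mathrm{TV}}\le\epsilon/3$. Next I would record the elementary lemma that, for any $x$ and any $\theta,\theta_0,\dots,\theta_{N-1}\in\mathcal{Y}$,
\begin{align*}
\Big\Vert \delta_x K_{\theta_0}K_{\theta_1}\cdots K_{\theta_{N-1}} - K_\theta^N(x,\cdot)\Big\Vert_{\mathrm{TV}} \;\le\; \sum_{j=0}^{N-1} d\big(K_{\theta_j},K_\theta\big),
\end{align*}
proved by interpolating with the hybrid measures $H_j=\big(\delta_x K_{\theta_0}\cdots K_{\theta_{j-1}}\big)K_\theta^{N-j}$, using that a Markov kernel is a non-expansion in total variation (so $\Vert\nu K-\nu' K\Vert_{\mathrm{TV}}\le\Vert\nu-\nu'\Vert_{\mathrm{TV}}$), that $\Vert\nu K_{\theta_j}-\nu K_\theta\Vert_{\mathrm{TV}}\le d(K_{\theta_j},K_\theta)$ by \cref{prop:transition-kernel-distance}, and then summing $\Vert H_{j+1}-H_j\Vert_{\mathrm{TV}}$ over $j$. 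Specializing $\theta=\theta_0$ and applying the triangle inequality $d(K_{\theta_0},K_{\theta_j})\le\sum_{i=0}^{j-1}d(K_{\theta_i},K_{\theta_{i+1}})$ converts the right-hand side into $N\sum_{i=0}^{N-2}d(K_{\theta_i},K_{\theta_{i+1}})$, i.e.\ a quantity built only from consecutive-kernel distances.

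Then I would assemble these probabilistically. Running the adaptive chain from $X_0=x_0$ and conditioning on the history up to step $n_0+N-1$ (in particular on $X_{n_0}$ and on $\Theta_{n_0},\dots,\Theta_{n_0+N-1}$), the conditional law of $X_{n_0+N}$ is exactly $\delta_{X_{n_0}}K_{\Theta_{n_0}}\cdots K_{\Theta_{n_0+N-1}}$, so combining the previous two paragraphs,
\begin{align*}
\Big\Vert \mathrm{Law}\!\left(X_{n_0+N}\mid \mathcal{F}_{n_0+N-1}\right)-\Pi\Big\Vert_{\mathrm{TV}} \;\le\; \frac{\epsilon}{3}+N\sum_{i=n_0}^{n_0+N-2} d\big(K_{\Theta_i},K_{\Theta_{i+1}}\big)\;=:\;\frac{\epsilon}{3}+R_{n_0}.
\end{align*}
Since $N$ is now fixed, $R_{n_0}$ is a finite sum of terms each tending to $0$ in probability by \cref{def:diminishing-adaptation}, hence $R_{n_0}\to 0$ in probability as $n_0\to\infty$; using that total variation is bounded (by $2$ under the normalization of \cref{prop:independent-metropolis-hastings-uniformly-ergodic}) and taking expectations over the history via the convexity estimate $\Vert \mathbb{E}[\mu_\omega]-\Pi\Vert_{\mathrm{TV}}\le\mathbb{E}\Vert\mu_\omega-\Pi\Vert_{\mathrm{TV}}$ gives $\limsup_{n\to\infty}\Vert G^n(x_0,\cdot)-\Pi\Vert_{\mathrm{TV}}\le\epsilon$ for every $x_0$. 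Letting $\epsilon\downarrow 0$ yields ergodicity in the sense of \cref{def:adaptive-ergodicity}.

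I expect the main obstacle to be the bookkeeping of the order of limits rather than any deep estimate: the telescoping lemma costs a factor of order $N^2$, which would be fatal if $N$ and $n_0$ were coupled, so the argument must hold $N$ fixed (pinned down by simultaneous uniform ergodicity) while exploiting that diminishing adaptation forces each of the finitely many consecutive distances $d(K_{\Theta_i},K_{\Theta_{i+1}})$ inside the window to vanish. Secondary technical points are the interchange of expectation with the total-variation norm and the observation that conditioning on the history renders the next $N$ transitions an ordinary non-homogeneous product of fixed kernels, so that \cref{def:transition-kernel-equality} and \cref{prop:transition-kernel-distance} apply verbatim.
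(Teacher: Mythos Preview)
The paper does not prove this theorem; it states just before it that ``The proofs of these results may be found in \citep{roberts_rosenthal_2007}.'' So the comparison is to the Roberts--Rosenthal argument rather than to anything in the paper.

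Your overall architecture matches Roberts--Rosenthal: pin down $N$ from simultaneous uniform ergodicity, compare the adaptive evolution over a window of length $N$ to a frozen-kernel evolution, and let diminishing adaptation absorb the discrepancy. However, one step fails as written. You assert that, conditioning on $\mathcal{F}_{n_0+N-1}$, the conditional law of $X_{n_0+N}$ is $\delta_{X_{n_0}}K_{\Theta_{n_0}}\cdots K_{\Theta_{n_0+N-1}}$. In the general adaptive setting the parameters $\Theta_{n_0+j}$ are functions of the intermediate states $X_{n_0+1},\ldots,X_{n_0+j}$, so conditioning on the $\Theta$'s leaks information about those states and the conditional law of $X_{n_0+N}$ is \emph{not} the non-homogeneous product of kernels from $X_{n_0}$; if instead you condition on the full history including the intermediate $X$'s, the conditional law collapses to $K_{\Theta_{n_0+N-1}}(X_{n_0+N-1},\cdot)$, to which your telescoping lemma does not apply. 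Your deterministic hybrid lemma is correct, but it cannot be invoked directly on random parameters entangled with the chain.

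The repair---and what Roberts--Rosenthal actually do---is to replace the conditional-law claim by a coupling: run a second chain $X''$ that agrees with $X$ up to time $n_0$ and thereafter uses the frozen kernel $K_{\Theta_{n_0}}$, coupling each step maximally so that, on $\{X_{n_0+j}=X''_{n_0+j}\}$, the one-step decoupling probability is at most $d(K_{\Theta_{n_0+j}},K_{\Theta_{n_0}})$ conditionally on the past. Summing over $j$, bounding by consecutive distances via the triangle inequality, and using diminishing adaptation with $N$ fixed drives $\mathrm{Pr}[X_{n_0+N}\ne X''_{n_0+N}]\to 0$; simultaneous uniform ergodicity controls $\Vert\mathrm{Law}(X''_{n_0+N})-\Pi\Vert_{\mathrm{TV}}$, and the rest of your argument (the order of limits, the convexity estimate for passing to expectations) then goes through unchanged.
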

\begin{theorem}\label{thm:ergodic-diminishing-containment}
  Let $\set{K_\theta}_{\theta\in \mathcal{Y}}$ be a family of Markov chain transition kernels that are all stationary for the same distribution $\Pi$. Suppose that the sequence $(\Theta_0,\Theta_1,\ldots)$ satisfies \cref{def:diminishing-adaptation,def:containment}. Then the chain whose transitions are governed by $X_{n+1} \sim K_{\Theta_n}(X_n, \cdot)$ is ergodic for the distribution $\Pi$.
\end{theorem}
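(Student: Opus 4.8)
The plan is to run the coupling argument of \citet{roberts_rosenthal_2007}: compare the true adaptive chain against a ``frozen'' chain that ceases to adapt at a large time $n_0$ and thereafter uses the single fixed kernel $K_{\Theta_{n_0}}$, and bound the distance of the law of $X_n$ to $\Pi$ by a triangle inequality through the law of the frozen chain. Write $\mathcal{L}(\,\cdot\,)$ for the law of a random variable; the law of $X_n$ under the adaptive dynamics started from $X_0 = x_0$ is $G^n(x_0,\cdot)$. Fix $\epsilon > 0$. By containment (\cref{def:containment}) the sequence $(W_\epsilon(X_n, K_{\Theta_n}))_{n\in\mathbb{N}}$ is bounded in probability, so there is an integer $N = N(\epsilon)$, not depending on $n$, with $\Pr[\,W_\epsilon(X_n, K_{\Theta_n}) > N\,] < \epsilon$ for every $n$; this $N$ will be the length of the window over which the frozen chain is run.

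First I would bound the frozen chain's distance to $\Pi$. Let $X^\star$ agree with $X$ through time $n_0$ and evolve by $X^\star_{n_0+k} \sim K_{\Theta_{n_0}}(X^\star_{n_0+k-1}, \cdot)$ for $k \ge 1$, so that $\mathcal{L}(X^\star_{n_0+N}) = \mathbb{E}[\, K_{\Theta_{n_0}}^{N}(X_{n_0}, \cdot) \,]$, the expectation taken over $(X_{n_0}, \Theta_{n_0})$. Since $\Pi$ is stationary for every $K_\theta$, one application of a Markov kernel does not increase total variation to $\Pi$, so $k \mapsto \Vert K_{\Theta_{n_0}}^{k}(X_{n_0}, \cdot) - \Pi \Vert_{\mathrm{TV}}$ is non-increasing; hence on the event $\{\,W_\epsilon(X_{n_0}, K_{\Theta_{n_0}}) \le N\,\}$ it is $< \epsilon$ at time $N$, while on the complement it is at most $2$. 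Convexity of total variation and the choice of $N$ then give $\Vert \mathcal{L}(X^\star_{n_0+N}) - \Pi \Vert_{\mathrm{TV}} \le \epsilon + 2\epsilon = 3\epsilon$, uniformly in $n_0$.

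Next I would couple $X$ and $X^\star$ on a common space with $X^\star_n = X_n$ for $n \le n_0$, so that for $1 \le j \le N$, conditionally on the two chains still agreeing at step $n_0 + j - 1$ at a point $x$, the transitions $X_{n_0+j} \sim K_{\Theta_{n_0+j-1}}(x, \cdot)$ and $X^\star_{n_0+j} \sim K_{\Theta_{n_0}}(x, \cdot)$ are maximally coupled; once they disagree their subsequent behaviour is irrelevant to the bound. A union bound over $j$ together with the coupling inequality gives $\Vert \mathcal{L}(X_{n_0+N}) - \mathcal{L}(X^\star_{n_0+N}) \Vert_{\mathrm{TV}} \le 2\sum_{j=1}^{N} \mathbb{E}[\, d(K_{\Theta_{n_0+j-1}}, K_{\Theta_{n_0}}) \,]$, and the triangle inequality for the metric $d$ of \cref{prop:transition-kernel-distance} bounds each summand, so the whole expression is at most $2N\sum_{i=1}^{N-1}\mathbb{E}[\, d(K_{\Theta_{n_0+i}}, K_{\Theta_{n_0+i-1}}) \,]$. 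This is where diminishing adaptation (\cref{def:diminishing-adaptation}) enters: for each fixed $i$, $d(K_{\Theta_{n_0+i}}, K_{\Theta_{n_0+i-1}}) \to 0$ in probability as $n_0 \to \infty$, so the finite sum does too, and since $d$ is bounded, bounded convergence makes the right-hand side tend to $0$. Hence there is a threshold $n_\star(\epsilon)$ with $\Vert \mathcal{L}(X_{n_0+N}) - \mathcal{L}(X^\star_{n_0+N}) \Vert_{\mathrm{TV}} < 2\epsilon$ for all $n_0 \ge n_\star(\epsilon)$. Combining with the previous paragraph, $\Vert G^{n}(x_0, \cdot) - \Pi \Vert_{\mathrm{TV}} < 5\epsilon$ for every $n \ge n_\star(\epsilon) + N$; as $\epsilon > 0$ was arbitrary, this is ergodicity in the sense of \cref{def:adaptive-ergodicity}.

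The main obstacle is the bookkeeping in the coupling step: the window length $N$ must be fixed from containment \emph{before} $n_0 \to \infty$, so that diminishing adaptation — which is only an ``eventual'' statement, and only in probability — can be summed over the $N$ steps via the triangle inequality for $d$ and then upgraded to a bound on expectations by bounded convergence. A secondary subtlety is the dependence between $\Theta_{n_0+j}$ and $X_{n_0+j}$ in the stochastic-adaptation regime; the maximal-coupling construction sidesteps it because at step $j$ it conditions only on the history through step $n_0+j-1$, for which $X_{n_0+j} \sim K_{\Theta_{n_0+j-1}}(X_{n_0+j-1},\cdot)$ holds by the defining property of the adaptive dynamics.
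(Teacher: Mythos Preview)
Your argument is correct and is precisely the coupling proof of \citet{roberts_rosenthal_2007}, which is exactly what the paper invokes for this theorem (the paper does not give its own proof but defers to that reference). The only cosmetic point is that with the paper's total-variation convention the constants in your coupling bound could be tightened by a factor of two, but this does not affect the argument.
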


\section{RELATED WORK}\label{sec:related-work}


A series of works recently investigated the learning of a proposal distribution for the independent Metropolis-Hastings sampler with normalizing flows, in particular for statistical mechanics field theories. For such models, \cite{Albergo2019} used stochastic independent adaptations models following the optimization of the \emph{reverse} Kullback-Leibler divergence (KL), as in \emph{Example}~\ref{ex:normalizing-flow-stock-revkl} of the next section. While this strategy is successful when the target is unimodal, it is known to yield underdispersed approximation of the target distribution and to be prone to mode collapse. Within the framework of variational inference, \cite{Naesseth2020} proposed to address these issues by optimizing instead an approximate \emph{forward} KL using simple parametric families for the proposal. In this case, adaptations are stochastic and rely on the previous states of the chain to estimate gradients of the approximate \emph{forward} KL, called ``pseudo-likelihood'' in \cref{ex:normalizing-flow-pseudo-lkl} of the present paper. Incorporating normalizing flows, \cite{gabrie2021adaptive} successfully sampled multimodal distributions using an initialization that echoes the containment property. In the context of statistical field theories, \cite{Hackett2021} also demonstrated the need for \emph{forward} KL training to assist sampling of multimodal distributions while surveying strategies to obtain training samples different from the adaptive MCMC discussed here. 

Among the works above, ergodicity was only tested numerically. One exception is \cite{gabrie2021adaptive} where a convergence argument based on a continuous time analysis is developed under the assumption of perfect adaptation. The present paper provides a theoretical framework to analyze for the ergodicity of the methods presented in the body of work above. Though our work has focused on establishing ergodicity via the mechanism of \citet{roberts_rosenthal_2007}, we note the work of \citet{Andrieu2006}, which may be used to establish an ergodicity theory. We concur with the statement in \citet{roberts_rosenthal_2007} that \citet{Andrieu2006} ``requir[es] other technical hypotheses which may be difficult to verify in practice'' and that diminishing adaptation and containment are ``somewhat simpler conditions.'' \citet{adaptive-independent-metropolis} considered the case of {\it independent} adaptations of the independent Metropolis-Hastings algorithm; however, this technique requires that accepted and rejected states be treated identically in the adaptation procedure, so we do not consider it further.

\section{ANALYTICAL APPARATUS}\label{sec:analytical-apparatus}

We now consider the principle problem of this paper: {\it When can the adaptive independent Metropolis-Hastings sampler with proposal distribution parameterized by a normalizing flow be given an ergodicity theory?} We separate our discussion into two components wherein the adaptations are either deterministic or not necessarily independent of the state of the chain.

\subsection{Deterministic Adaptations}

\begin{theorem}\label{thm:adaptive-total-variation-rate}
  Let $\Pi$ be a probability measure with density $\pi$. Suppose that every $\theta\in\mathcal{Y}$ parameterizes a probability measure $\tilde{\Pi}_\theta$ on $\mathfrak{B}(\mathcal{X})$ with density $\tilde{\pi}_{\theta}$. Suppose that $(\theta_0,\theta_1,\ldots)$ is a deterministic $\mathcal{Y}$-valued sequence. Let $(K_{\theta_n})_{n\in\mathbb{N}}$ be an associated sequence of Markov transition kernels of the independent Metropolis-Hastings sampler of $\Pi$ given $\tilde{\Pi}_{\theta_n}$. Let $K^n(x_0, A)$ denote the $n$-step transition probability from $x_0$ to $A\in\mathfrak{B}(\mathcal{X})$ with law \cref{eq:deterministic-transition-law}. Then $\Pi$ is the stationary distribution for $K^n$. Suppose further that for each $n\in\mathbb{N}$ there exists $M_{n}$ satisfying $\pi(x) \leq M_{n} \tilde{\pi}_{{\theta_n}}(x)$ for all $x\in \mathrm{Supp}(\pi)$
  Then, $ \Vert K^n(x_0,\cdot) - \Pi(\cdot)\Vert_{\mathrm{TV}} \leq 2\prod_{i=0}^{n-1} \paren{1 - \frac{1}{M_{i}}}$.
\end{theorem}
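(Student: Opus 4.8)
The plan is to write the $n$-step law as the composition $K^n = K_{\theta_0}K_{\theta_1}\cdots K_{\theta_{n-1}}$ and to control it by combining two facts about each factor: each $K_{\theta_n}$ is $\Pi$-stationary, and under the domination hypothesis each $K_{\theta_n}$ admits a uniform minorization by $\Pi$. The stationarity claim is then immediate: by \cref{prop:independent-metropolis-hastings-uniformly-ergodic} we have $\Pi K_{\theta_n} = \Pi$ for every $n$, so applying $\Pi K_{\theta_0} = \Pi$, then $\Pi K_{\theta_1}=\Pi$, and so on down the composition gives $\Pi K^n = \Pi$.

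For the minorization, fix $n$; integrating $\pi \le M_n \tilde\pi_{\theta_n}$ over $\mathrm{Supp}(\pi)$ shows $M_n \ge 1$. From the explicit form \cref{eq:independent-metropolis-transition-kernel}, the accepted-move part of $K_{\theta_n}(x, \mathrm{d}x')$ has $\mu$-density $\min\set{\tilde\pi_{\theta_n}(x'),\ \pi(x')\tilde\pi_{\theta_n}(x)/\pi(x)}$; since $\pi(y)/\tilde\pi_{\theta_n}(y) \le M_n$ on $\mathrm{Supp}(\pi)$, both entries of the minimum are bounded below by $\pi(x')/M_n$ (proposals outside $\mathrm{Supp}(\pi)$ are rejected, so this part is in any case carried by $\mathrm{Supp}(\pi)$). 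Discarding the nonnegative rejection atom $\delta_x(\mathrm{d}x')$ then yields $K_{\theta_n}(x, \cdot) \ge \frac{1}{M_n}\Pi(\cdot)$ for every $x \in \mathrm{Supp}(\pi)$, so we may write $K_{\theta_n}(x, \cdot) = \frac{1}{M_n}\Pi(\cdot) + (1 - \frac{1}{M_n})R_n(x, \cdot)$ with $R_n$ again a transition kernel (when $M_n = 1$ this reads $K_{\theta_n}(x,\cdot) = \Pi$, and the step is vacuous).

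For the rate, set $\mu_0 = \delta_{x_0}$ and $\mu_{k+1} = \mu_k K_{\theta_k}$, so that $\mu_n = K^n(x_0, \cdot)$; each $\mu_k$ is supported on $\mathrm{Supp}(\pi)$ by the remark above, so the split of $K_{\theta_k}$ applies $\mu_k$-almost everywhere. Because $\mu_k - \Pi$ is a signed measure of total mass zero and $\Pi K_{\theta_k} = \Pi$, integrating the split annihilates the $\Pi$-component: $\mu_{k+1} - \Pi = (\mu_k - \Pi)K_{\theta_k} = (1 - \frac{1}{M_k})(\mu_k - \Pi)R_k$. Since transition kernels are contractions in total variation, $\Vert \mu_{k+1} - \Pi\Vert_{\mathrm{TV}} \le (1 - \frac{1}{M_k})\Vert \mu_k - \Pi\Vert_{\mathrm{TV}}$, and iterating from $k = 0$ to $n - 1$ with $\Vert \mu_0 - \Pi\Vert_{\mathrm{TV}} = \Vert \delta_{x_0} - \Pi\Vert_{\mathrm{TV}} \le 2$ gives $\Vert K^n(x_0, \cdot) - \Pi\Vert_{\mathrm{TV}} \le 2\prod_{i=0}^{n-1}(1 - \frac{1}{M_i})$.

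The only real obstacle is the minorization: one must correctly identify the accepted-move density in \cref{eq:independent-metropolis-transition-kernel}, check that both branches of the $\min$ exceed $\pi(x')/M_n$ on $\mathrm{Supp}(\pi)$, and see that this survives as an inequality of measures (the rejection atom only helps, never hurts). This is the same computation that underlies \cref{prop:independent-metropolis-hastings-uniformly-ergodic}, now applied one step at a time with a possibly different $M_n$ at each step; the remainder is routine bookkeeping with the zero-mass property of $\mu_k - \Pi$ and the contractivity of Markov kernels. An equivalent route couples the chain from $x_0$ with a $\Pi$-stationary chain and uses the minorization to coalesce them at step $k$ with probability at least $1/M_k$; the coupling inequality then gives the same bound, with the factor $2$ reflecting the convention under which $\Vert \cdot \Vert_{\mathrm{TV}}$ equals twice the supremum of $|\cdot(A)|$ over measurable $A$.
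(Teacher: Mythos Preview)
Your proof is correct and follows essentially the same route as the paper: both establish the one-step Doeblin minorization $K_{\theta_n}(x,\cdot)\ge M_n^{-1}\Pi(\cdot)$ from the accepted-move density in \cref{eq:independent-metropolis-transition-kernel} and then split $K_{\theta_n}$ as a mixture of $\Pi$ and a residual kernel. The only cosmetic difference is that the paper packages the final rate via the coupling/stopping-time argument you mention at the end (first time the chain ``draws from $\Pi$''), whereas your primary derivation uses the equivalent analytic contraction $\Vert(\mu_k-\Pi)R_k\Vert_{\mathrm{TV}}\le\Vert\mu_k-\Pi\Vert_{\mathrm{TV}}$ together with the zero-mass cancellation; these are two standard presentations of the same Doeblin bound.
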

A proof is given in \cref{app:proofs-concerning-deterministic-adaptations}.

\begin{example}\label{ex:exact-flow}
Let $\Pi$ be a probability measure with density $\pi$. Let $\mathcal{Y}=\R^m$ and suppose that every $\theta\in\mathcal{Y}$ smoothly parameterizes a probability measure $\tilde{\Pi}_\theta$ on $\mathfrak{B}(\mathcal{X})$ with density $\tilde{\pi}_{\theta}$ for which $\mathrm{Supp}(\pi)= \mathrm{Supp}(\tilde{\pi}_{\theta})$. Consider the initial value problem
\begin{align}
    \label{eq:kl-gradient-flow} \frac{\mathrm{d}}{\mathrm{d}t} \theta(t) &= - \nabla_\theta \mathbb{KL}(\tilde{\pi}_{{\theta(t)}}\Vert \pi), ~~~~    \theta(0) = \theta_0,
\end{align}
where $\theta_0\in\mathcal{Y}$.
Let $(t_0,t_1,\ldots)$ be a deterministic sequence of times and let $\theta_n = \theta(t_n)$ for $n\in\mathbb{N}$. Consider the family of Markov chain transition operators of the independent Metropolis-Hastings sampler of $\Pi$ given $\tilde{\Pi}_{\theta_n}$ with transition kernels $K_{\theta_n}$. Then $\Pi$ is the stationary distribution of the Markov chain whose transitions satisfy $X_{n+1}\sim K_{\theta_n}(X_n, \cdot)$.
From the condition $\mathrm{Supp}(\pi) = \mathrm{Supp}(\tilde{\pi}_{\theta})$ it follows that $\Pi$ is the stationary distribution for each $K_{\theta}$. 
\iflong
(Note that the condition $\mathrm{Supp}(\pi) = \mathrm{Supp}(\tilde{\pi}_{\theta})$ is required instead of the weaker condition $\mathrm{Supp}(\pi) \subseteq \mathrm{Supp}(\tilde{\pi}_{\theta})$ in order to guarantee that the relative entropy is well-defined.) 
\fi
Since $(\theta_0,\theta_1,\ldots)$ is a deterministic sequence, it follows from \cref{thm:adaptive-total-variation-rate} that $\Pi$ is the stationary distribution.
The particular mechanism of producing a deterministic sequence was not important; however, the time derivative \cref{eq:kl-gradient-flow} was chosen because it begins to imitate the evolution encountered in normalizing flow loss functions.
\end{example}

\iflong
\begin{figure*}[t!]
    \centering
    \begin{subfigure}[t]{0.3\textwidth}
        \centering
        \includegraphics[width=\textwidth]{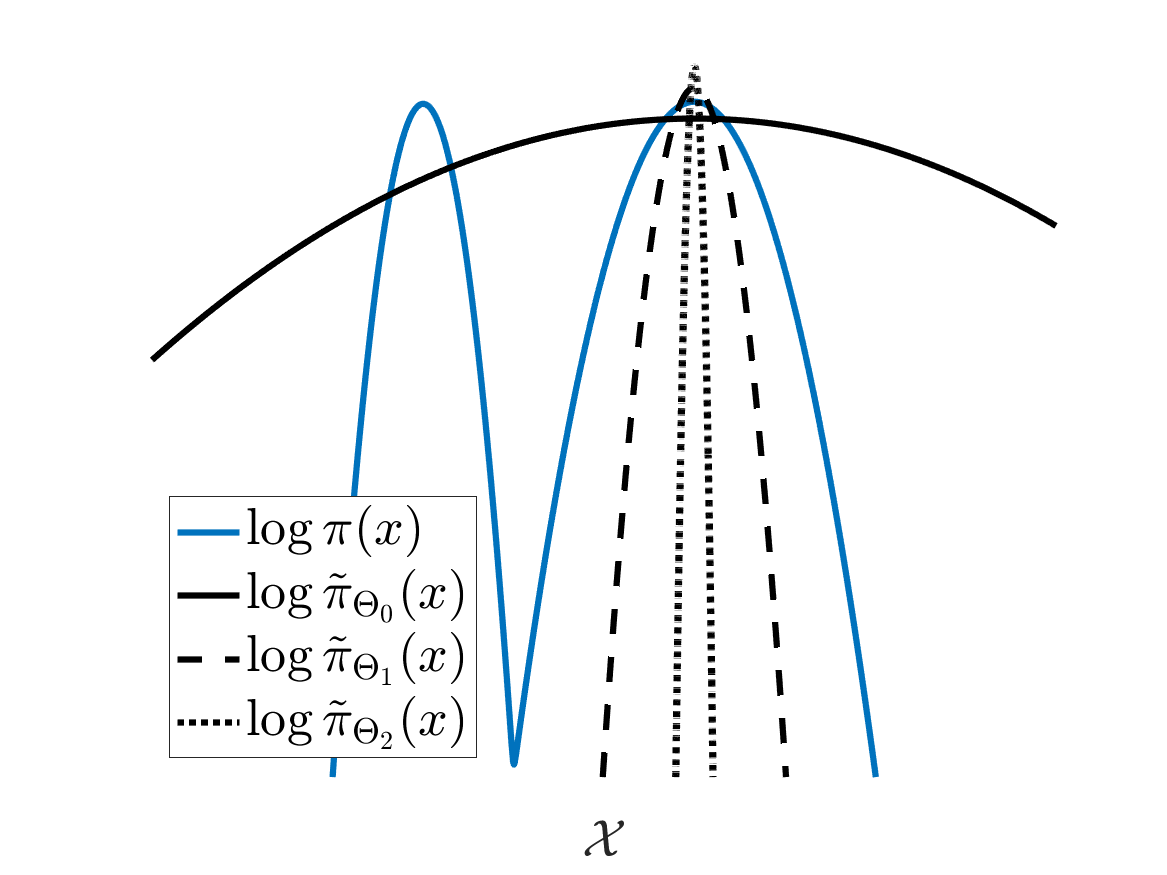}
        \caption{Failure of simultaneous uniform ergodicity}
        \label{subfig:simultaneous-uniform-ergodicity-failure}
    \end{subfigure}
    \hfill
    \begin{subfigure}[t]{0.3\textwidth}
        \centering
        \includegraphics[width=\textwidth]{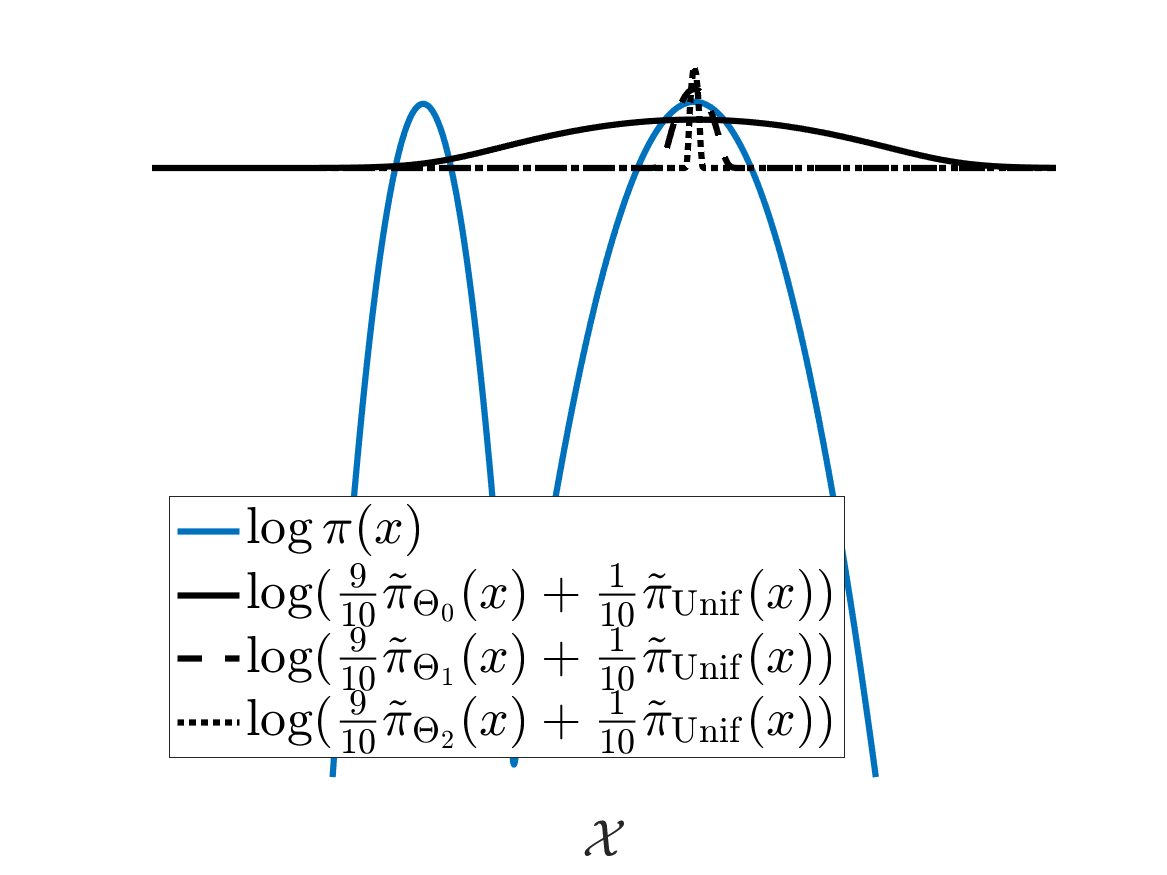}
        \caption{Simultaneous uniform ergodicity via mixing}
        \label{subfig:simultaneous-uniform-ergodicity-mixing}
    \end{subfigure}
    \hfill
    \begin{subfigure}[t]{0.3\textwidth}
        \centering
        \includegraphics[width=\textwidth]{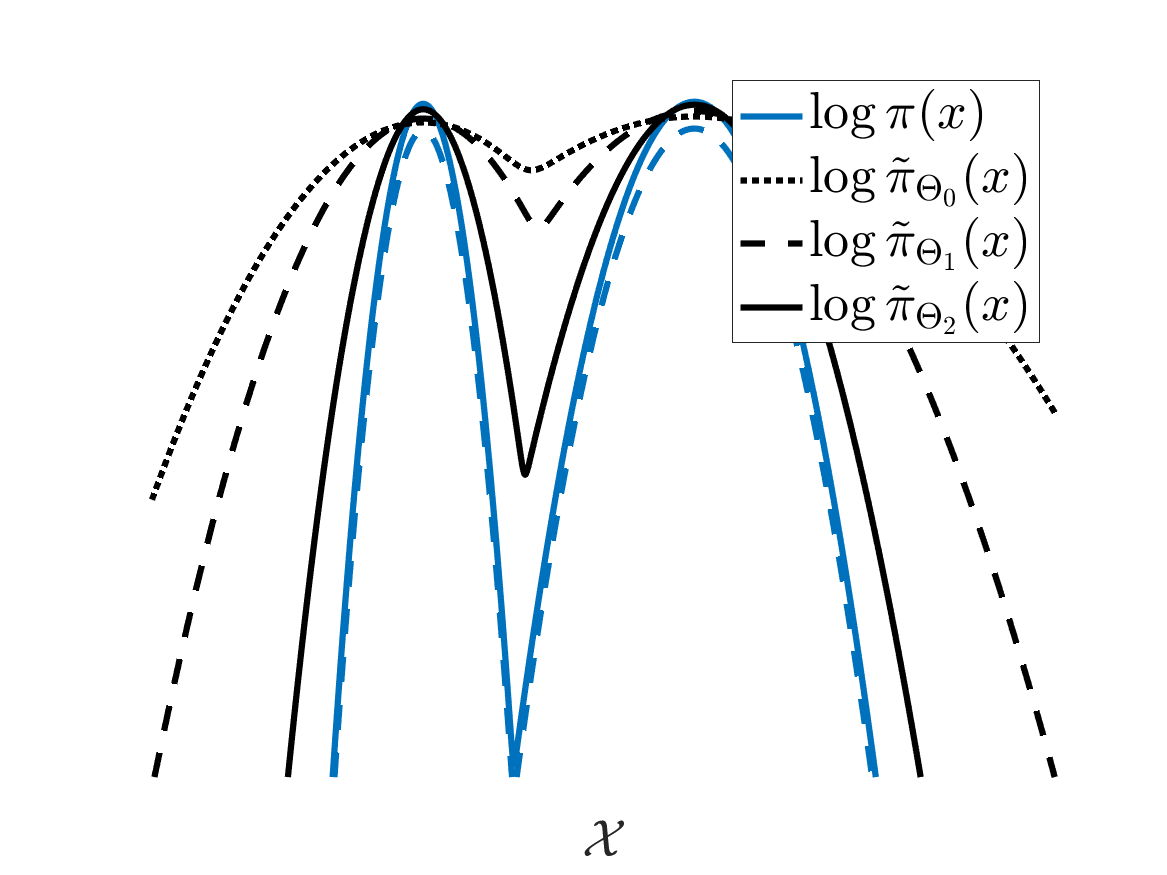}
        \caption{Containment}
        \label{subfig:containment-bound}
    \end{subfigure}
    \caption{Establishing ergodicity of an adaptive chain sampler with non-independent adaptations requires showing that either the simultaneous uniform ergodicity or containment properties hold. In \cref{subfig:simultaneous-uniform-ergodicity-failure}, we show how adaptations can produce a catastrophic failure of ergodicity by visualizing the log-densities of the target distribution and the adaptive proposal distributions: the proposal distribution becomes increasingly concentrated, meaning that there is no uniform bound on the $\epsilon$-mixing time. On a compact space, this phenomenon can be defeated by mixing the normalizing flow with a fixed distribution with full support; we visualize this effect in \cref{subfig:simultaneous-uniform-ergodicity-mixing}. An alternative to mixing is to look toward the containment property. When there is a finite upper bound on the difference between the target density and the proposal density, containment is implied for the independent Metropolis-Hastings sampler. In \cref{subfig:containment-bound}, containment is visualized as the fact that all of the log-densities of the proposal distributions are uniformly greater than the target log-density translated down by some quantity greater than zero (dotted blue line).}
\end{figure*}
\fi

\subsection{Non-Independent Adaptations}

Notice that the decision to make the adaptation and the subsequent state of the chain dependent is not artificial or contrived; in fact, if such a procedure can be equipped with an ergodicity theory, then the resulting algorithm would have an important computational advantage. Specifically, it would require fewer evaluations of the target density (or the gradient of the target density) than the corresponding procedure with independent adaptations. For instance, the following adaptation scheme does not fall into the category of independent adaptations.
\begin{example}\label{ex:normalizing-flow-stock-revkl}
  Let $\Pi$ be a probability measure with density $\pi$ on a space $\mathcal{X}$. Let $\mathcal{Y}=\R^m$ and suppose that every $\theta\in\mathcal{Y}$ smoothly parameterizes a probability measure $\tilde{\Pi}_\theta$ on $\mathfrak{B}(\mathcal{X})$ with density $\tilde{\pi}_{\theta}$ for which $\mathrm{Supp}(\pi)= \mathrm{Supp}(\tilde{\pi}_{\theta})$. Let $\tilde{X} \sim \tilde{\Pi}_{\theta_{n-1}}$ be the proposal produced by the independent Metropolis-Hastings sampler of $\Pi$ given $\tilde{\Pi}_{\theta_{n-1}}$. Consider the adaptation
  \begin{align}
      \theta_n = \theta_{n-1} - \epsilon \nabla_{\theta} \log \frac{\tilde{\pi}_{\theta_{n-1}}(\tilde{X}(\theta_{n-1}))}{\pi(\tilde{X}(\theta_{n-1}))},
  \end{align}
  which can be interpreted as the single-sample approximation of the gradient flow of $\mathbb{KL}(\tilde{\pi}_{\theta_{n-1}}\Vert \pi)$.
\end{example}
This motivates us to explore this direction. \Cref{def:diminishing-adaptation} and the continuous mapping theorem (see \cref{thm:continuous-mapping-theorem}) leads immediately to the following result.
\begin{lemma}\label{lem:convergence-diminishing-adaptation}
  Suppose that the map $\theta\mapsto K_\theta$ is continuous and that the sequence $(\Theta_0,\Theta_1,\ldots)$ converges in probability in $\mathcal{Y}$. Then $(K_{\Theta_0},K_{\Theta_1},\ldots)$ exhibits diminishing adaptation.
\end{lemma}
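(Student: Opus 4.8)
The plan is to chain together two facts: first, that a continuous map sends convergent-in-probability sequences to convergent-in-probability sequences (the continuous mapping theorem, \cref{thm:continuous-mapping-theorem}), and second, that the metric $d$ of \cref{prop:transition-kernel-distance} makes ``$K_{\Theta_{n+1}}$ close to $K_{\Theta_n}$'' literally a statement about a continuous function of the pair $(\Theta_{n+1},\Theta_n)$. So the first step is to observe that by hypothesis the sequence $(\Theta_n)_{n\in\mathbb N}$ converges in probability to some limit $\Theta_\infty \in \mathcal Y$; consequently the shifted sequence $(\Theta_{n+1})_{n\in\mathbb N}$ also converges in probability to the same limit $\Theta_\infty$, and hence the pair sequence $(\Theta_{n+1},\Theta_n)$ converges in probability to $(\Theta_\infty,\Theta_\infty)$ in the product space $\mathcal Y\times\mathcal Y$ (convergence in probability of two sequences to two limits implies joint convergence of the pair, since $\mathcal Y$ is metric).

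Next I would apply the continuous mapping theorem to the function $g:\mathcal Y\times\mathcal Y\to[0,\infty)$ defined by $g(\theta,\theta') = d(K_\theta,K_{\theta'})$. This $g$ is continuous: it is the composition of the map $(\theta,\theta')\mapsto(K_\theta,K_{\theta'})$, which is continuous because $\theta\mapsto K_\theta$ is continuous by hypothesis, with the metric $d$ on the space of transition kernels, which is continuous (indeed $1$-Lipschitz in each argument) as any metric is. Therefore $g(\Theta_{n+1},\Theta_n) \to g(\Theta_\infty,\Theta_\infty)$ in probability. But $g(\Theta_\infty,\Theta_\infty) = d(K_{\Theta_\infty},K_{\Theta_\infty}) = 0$ by \cref{def:transition-kernel-equality}. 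Hence $d(K_{\Theta_{n+1}},K_{\Theta_n}) \to 0$ in probability, which is precisely \cref{def:diminishing-adaptation}, the definition of diminishing adaptation.

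The only delicate point — and the main obstacle, such as it is — is the passage from ``each coordinate converges in probability'' to ``the pair converges in probability jointly,'' and the precise form of the continuous mapping theorem being invoked: we need the version that allows the limit to be a (possibly random) element of the space rather than a constant, or alternatively we first pass to the constant limit $\Theta_\infty$ and argue about the deterministic target $(\Theta_\infty,\Theta_\infty)$. Strictly, ``$(\Theta_n)$ converges in probability in $\mathcal Y$'' should be read as convergence to some limiting random variable $\Theta_\infty$; both $(\Theta_n)$ and $(\Theta_{n+1})$ converge to that same $\Theta_\infty$, so the difference $\rho_{\mathcal Y}(\Theta_{n+1},\Theta_n)\to 0$ in probability by the triangle inequality, and then one applies continuity of $g$ at the diagonal together with the fact that $g$ vanishes on the diagonal. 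I would phrase the argument this last way — via the triangle inequality in $\mathcal Y$ and continuity of $g$ on the diagonal — since it sidesteps any need to identify the limit object precisely and keeps the appeal to \cref{thm:continuous-mapping-theorem} completely routine.
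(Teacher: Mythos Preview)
Your proposal is correct and follows essentially the same approach as the paper: both invoke the continuous mapping theorem (\cref{thm:continuous-mapping-theorem}) applied to the hypothesis that $\theta\mapsto K_\theta$ is continuous, combine it with the triangle inequality, and conclude that $d(K_{\Theta_{n+1}},K_{\Theta_n})\to 0$ in probability. The only cosmetic difference is in packaging: the paper applies the continuous mapping theorem to $\theta\mapsto K_\theta$ to obtain $K_{\Theta_n}\to K_{\Theta}$ in probability, and then writes out the triangle-inequality-plus-union-bound step explicitly in the kernel metric $d$; you instead push the triangle inequality into the product space $\mathcal Y\times\mathcal Y$ and apply the continuous mapping theorem once to the composite $g(\theta,\theta')=d(K_\theta,K_{\theta'})$. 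Your route is marginally slicker, but the two arguments are the same in substance.
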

A proof is given in \cref{app:proofs-concerning-continuity}. We now consider the question of the continuity of the mapping $\theta\mapsto K_\theta$.
\begin{theorem}\label{thm:continuity-parameterization}
  Let $(\theta_1,\theta_2,\ldots)$ be a $\mathcal{Y}$-valued sequence converging to $\theta$.  Let $\pi$ be a probability density function on a space $\mathcal{X}$ and let $\tilde{\pi}_{\theta}$ be a family of density functions on $\mathcal{X}$ indexed by $\theta$ such that the map $\theta\mapsto \tilde{\pi}_{\theta}$ is continuous. Suppose further that $\mathrm{Supp}(\tilde{\pi}_\theta) = \mathcal{X}$ for every $\theta\in\mathcal{Y}$. Let $\Pi$ be the probability measure on $\mathfrak{B}(\mathcal{X})$ with density $\pi$ and let $\tilde{\Pi}_{\theta}$ be the probability measure on $\mathfrak{B}(\mathcal{X})$ with density $\tilde{\pi}_{\theta}$. Let $K_{\theta}$ be the transition kernel of the independent Metropolis-Hastings sampler of $\Pi$ given $\tilde{\Pi}_{\theta}$. Then $\lim_{n\to\infty} K_{\theta_n} = K_{\theta}$ (i.e. the mapping $\theta\mapsto K_\theta$ is continuous).
\end{theorem}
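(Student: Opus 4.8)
The plan is to reduce everything to an $L^{1}(\mu)$ estimate on the ``accepted transition'' part of the kernel. By \cref{eq:independent-metropolis-transition-kernel} write $K_{\theta}(x,\mathrm{d}x') = a_{\theta}(x,x')\,\mu(\mathrm{d}x') + \rho_{\theta}(x)\,\delta_{x}(\mathrm{d}x')$ with $a_{\theta}(x,x') = \min\{1,\pi(x')\tilde\pi_{\theta}(x)/(\pi(x)\tilde\pi_{\theta}(x'))\}\,\tilde\pi_{\theta}(x') = \min\{\tilde\pi_{\theta}(x'),\,(\tilde\pi_{\theta}(x)/\pi(x))\,\pi(x')\}$ and $\rho_{\theta}(x) = 1-\int_{\mathcal{X}}a_{\theta}(x,w)\,\mu(\mathrm{d}w)$. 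The hypothesis $\mathrm{Supp}(\tilde\pi_{\theta})=\mathcal{X}$ is exactly what makes the acceptance ratio, hence $a_{\theta}$, well defined everywhere (and on $\mathcal{X}\setminus\mathrm{Supp}(\pi)$ the kernel collapses to $\tilde\Pi_{\theta}$ itself, so I dispose of that set separately and trivially). Since $K_{\theta_{n}}(x,\cdot)$ and $K_{\theta}(x,\cdot)$ carry their atomic mass at the same point $x$, and $\int_{\mathcal{X}}(a_{\theta_{n}}(x,\cdot)-a_{\theta}(x,\cdot))\,\mathrm{d}\mu=-(\rho_{\theta_{n}}(x)-\rho_{\theta}(x))$, decomposing each measure into its $\mu$-absolutely-continuous and atomic parts gives
\[
\big\Vert K_{\theta_{n}}(x,\cdot)-K_{\theta}(x,\cdot)\big\Vert_{\mathrm{TV}}\;\le\;2\int_{\mathcal{X}}\big|a_{\theta_{n}}(x,x')-a_{\theta}(x,x')\big|\,\mu(\mathrm{d}x'),
\]
so it suffices to bound the right-hand side uniformly in $x$.

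Because $(u,v)\mapsto\min\{u,v\}$ is $1$-Lipschitz, $|a_{\theta_{n}}(x,x')-a_{\theta}(x,x')|\le|\tilde\pi_{\theta_{n}}(x')-\tilde\pi_{\theta}(x')|+(|\tilde\pi_{\theta_{n}}(x)-\tilde\pi_{\theta}(x)|/\pi(x))\,\pi(x')$; integrating $x'$ against $\mu$ and using $\int\pi\,\mathrm{d}\mu=1$ yields
\[
\int_{\mathcal{X}}\big|a_{\theta_{n}}(x,x')-a_{\theta}(x,x')\big|\,\mu(\mathrm{d}x')\;\le\;\big\Vert\tilde\pi_{\theta_{n}}-\tilde\pi_{\theta}\big\Vert_{L^{1}(\mu)}\;+\;\frac{\big|\tilde\pi_{\theta_{n}}(x)-\tilde\pi_{\theta}(x)\big|}{\pi(x)}.
\]
The first term is independent of $x$ and tends to $0$: continuity of $\theta\mapsto\tilde\pi_{\theta}$ gives convergence of the proposal densities, which Scheff\'{e}'s lemma promotes to $L^{1}(\mu)$-convergence because all $\tilde\pi_{\theta}$ are probability densities (equivalently, $\tilde\Pi_{\theta_{n}}\to\tilde\Pi_{\theta}$ in total variation).

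The hard part is the second term: I need $\sup_{x\in\mathrm{Supp}(\pi)}|\tilde\pi_{\theta_{n}}(x)-\tilde\pi_{\theta}(x)|/\pi(x)\to0$, that is, convergence of the densities uniform \emph{relative to} $\pi$, which neither pointwise nor $L^{1}$ convergence supplies on its own when $\pi(x)$ can be arbitrarily small. This is where the hypotheses enter decisively: in the compact, continuous, everywhere-positive regime of \cref{cor:compact-uniformly-ergodic}, $\pi$ is bounded below by a positive constant and $\tilde\pi_{\theta_{n}}\to\tilde\pi_{\theta}$ uniformly, so the ratio is controlled at once; more generally one splits $\mathcal{X}$ at a level set $\{\pi\ge\eta\}$, controls the ratio there by $\eta^{-1}\Vert\tilde\pi_{\theta_{n}}-\tilde\pi_{\theta}\Vert_{\infty}$, and shows that the contribution of $\{0<\pi<\eta\}$ to the integral above is uniformly small as $\eta\downarrow0$. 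Once $\sup_{x}|\tilde\pi_{\theta_{n}}(x)-\tilde\pi_{\theta}(x)|/\pi(x)\to0$ is in hand, combining the two displays gives $d(K_{\theta_{n}},K_{\theta})=\sup_{x}\Vert K_{\theta_{n}}(x,\cdot)-K_{\theta}(x,\cdot)\Vert_{\mathrm{TV}}\to0$, which by \cref{prop:transition-kernel-distance} is precisely the assertion $K_{\theta_{n}}\to K_{\theta}$.
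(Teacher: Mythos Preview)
Your route is genuinely different from the paper's. The paper never attempts an explicit $L^{1}$ bound via the Lipschitz property of $\min$; instead it fixes $x$ and $A$, notes that the acceptance function $\alpha_{\theta_{n}}(x,\cdot)$ is bounded by $1$ and converges pointwise (the full-support hypothesis keeping the ratio well defined), that $\tilde{\Pi}_{\theta_{n}}\to\tilde{\Pi}_{\theta}$ in total variation by Scheff\'e's lemma, and then applies a variable-measure dominated convergence theorem (with dominating function $\beta_{n}\equiv 1$) to obtain $\int_{A}\alpha_{\theta_{n}}(x,y)\tilde{\pi}_{\theta_{n}}(y)\,\mu(\mathrm{d}y)\to\int_{A}\alpha_{\theta}(x,y)\tilde{\pi}_{\theta}(y)\,\mu(\mathrm{d}y)$. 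This yields $K_{\theta_{n}}(x,A)\to K_{\theta}(x,A)$ for each $(x,A)$, after which a separate lemma is invoked to pass to convergence in the metric $d$. The point of the paper's approach is that by dominating with the constant $1$ rather than by your Lipschitz estimate, the ratio $\tilde{\pi}_{\theta}(x)/\pi(x)$ never enters, so no uniform-in-$x$ control of that ratio is required to handle any fixed $x$.

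The gap you flag in your second term is real, and neither patch you propose closes it under the hypotheses as stated. The theorem does not assume compactness of $\mathcal{X}$, continuity or positivity of $\pi$, or uniform convergence $\tilde{\pi}_{\theta_{n}}\to\tilde{\pi}_{\theta}$, so you cannot import the setting of \cref{cor:compact-uniformly-ergodic}. Your level-set argument does not go through either: the offending term $|\tilde{\pi}_{\theta_{n}}(x)-\tilde{\pi}_{\theta}(x)|/\pi(x)$ is already what remains \emph{after} integrating out $x'$ against $\pi$, so there is no residual ``contribution of $\{0<\pi<\eta\}$ to the integral'' to make small---it is a bare function of the current state $x$, and on $\{0<\pi(x)<\eta\}$ your Lipschitz bound gives no control. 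To salvage the direct-estimate approach on that set you would have to abandon the global Lipschitz inequality and argue instead that when $\pi(x)$ is small the acceptance probability is near $1$, so $a_{\theta}(x,\cdot)\approx\tilde{\pi}_{\theta}(\cdot)$ and the difference is handled by Scheff\'e; but that is a different argument which you have not carried out. As written, your proof does not establish $d(K_{\theta_{n}},K_{\theta})\to 0$ from the theorem's hypotheses alone.
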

A proof is given in \cref{app:proofs-concerning-continuity}. 
When training normalizing flows, it is typical to apply stochastic gradient descent to the minimization of some loss function. The question of when the iterates of stochastic gradient descent converge is an important question that has been recently treated in the case of non-convex losses. We refer the interested reader to \citep{bottou_1999,NEURIPS2020_0cb5ebb1} for conditions and results guaranteeing the convergence of stochastic gradient descent. In practice, the convergence of the sequence of normalizing flow parameters can be further encouraged by a decreasing learning rate schedule. In \cref{app:simultaneous-uniform-ergodicity-compact} we discuss simultaneous uniform ergodicity on compact spaces and give some examples of normalizing flows works in these cases. The condition for geometric ergodicity of the independent Metropolis-Hastings sampler is that there exists $M\geq 1$ such that $\pi(x) \leq M\cdot \tilde{\pi}(x)$ for all $x\in\mathrm{Supp}(\pi)$ where $\pi$ is the density of the target distribution and $\tilde{\pi}$ is the proposal density. By taking the logarithm of both sides and rearranging we obtain the equivalent inequality, $\log \pi(x) - \log\tilde{\pi}(x)\leq \log M$ for all $x\in\mathrm{Supp}(x)$. 

\begin{proposition}\label{prop:containment-log-density}
  Suppose that every $\theta\in\mathcal{Y}$ parameterizes a probability measure $\tilde{\Pi}_\theta$ on $\mathfrak{B}(\mathcal{X})$ with density $\tilde{\pi}_{\theta}$. Let $(\Theta_0,\Theta_1,\ldots)$ be a sequence of $\mathcal{Y}$-valued random variables and consider the family of Markov chain transition operators of the independent Metropolis-Hastings sampler of $\Pi$ given $\tilde{\Pi}_{\Theta_n}$ with transition kernels $K_{\Theta_n}$. Suppose that for all $\delta >0$, there exists  $M\equiv M(\delta)\in [1, \infty)$ such that
  \begin{align}
      \label{eq:containment-log-density}\mathrm{Pr}\left[\log \pi(x) - \log\tilde{\pi}_{\Theta_n}(x) < \log M~~\forall~~x\in\mathcal{X}\right] \geq 1-\delta,
  \end{align}
  for all $n\in\mathbb{N}$.
  Then, $(\Theta_n)_{n\in\mathbb{N}}$ exhibits containment.
\end{proposition}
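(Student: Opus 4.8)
The plan is to unwind \cref{def:containment} and reduce the statement to the uniform geometric ergodicity bound of \cref{prop:independent-metropolis-hastings-uniformly-ergodic}, crucially exploiting that that bound is uniform over starting states. Fix $\epsilon > 0$ and $\eta > 0$; it suffices to produce an integer $N = N(\epsilon,\eta)$, not depending on $n$, such that $\mathrm{Pr}\left[W_\epsilon(X_n, K_{\Theta_n}) > N\right] \leq \eta$ for every $n\in\mathbb{N}$, since this is exactly what it means for $(W_\epsilon(X_n, K_{\Theta_n}))_{n\in\mathbb{N}}$ to be bounded in probability, and $\epsilon$ is arbitrary.

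First I would apply the hypothesis \eqref{eq:containment-log-density} with $\delta = \eta$ to obtain a constant $M = M(\eta)\in[1,\infty)$ such that, for every $n$, the event $E_n = \set{\log\pi(x) - \log\tilde{\pi}_{\Theta_n}(x) < \log M~~\forall~~x\in\mathcal{X}}$ satisfies $\mathrm{Pr}\left[E_n\right]\geq 1-\eta$. Exponentiating, on $E_n$ we have $\pi(x)/\tilde{\pi}_{\Theta_n}(x) < M$ for all $x\in\mathcal{X}$ (in particular $\mathrm{Supp}(\pi)\subseteq\mathrm{Supp}(\tilde{\pi}_{\Theta_n})$, so the ratio is well-defined and finite on $\mathrm{Supp}(\pi)$), so the hypothesis of \cref{prop:independent-metropolis-hastings-uniformly-ergodic} holds for the kernel $K_{\Theta_n}$ with the single constant $M$. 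That proposition then yields, on $E_n$, the state-uniform estimate $\Vert K_{\Theta_n}^m(x,\cdot) - \Pi\Vert_{\mathrm{TV}} \leq 2\paren{1 - \tfrac1M}^m$ for every $x\in\mathcal{X}$ and every $m\geq 1$.

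Next I would fix $N = N(\epsilon,\eta)$ large enough that $2\paren{1 - \tfrac1M}^N < \epsilon$; this is possible because $M<\infty$ forces $1-\tfrac1M\in[0,1)$ (take $N=\lceil \log(\epsilon/2)/\log(1-\tfrac1M)\rceil$, or $N=1$ when $M=1$). Then, on $E_n$, specializing the previous estimate to $m=N$ and to the random current state $x=X_n$ gives $\Vert K_{\Theta_n}^N(X_n,\cdot) - \Pi\Vert_{\mathrm{TV}} < \epsilon$, whence $W_\epsilon(X_n, K_{\Theta_n}) \leq N$ by definition. Consequently $\set{W_\epsilon(X_n, K_{\Theta_n}) > N}\subseteq E_n^c$, so $\mathrm{Pr}\left[W_\epsilon(X_n, K_{\Theta_n}) > N\right] \leq \mathrm{Pr}\left[E_n^c\right]\leq\eta$, uniformly in $n$. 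Since $\epsilon$ and $\eta$ were arbitrary, containment follows. (If one insists on the strict inequality in the definition of boundedness in probability, apply the hypothesis with $\delta=\eta/2$ instead.)

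The delicate point is the passage from ``the log-density gap is small with high probability'' to ``the $\epsilon$-mixing time is uniformly bounded with high probability'': the event $E_n$ constrains only $\Theta_n$, whereas $W_\epsilon(X_n, K_{\Theta_n})$ also depends on the random state $X_n$. This is harmless precisely because the geometric ergodicity estimate of \cref{prop:independent-metropolis-hastings-uniformly-ergodic} holds uniformly over all initial states, so once $\Theta_n$ lands in the good set $E_n$ the mixing-time bound is automatic regardless of where $X_n$ sits. The only remaining bookkeeping is that all probabilities are those of the adaptive chain initialized at $X_0=x_0$, $\Theta_0=\theta_0$ with $X_{n+1}\sim K_{\Theta_n}(X_n,\cdot)$ as in \cref{def:containment}, and that $M$ depends on the tolerance $\eta$ but not on $n$, so a single $N$ serves the entire sequence.
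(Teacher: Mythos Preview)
Your proof is correct and follows essentially the same approach as the paper's: exponentiate the log-density bound to obtain $\pi(x)/\tilde{\pi}_{\Theta_n}(x) < M$ on a high-probability event, invoke \cref{prop:independent-metropolis-hastings-uniformly-ergodic} to get a state-uniform geometric bound, and choose $N$ so that $2(1-1/M)^N < \epsilon$. Your write-up is in fact more careful than the paper's terse chain of implications, in particular by making explicit the role of state-uniformity in passing from the event $E_n$ (which constrains only $\Theta_n$) to a bound on $W_\epsilon(X_n,K_{\Theta_n})$.
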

A proof is given in \cref{app:proofs-concerning-containment}. An even stronger condition than \cref{eq:containment-log-density} is that $\mathrm{Pr}\left[\abs{\log \pi(x) - \log\tilde{\pi}_{\Theta_n}(x)} < \log M~~\forall~~x\in\mathcal{X}\right] \geq 1-\delta$. Thus, we see that containment can be obtained for the transition kernels of the independent Metropolis-Hastings sampler if, for every $n$, $\log\tilde{\pi}_{\Theta_n}$ is within $\log M$ of $\log\pi$ with probability $1-\delta$. Note that $M$ does not need to even be close to unity (equivalently, $\log M$ need not be close to zero) in order for containment to hold; it is sufficient merely that, with high probability, the sequence $(\Theta_n)_{n\in\mathbb{N}}$ does not produce arbitrarily poor approximations of $\log\pi$.

The loss functions used in estimating normalizing flows are chosen to encourage closeness of the approximation and the target density. For instance, if one chooses to minimize $\mathbb{KL}(\tilde{\pi}_\theta\Vert\pi)$ as a function of $\theta\in\mathcal{Y}$ then $\mathbb{KL}(\tilde{\pi}_\theta\Vert\pi) =0 \iff \tilde{\pi}_\theta = \pi$. The minimization of a loss function that encourages the closeness of the approximation and the target density is certainly no guarantee that \cref{eq:containment-log-density} holds; however, it gives an indication that \cref{eq:containment-log-density} {\it might} be true. We turn our attention in the next section to the empirical evaluation of adaptive samplers using normalizing flows. Some obstacles that could prevent the conditions of \cref{prop:containment-log-density} from holding are stated in \cref{app:violations-containment}.

\begin{example}\label{ex:normalizing-flow-pseudo-lkl}
  Recently, \citep{gabrie2021adaptive,gabrie2021efficient} proposed to sample from Boltzmann distributions and posteriors over the parameters of physical systems by alternating between an independence Metropolis-Hastings algorithm whose proposal is represented as a RealNVP normalizing flow and local updates computed by the Metropolis-adjusted Langevin algorithm (MALA). In \citep{gabrie2021adaptive} the authors ``demonstrate the importance of initializing the training with some {\it a priori} knowledge of the relevant modes.'' This incorporation of prior knowledge is done to avert mode-collapse. We can connect knowledge of modes to the property of containment: by ensuring that the proposal density of the independent Metropolis-Hastings sampler places sufficient mass on all modes with high probability, one satisfies containment by \cref{prop:containment-log-density}. The specific training procedure used by these samplers is to adapt parameters of the normalizing flow as $\Theta_{n+1} = \Theta_n + \epsilon_{n} \frac{1}{n} \sum_{i=0}^n \log \tilde{\pi}_{\Theta_n}(X_i)$ where $(X_i)_{i=0}^n$ are the states of the chain to the $n^\mathrm{th}$ step and $(\epsilon_i)_{i=0}^\infty$ are a sequence of adaptation step-sizes. Because the states of the chain can only be regarded as approximate samples from the target distribution, we understand this update as seeking to update a ``pseudo-likelihood.'' Diminishing adaptation of this procedure can be enforced using either \cref{lem:random-diminishing-adaptation} or via convergence and continuity using \cref{lem:convergence-diminishing-adaptation}. When diminishing adaptation and containment are satisfied, this adaptative algorithm produces an ergodic chain by \cref{thm:ergodic-diminishing-containment}.
\end{example}

\section{EXPERIMENTS}\label{sec:experiments}


Here we evaluate the adaptive independent Metropolis-Hastings algorithm following the ``pseudo-likelihood objective'', with {\it non-independent} adaptations, summarized in \cref{alg:pseudo-likelihood-adaptive} in \cref{app:pseudo-likelihood-algorithm}.
\iflong
For the normalizing flow, we use RealNVP coupling layers separated by permutation layers.
\fi
As a baseline adaptive MCMC technique, we consider the random walk Metropolis method of \citet{bj/1080222083}; we also compare against Langevin dynamics. To assess the ergodicity of samplers, we compare MCMC samples against analytic samples drawn from the target density, except in the case of the physical system wherein we use domain knowledge to compare against Langevin dynamics. Specifically, we choose 10,000 random unit vectors and project the samples of the adaptive chain onto the vector space spanned by the chosen unit vector; we then compare these one-dimensional quantities to the projection of the baseline samples from the target distribution and compute the two-sample Kolmogorov-Smirnov (KS) test statistic \citep{kolmogorov-smirnov,two-sample-random-projection}. In \cref{app:violations-of-stationarity}, we show how adaptation can actually degrade sample quality at finite time.

\subsection{Affine Flows in a Brownian Bridge}

\begin{figure*}[t!]
    \centering
    \begin{subfigure}[t]{0.3\textwidth}
        \centering
        \includegraphics[width=\textwidth]{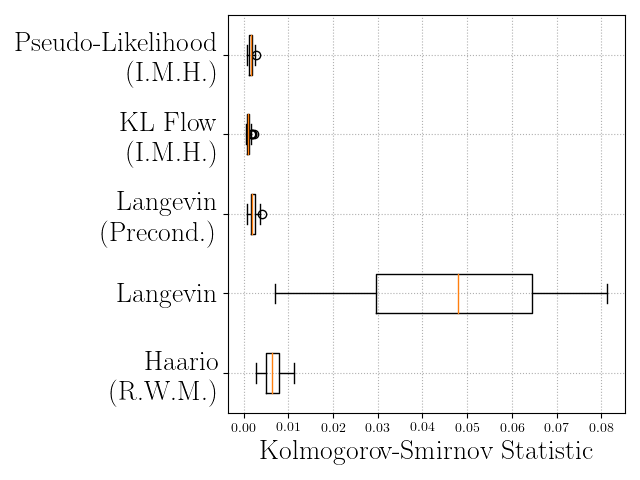}
        \caption{Kolmogorov-Smirnov Statistics}
        \label{subfig:brownian-bridge-ks}
    \end{subfigure}
    \hfill
    \begin{subfigure}[t]{0.3\textwidth}
        \centering
        \includegraphics[width=\textwidth]{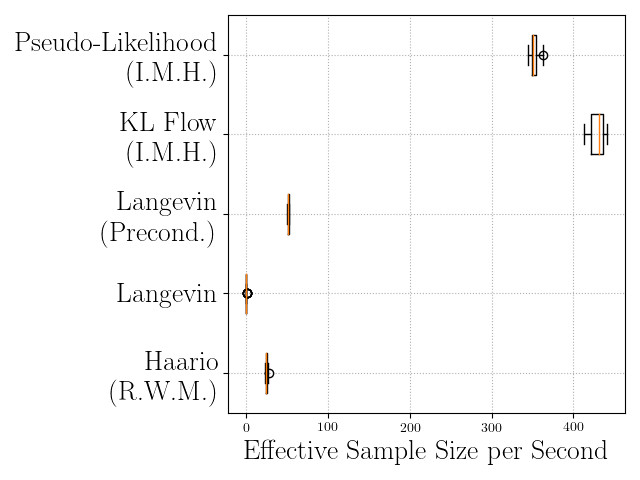}
        \caption{ESS per Second}
        \label{subfig:brownian-bridge-ess}
    \end{subfigure}
    \hfill
    \begin{subfigure}[t]{0.3\textwidth}
        \centering
        \includegraphics[width=\textwidth]{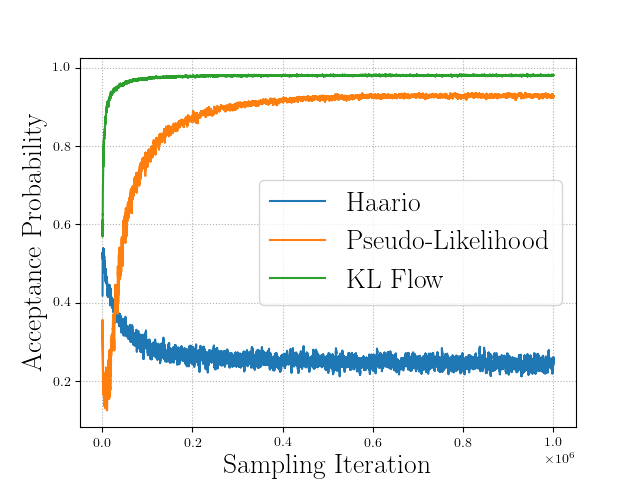}
        \caption{Acceptance Probability}
        \label{subfig:brownian-bridge-ap}
    \end{subfigure}
    \caption{Result of the Brownian bridge experiment. In assessing ergodicity according to the distribution of Kolmogorov-Smirnov statistics along random one-dimensional sub-spaces, the methods based on the independent Metropolis-Hastings algorithm and preconditioned Langevin dynamcis perform best. Langevin dynamics struggles in this posterior due to the multi-scale phenomena present in this distribution. In terms of the effective sample size per second of computation, the near-independent proposals and high acceptance rate of the independent Metropolis-Hastings sampler cause these algorithms to dominate. We also show the acceptance probability of the adaptive methods; we observe that the independent Metropolis-Hastings procedures enjoy adaptations that cause the acceptance probability to consistently improve over the course of learning.}
    \label{fig:brownian-bridge}
\end{figure*}

We consider sampling from a Gaussian process with the following mean and covariance functions: $\mu(t) = \sin(\pi t)$ and $\Sigma(t,s) = \min(t, s) - st$. For $0<t,s<1$, covariance function identifies this distribution as a Brownian bridge whose mean is a sinusoid. We seek to sample this Gaussian process at $50$ equally spaced times in the unit interval, yielding a fifty-dimensional target distribution. We estimate an affine normalizing flow from a Gaussian base distribution in order to sample from the target. Since the base distribution of the flow is Gaussian, and since affine transformations of Gaussian random variables remain Gaussian, in addition to the pseudo-likelihood training objective, we also consider gradient descent on the exact KL divergence between the target and the current proposal distribution. Minimization of the exact KL divergence is equivalent to maximum likelihood training, and therefore allows us to compare the efficiency lost by training with the pseudo-likelihood objective compared to the true likelihood. To enforce diminishing adaptation, we set a learning rate schedule for the gradient steps on the shift and scale of the affine transformation that converges to zero. In addition to Langevin dynamics, we also consider a preconditioned variant of the Metropolis-adjusted Langevin algorithm that uses the Hessian of the log-density to adapt proposals to the geometry of the target distribution \citep{rmhmc}. Results shown in \cref{fig:brownian-bridge} demonstrate the advantages of the adaptive independent Metropolis-Hastings samplers.

\subsection{Two-Dimensional Examples}

\begin{figure*}[t!]
    \centering
    \begin{subfigure}[t]{0.3\textwidth}
        \centering
        \includegraphics[width=\textwidth]{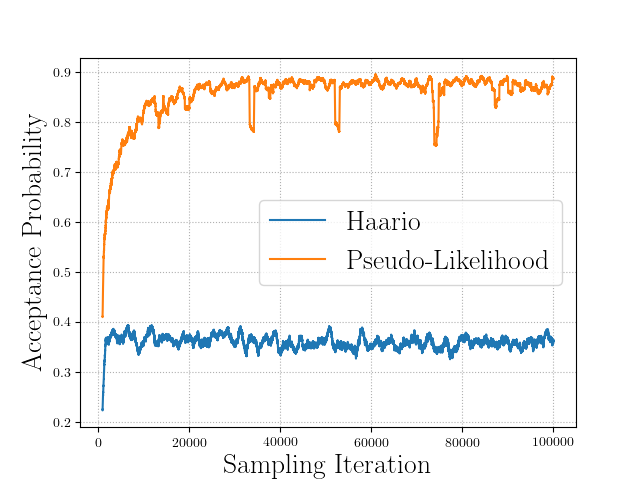}
        \caption{(Multimodal) Acc. Prob.}
        \label{subfig:simple-multimodal-ap}
    \end{subfigure}
    \hfill
    \begin{subfigure}[t]{0.3\textwidth}
        \centering
        \includegraphics[width=\textwidth]{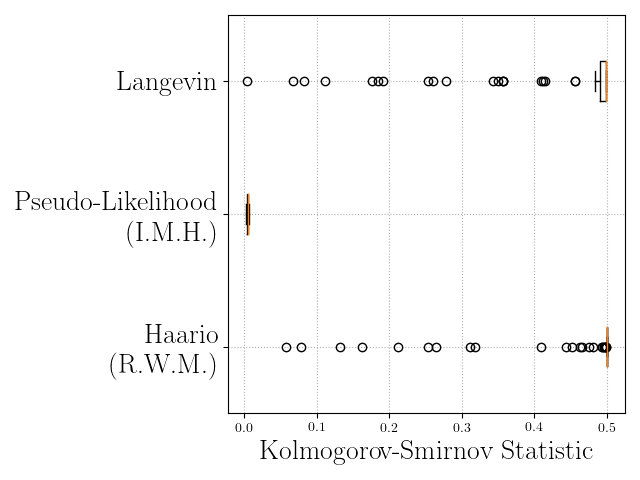}
        \caption{(Multimodal) KS}
        \label{subfig:simple-multimodal-ks}
    \end{subfigure}
    \hfill
    \begin{subfigure}[t]{0.3\textwidth}
        \centering
        \includegraphics[width=\textwidth]{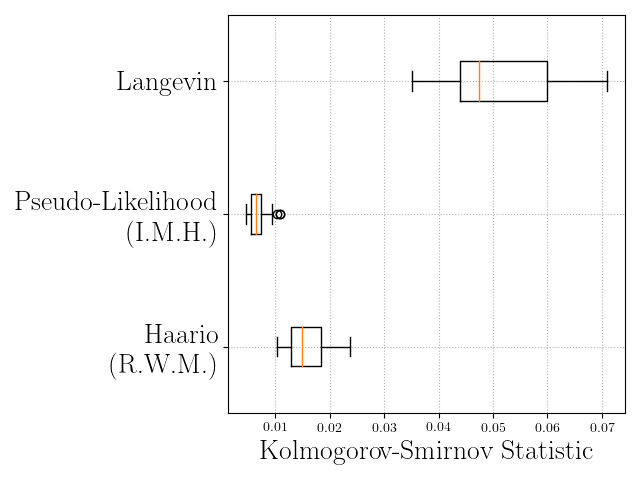}
        \caption{(Neal Funnel) KS}
        \label{subfig:simple-neal-funnel-ks}
    \end{subfigure}
    \caption{Examination of the performance of MCMC methods on sampling from the multimodal and Neal funnel distributions. Both adaptive methods enjoy increasing acceptance rates in the multimodal distribution as a function of sampling iteration, but only the adaptive independent Metropolis-Hastings algorithm exhibits ergodicity for this distribution. Indeed, for the adaptive random walk and Langevin sampling methods, which are based on local updates, the multimodal distribution poses distinct challenges. In fact, both methods get stuck in one of the modes. By contrast, the adaptive independent Metropolis-Hastings samplers exhibit the best ergodicity of all methods considered. In Neal's funnel distribution, the adaptive independent Metropolis-Hastings algorithm possesses the best ergodicity.}
    \label{fig:simple-two-dimensional}
\end{figure*}

We use a RealNVP architecture to model a multimodal distribution and Neal's funnel distribution, both in $\R^2$. The multimodal density is a mixture of two Gaussians with a shared covariance structure given by $\Sigma = \mathrm{diag}(1/100, 1/100)$. The two means of the component Gaussians are $(-2, 2)$ and $(2, -2)$. Neal's funnel distribution is defined by generating $v\sim\mathrm{Normal}(0, 9)$ and $x\sim \mathrm{Normal}(0, e^{-v})$, which defines a distribution in $\R^2$. To enforce diminishing adaptation, we set a learning rate schedule for the gradient steps on parameters of the RealNVP bijections that converges to zero. Results are shown in \cref{fig:simple-two-dimensional}. The expressivity of the RealNVP normalizing flow is key to building efficient proposals accommodating for different modes or the challenging structure of the Neal's funnel.

\subsection{Analysis of a Physical Field System}
\label{sec:phi4}

\begin{figure*}[t!]
    \centering
    \begin{subfigure}[t]{0.3\textwidth}
        \centering
        \includegraphics[width=\textwidth]{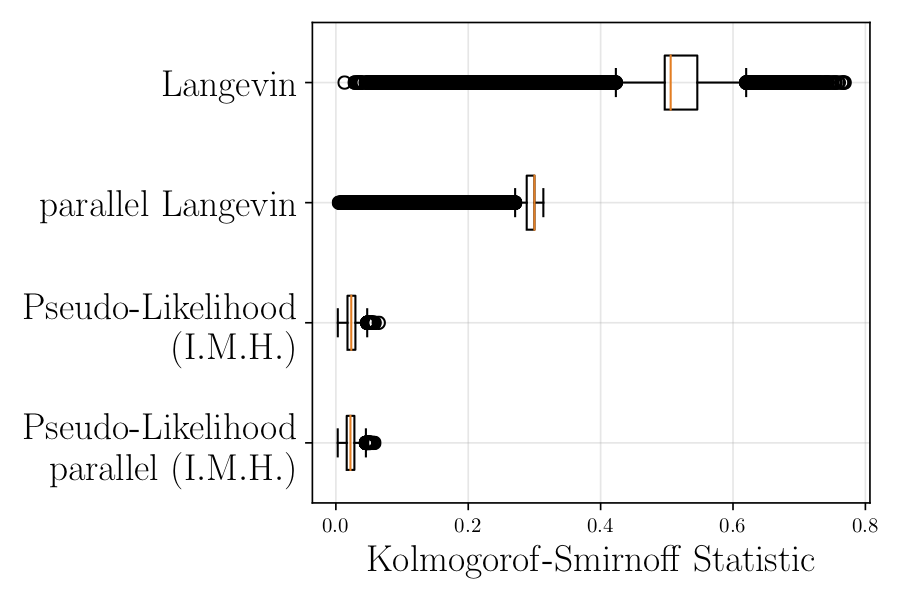}
        \caption{Kolmogorov-Smirnov Statistics}
        \label{subfigfield-model-ks}
    \end{subfigure}
    \hfill
    \begin{subfigure}[t]{0.3\textwidth}
        \centering
        \includegraphics[width=\textwidth]{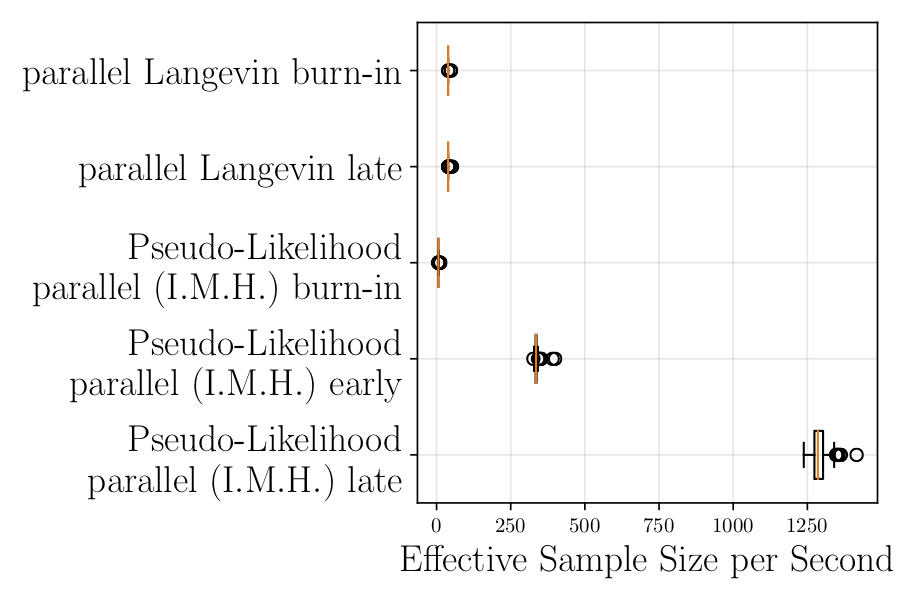}
        \caption{ESS per Second}
        \label{subfig:field-model-ess}
    \end{subfigure}
    \hfill
    \begin{subfigure}[t]{0.3\textwidth}
        \centering
        \includegraphics[width=\textwidth]{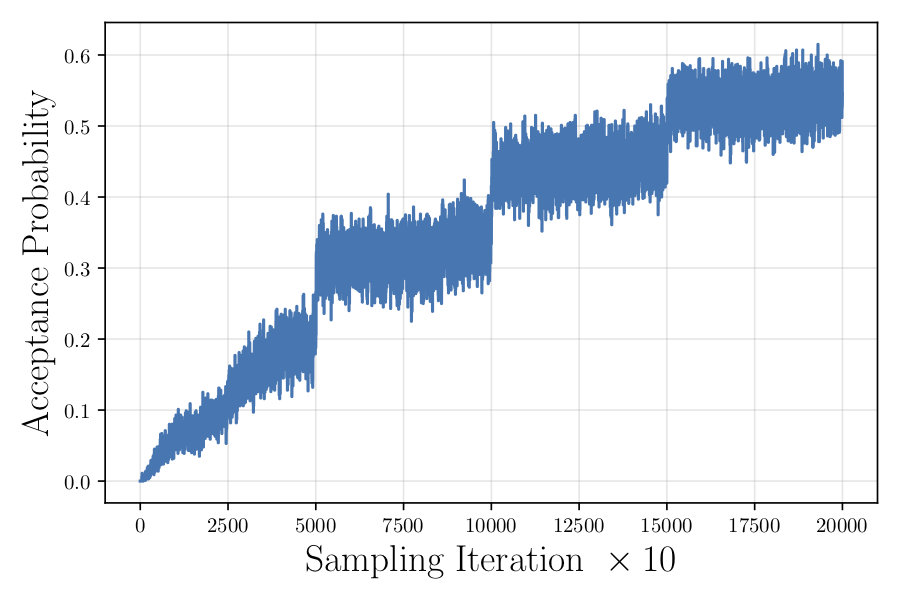}
        \caption{Acceptance Probability}
        \label{subfig:field-model-ap}
    \end{subfigure}
    \caption{Results of the $\phi^4$ field experiment.
    As Langevin dynamics is unable to mix between the two modes, the better ergodicity of the independent Metropolis Hastings algorithm is reflected in Kolmogorov-Smirnov statistics as expected. The single chain Langevin has  poorer ergodicity than its parallel chain equivalent, while for the I.M.H. a single chain approaches the ergodicity of the parallel setting. The Effective Sample Sizes are reported for chains of $1000$ steps extracted at burn-in, after $4\times 10 ^4$ iterations (early) and (late) when the NF proposal acceptance probability has reached ~50\%. Note that periodic jumps in acceptance correspond to iterations where learning rate was decreased. 
    }
    \label{fig:field-model}
\end{figure*}

We finally revisit a high-dimensional bi-modal example: the 1d-$\phi^4$ system studied in \cite{gabrie2021adaptive}.
The statistics of a field $\phi : [0,1] \to \mathbb{R}$ are given by the Boltzmann weight $e^{-\beta U}$ with the energy functional
\begin{align}
    U(\phi) = \int_0^1   \left[ \frac{a}{2} (\partial_s \phi )^2 + \frac{1}{4a}\left(1-\phi^2(s)\right)^2  \right] \mathrm{d}s,
\end{align}
assuming boundary conditions $\phi(0) = \phi(1) = 0$. We discretize the field at $100$ equally spaced locations between 0 and 1, for $a = 0.1$ and $\beta=20$. States example are plotted in \cref{fig:field-model-samples} of \cref{app:exp-field}. The algorithm proposed in \cite{gabrie2021adaptive} is adapted with a learning rate schedule enforcing diminishing adaptations and a mixture transition kernel stochastically choosing from local Langevin updates or proposal sampling from the normalizing flow (\cref{app:mixture-kernels} shows that we can expect this mixture kernel to exhibit containment and diminishing adaption). 
Because the distribution is high-dimensional and multimodal, it is necessary\footnote{This necessity can be lifted by employing an auxiliary fixed set of ``training samples'' featuring the two modes, in arbitrary proportions. These samples would drive the learning towards relevant regions, so that a random walker can then inform the adaption about the relative statistical weights of different modes} to run multiple parallel walkers initialized around the different modes. 
In this specific case, the energy and the distributions are even functions of $\phi$.
In the experiments, we initialize $100$ walkers with uneven proportions in each mode (20-80) and test for the ergodicity of the parallel chains. Results are shown in \cref{fig:field-model}. Unlike the adaptive independent Metropolis-Hastings samplers, MALA single walkers are stuck in the mode they were initialized in and cannot recover the correct equal weights of the positive and negative mode.
Additional details can be found in \cref{app:exp-field}.

\section{CONCLUSION}

We have examined the question of when an adaptive independent Metropolis-Hastings sampler can be equipped with an ergodicity theory. We specifically consider the case wherein the proposal distribution is parameterized as a normalizing flow. We have considered the cases of deterministic adaptations, independent adaptations, and non-independent adaptations. For the non-independent adaptations case, we examine mechanisms by which to enforce the diminishing adaptation and containment conditions that together imply ergodicity. On compact spaces, a stronger condition, simultaneous uniform ergodicity, can be established by mixing the normalizing flow with a fixed distribution whose support matches the support of the target distribution.

\subsection*{Acknowledgments}
M. G. would like to thank G. Rotskoff and E. Vanden-Eijnden for useful discussions about the physical field system experiments.
This material is based upon work supported by the National Science Foundation Graduate Research Fellowship under Grant No. 1752134. Any opinion, findings, and conclusions or recommendations expressed in this material are those of the authors(s) and do not necessarily reflect the views of the National Science Foundation.
MAB was supported by an NSERC Discovery Grant and as part of the Vision: Science to Applications program, thanks in part to funding from the Canada First Research Excellence Fund.

\bibliography{thebib}

\clearpage
\onecolumn
\appendix

\clearpage
\section{Review of Total Variation Distance}\label{app:total-variation-review}

Similarity of probability measures can be assessed with respect to several criteria. A ubiquitous notion of distance between probability measures is given by the total variation norm of their difference.
\begin{definition}\label{def:total-variation-norm}
Let $\nu_1$ and $\nu_2$ be probability measures on $(\mathcal{X},\mathfrak{B}(\mathcal{X}))$. Then the total variation distance between $\nu_1$ and $\nu_2$ is defined by,
\begin{align}
    \Vert \nu_1(\cdot)-\nu_2(\cdot)\Vert_{\mathrm{TV}} = 2 \sup_{A\in\mathfrak{B}(\mathcal{X})} \abs{\nu_1(A) - \nu_2(A)}.
\end{align}
\end{definition}
The total variation distance is easily verified to be a proper distance in that it satisfies non-negativity, discernability, symmetry, and the triangle inequality. The total variation distance can therefore be understood as the largest possible disagreement between the probabilities assigned to any measurable set by $\nu_1$ and $\nu_2$. The total variation norm has the following equivalent representations which are occasionally useful. 
\begin{proposition}
  Within the context of \cref{def:total-variation-norm}, the total variation distance between $\nu_1$ and $\nu_2$ is equivalently expressed as,
  \begin{align}
      \Vert\mu\Vert_\mathrm{TV} &= \sup_{f \in\mathcal{M}} \abs{\int_{\mathcal{X}} f(x) ~\nu_1(\mathrm{d}x) - \int_{\mathcal{X}} f(x) ~\nu_2(\mathrm{d}x)}
  \end{align}
  where $\mathcal{M} \defeq \set{f : \mathcal{X}\to\R ~\text{s.t.}~ \abs{f(x)} \leq 1~\forall~x\in\mathcal{X}}$.
\end{proposition}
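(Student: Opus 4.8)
The plan is to route both expressions through the Jordan decomposition of the finite signed measure $\lambda := \nu_1 - \nu_2$. Since $\nu_1$ and $\nu_2$ are probability measures, $\lambda$ is a finite signed measure on $(\mathcal{X},\mathfrak{B}(\mathcal{X}))$ with $\lambda(\mathcal{X}) = 0$. By the Hahn--Jordan theorem there is a measurable partition $\mathcal{X} = P \cup N$ with $P\cap N = \emptyset$ such that $\lambda(A\cap P)\ge 0$ and $\lambda(A\cap N)\le 0$ for every $A\in\mathfrak{B}(\mathcal{X})$. Writing $\lambda^+(A) = \lambda(A\cap P)$, $\lambda^-(A) = -\lambda(A\cap N)$, and $|\lambda| = \lambda^+ + \lambda^-$, we have $\lambda = \lambda^+ - \lambda^-$, and the constraint $\lambda(\mathcal{X}) = 0$ forces $\lambda^+(\mathcal{X}) = \lambda^-(\mathcal{X}) = \tfrac12 |\lambda|(\mathcal{X})$.

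First I would handle the set-supremum form appearing in \cref{def:total-variation-norm}. For any $A\in\mathfrak{B}(\mathcal{X})$, $\lambda(A) = \lambda^+(A) - \lambda^-(A) \le \lambda^+(A) \le \lambda^+(\mathcal{X})$, and symmetrically $-\lambda(A)\le\lambda^-(\mathcal{X})$, so $|\lambda(A)|\le \tfrac12|\lambda|(\mathcal{X})$; the choice $A = P$ attains equality. Hence $\sup_{A}|\nu_1(A)-\nu_2(A)| = \tfrac12|\lambda|(\mathcal{X})$, and therefore $\Vert \nu_1 - \nu_2\Vert_{\mathrm{TV}} = |\lambda|(\mathcal{X})$.

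Next the functional form. Because $|f|\le 1$ and both $\nu_i$ are probability measures, every $f\in\mathcal{M}$ is integrable against $\nu_1$, $\nu_2$, and $|\lambda|$, and $\int f\,\mathrm{d}\nu_1 - \int f\,\mathrm{d}\nu_2 = \int f\,\mathrm{d}\lambda$. The inequality $|\int f\,\mathrm{d}\lambda| \le \int |f|\,\mathrm{d}|\lambda| \le |\lambda|(\mathcal{X})$ gives the ``$\le$'' direction. For the reverse direction, take the measurable function $f^\star = \mathbf{1}_P - \mathbf{1}_N\in\mathcal{M}$; then $\int f^\star\,\mathrm{d}\lambda = \lambda(P) - \lambda(N) = \lambda^+(\mathcal{X}) + \lambda^-(\mathcal{X}) = |\lambda|(\mathcal{X})$. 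Thus $\sup_{f\in\mathcal{M}}|\int f\,\mathrm{d}\nu_1 - \int f\,\mathrm{d}\nu_2| = |\lambda|(\mathcal{X})$, which equals $\Vert\nu_1-\nu_2\Vert_{\mathrm{TV}}$ by the previous paragraph, completing the proof.

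There is no real obstacle here: the only points requiring care are correctly invoking the Hahn--Jordan decomposition for a finite signed measure (so that $P$, $N$, and the function $f^\star$ are genuinely measurable) and keeping track of the factor of $2$ baked into \cref{def:total-variation-norm}; the rest is bookkeeping. If one prefers to avoid signed-measure machinery, the identical argument goes through after passing to densities $p_i = \mathrm{d}\nu_i/\mathrm{d}\rho$ with respect to $\rho = \nu_1 + \nu_2$, with $P$ replaced by $\{p_1 \ge p_2\}$ and $f^\star$ by $\mathrm{sgn}(p_1 - p_2)$.
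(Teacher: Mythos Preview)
Your argument is correct and is the standard route via the Hahn--Jordan decomposition; the factor-of-$2$ bookkeeping and the extremizer $f^\star = \mathbf{1}_P - \mathbf{1}_N$ are handled properly. Note that the paper does not actually supply a proof of this proposition: it simply refers the reader to \citet{roberts-general-state-space} and \citet{pollard_2001} for the argument, and what you have written is essentially the proof one finds in those references.
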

For a proof of this result, and other equivalent characterizations of the total variation distance, we refer the interested reader to \cite{roberts-general-state-space,pollard_2001}.

\clearpage
\section{Proofs Concerning Deterministic Adaptations}\label{app:proofs-concerning-deterministic-adaptations}

\begin{proposition}[\citet{roberts_rosenthal_2007}]\label{prop:deterministic-sequence-stationarity}
  Suppose that $(\theta_0,\theta_1,\ldots)$ is a {\it deterministic} $\mathcal{Y}$-valued sequence. Let $(K_{\theta_n})_{n\in\mathbb{N}}$ be an associated sequence of Markov transition kernels. If $\Pi$ is stationary for each $K_{\theta_n}$, then $\Pi$ is also the stationary distribution of the Markov chain whose transitions satisfy $X_{n+1}\sim K_{\theta_n}(X_n, \cdot)$.
\end{proposition}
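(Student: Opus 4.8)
The plan is to show that if the chain is initialized from $\Pi$, then the marginal law of $X_n$ remains $\Pi$ for every $n$; this is the natural meaning of ``$\Pi$ is the stationary distribution'' for a time-inhomogeneous chain, and since $(\theta_0,\theta_1,\ldots)$ is deterministic there is no outer expectation over the parameters, so the $n$-step law of the chain is exactly $K^n_{(\theta_i)_{i=0}^{n-1}}(x_0,\cdot)$ as in \cref{eq:deterministic-transition-law}.

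First I would recall that stationarity of $\Pi$ for a single kernel $K_\theta$ means $\int_{\mathcal{X}} \Pi(\mathrm{d}x)\,K_\theta(x,A) = \Pi(A)$ for all $A\in\mathfrak{B}(\mathcal{X})$; abbreviating the left-hand side as $(\Pi K_\theta)(A)$, this is the statement $\Pi K_\theta = \Pi$. I would then prove by induction on $n$ that $\Pi K_{\theta_0} K_{\theta_1}\cdots K_{\theta_{n-1}} = \Pi$. The base case $n=1$ is precisely $\Pi K_{\theta_0}=\Pi$. For the inductive step, assuming $\Pi K_{\theta_0}\cdots K_{\theta_{n-2}}=\Pi$, I would write $\Pi K_{\theta_0}\cdots K_{\theta_{n-2}}K_{\theta_{n-1}} = \big(\Pi K_{\theta_0}\cdots K_{\theta_{n-2}}\big) K_{\theta_{n-1}} = \Pi K_{\theta_{n-1}} = \Pi$, using associativity of the measure/kernel composition in the first equality, the inductive hypothesis in the second, and stationarity of $\Pi$ for $K_{\theta_{n-1}}$ in the third. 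The associativity used here is exactly the freedom to reorder the iterated integrals in \cref{eq:deterministic-transition-law}, which is justified by Tonelli's theorem: every integrand is nonnegative and each $K_{\theta_i}(\cdot,A)$ is $\mathfrak{B}(\mathcal{X})$-measurable by the definition of a transition kernel.

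Finally I would translate this back to the chain: if $X_0\sim\Pi$, the law of $X_n$ is $\int_{\mathcal{X}}\Pi(\mathrm{d}x_0)\,K^n_{(\theta_i)_{i=0}^{n-1}}(x_0,\cdot) = \Pi K_{\theta_0}\cdots K_{\theta_{n-1}} = \Pi$, so the marginal law is invariant along the entire chain, which is the assertion. I do not expect a substantive obstacle: the only points needing care are fixing the notion of ``stationary distribution'' appropriate to a deterministically time-varying chain and the (routine) Tonelli argument that makes the kernel compositions in \cref{eq:deterministic-transition-law} associative.
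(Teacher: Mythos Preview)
Your proposal is correct and follows essentially the same approach as the paper: both arguments show the one-step invariance $\Pi K_{\theta_n}=\Pi$ and propagate it by induction, with the paper writing the inductive step as the computation $\mathrm{Pr}[X_{n+1}\in A]=\int_{\mathcal{X}}K_{\theta_n}(x,A)\,\Pi(\mathrm{d}x)=\Pi(A)$ (inserting a Dirac measure $\delta_{\theta_n}$ to emphasize determinism) and you writing it in operator notation. Your explicit mention of Tonelli for the associativity of kernel composition is a small addition the paper leaves implicit.
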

\begin{proof}
Let $A\in\mathfrak{B}(\mathcal{X})$ and suppose $X_n\sim \Pi$. We will show $X_{n+1}\sim \Pi$.
\begin{align}
    \mathrm{Pr}\left[X_{n+1}\in A\right] &= \int_{\mathcal{X}} \mathrm{Pr}\left[X_{n+1}\in A\vert X_n=x\right] \cdot\mathrm{Pr}\left[X_n\in \mathrm{d}x\right] \\
    &= \int_{\mathcal{X}} \int_{\mathcal{Y}} \mathrm{Pr}\left[X_{n+1}\in A\vert X_n=x,\Theta_n=\theta\right] \cdot \delta_{\theta_n}(\mathrm{d}\theta) \cdot\mathrm{Pr}\left[X_n\in \mathrm{d}x\right] \\
    &= \int_{\mathcal{X}} K_{\theta_n}(x, A) \cdot\mathrm{Pr}\left[X_n\in \mathrm{d}x\right] \\
    &= \underset{x\sim \Pi}{\mathbb{E}}\left[K_{\theta_n}(x, A)\right] \\
    &= \Pi(A).
\end{align}
\end{proof}

\begin{lemma}\label{lem:transition-density-lower-bound}
For each $n\in\mathbb{N}$,
\begin{align}
    \label{eq:transition-density-lower-bound}K_{\theta_n}(x, \mathrm{d}x') \geq \frac{\pi_{\Pi}(x')}{M_{n}} \mu(\mathrm{d}x').
\end{align}
\end{lemma}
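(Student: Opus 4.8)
The plan is to read the bound directly off the explicit form of the independent Metropolis-Hastings transition kernel in \cref{eq:independent-metropolis-transition-kernel}. That kernel is a sum of two terms: the ``accepted transition'' term $\min\set{1, \frac{\pi(x')\tilde{\pi}_{\theta_n}(x)}{\pi(x)\tilde{\pi}_{\theta_n}(x')}}\tilde{\pi}_{\theta_n}(x')\,\mu(\mathrm{d}x')$ and the ``rejection'' term $\bigl(1-\int_{\mathcal{X}}\min\set{1,\cdots}\tilde{\pi}_{\theta_n}(w)\,\mu(\mathrm{d}w)\bigr)\delta_x(\mathrm{d}x')$. The second term is a nonnegative scalar times the (nonnegative) Dirac measure $\delta_x$, so discarding it yields
\[
  K_{\theta_n}(x,\mathrm{d}x') \;\geq\; \min\set{1,\ \frac{\pi(x')\tilde{\pi}_{\theta_n}(x)}{\pi(x)\tilde{\pi}_{\theta_n}(x')}}\,\tilde{\pi}_{\theta_n}(x')\,\mu(\mathrm{d}x').
\]

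It then remains to show that the $\mu$-density on the right dominates $\pi(x')/M_n$. First I would rewrite it as $\min\set{\tilde{\pi}_{\theta_n}(x'),\ \tfrac{\pi(x')}{\pi(x)}\,\tilde{\pi}_{\theta_n}(x)}$, pulling the factor $\tilde{\pi}_{\theta_n}(x')$ inside the minimum. Now I invoke the defining property of $M_n$, namely $\pi(y)\le M_n\,\tilde{\pi}_{\theta_n}(y)$ for all $y\in\mathrm{Supp}(\pi)$: applying it at $y=x'$ gives $\tilde{\pi}_{\theta_n}(x')\ge \pi(x')/M_n$, and applying it at $y=x$ and multiplying through by $\pi(x')/\pi(x)\ge 0$ gives $\tfrac{\pi(x')}{\pi(x)}\,\tilde{\pi}_{\theta_n}(x)\ge \pi(x')/M_n$. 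Since both arguments of the minimum are at least $\pi(x')/M_n$, so is the minimum, and the asserted inequality between measures follows.

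The only points needing care are the degenerate cases. If $x'\notin\mathrm{Supp}(\pi)$ then $\pi(x')=0$ and the claim is vacuous. The steps that divide by $\pi(x)$ presuppose $x\in\mathrm{Supp}(\pi)$, but this is precisely the relevant regime: in \cref{thm:adaptive-total-variation-rate} the estimate is applied with $X_n\sim\Pi$, so the current state lies in $\mathrm{Supp}(\pi)$ almost surely, and the hypothesis on $M_n$ is only assumed on $\mathrm{Supp}(\pi)$ anyway. I do not expect any genuine obstacle here — the lemma is an unpacking of the acceptance rule combined with the uniform bound $\pi/\tilde{\pi}_{\theta_n}\le M_n$ — and it is exactly the minorization estimate one needs to run the coupling argument that produces the geometric rate $\prod_{i=0}^{n-1}(1-1/M_i)$.
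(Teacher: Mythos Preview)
Your proposal is correct and follows essentially the same route as the paper: drop the nonnegative rejection term, pull $\tilde{\pi}_{\theta_n}(x')$ inside the minimum, and use the bound $\pi/\tilde{\pi}_{\theta_n}\le M_n$ at both $x$ and $x'$ to show each argument of the minimum dominates $\pi(x')/M_n$. Your explicit handling of the degenerate cases ($x'\notin\mathrm{Supp}(\pi)$, division by $\pi(x)$) is slightly more careful than the paper's presentation, but the argument is otherwise identical.
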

\begin{proof}
From \cref{eq:independent-metropolis-transition-kernel},
\begin{align}
    K_{\theta_n}(x, \mathrm{d}x') &\geq \min\set{1, \frac{\pi(x') \tilde{\pi}_{\theta_n}(x)}{\pi(x) \tilde{\pi}_{\theta_n}(x')}} ~\tilde{\pi}_{\theta_n}(x') ~\mu(\mathrm{d}x') \\
    &= \min\set{\tilde{\pi}_{\theta_n}(x'), \frac{\pi(x') \tilde{\pi}_{\theta_n}(x)}{\pi(x)}} ~\mu(\mathrm{d}x') \\
    &\geq \min\set{\tilde{\pi}_{\theta_n}(x'), \frac{\pi(x')}{M_{n}}} ~\mu(\mathrm{d}x') \\
    &= \frac{\pi(x')}{M_{n}}~\mu(\mathrm{d}x')
\end{align}
\end{proof}
\begin{corollary}\label{cor:small-state-space}
For any set $A\in\mathfrak{B}(\mathcal{X})$
\begin{align}
    K_{\theta_n}(x, A)\geq \frac{1}{M_{n}} \Pi(A).
\end{align}
\end{corollary}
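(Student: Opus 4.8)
The plan is to obtain this as an immediate consequence of \cref{lem:transition-density-lower-bound}. That lemma gives the pointwise (measure-theoretic) lower bound $K_{\theta_n}(x,\mathrm{d}x') \geq \frac{\pi(x')}{M_n}\,\mu(\mathrm{d}x')$ on the transition kernel, and the corollary is simply its integrated form over an arbitrary measurable set $A$.

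First I would fix $A\in\mathfrak{B}(\mathcal{X})$ and $x\in\mathcal{X}$, and write $K_{\theta_n}(x,A) = \int_A K_{\theta_n}(x,\mathrm{d}x')$. Then I would invoke \cref{lem:transition-density-lower-bound} together with monotonicity of the integral to get $\int_A K_{\theta_n}(x,\mathrm{d}x') \geq \int_A \frac{\pi(x')}{M_n}\,\mu(\mathrm{d}x')$. Pulling the constant $1/M_n$ out of the integral and recognizing $\int_A \pi(x')\,\mu(\mathrm{d}x') = \Pi(A)$ (since $\pi$ is the density of $\Pi$ with respect to $\mu$) yields $K_{\theta_n}(x,A) \geq \frac{1}{M_n}\Pi(A)$, as claimed.

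There is essentially no obstacle here: the only point requiring a word of care is that the inequality in \cref{lem:transition-density-lower-bound} is an inequality between measures (it says that for every measurable $A$, the $K_{\theta_n}(x,\cdot)$-mass of $A$ dominates the $\frac{\pi}{M_n}\mu$-mass of $A$), so integrating over $A$ is legitimate and in fact is exactly the content being asserted. No new hypotheses beyond those already in force for \cref{lem:transition-density-lower-bound} (namely the existence of $M_n$ with $\pi(x)\leq M_n\tilde\pi_{\theta_n}(x)$ on $\mathrm{Supp}(\pi)$) are needed.

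\begin{proof}
Fix $A\in\mathfrak{B}(\mathcal{X})$ and $x\in\mathcal{X}$. Integrating the bound of \cref{lem:transition-density-lower-bound} over $A$ and using monotonicity of the integral,
\begin{align}
  K_{\theta_n}(x, A) = \int_A K_{\theta_n}(x, \mathrm{d}x') \geq \int_A \frac{\pi(x')}{M_n}\,\mu(\mathrm{d}x') = \frac{1}{M_n}\int_A \pi(x')\,\mu(\mathrm{d}x') = \frac{1}{M_n}\Pi(A),
\end{align}
where the last equality holds because $\pi$ is the density of $\Pi$ with respect to $\mu$.
\end{proof}
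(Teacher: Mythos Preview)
Your proof is correct and takes essentially the same approach as the paper: the paper's proof is the one-line instruction ``Integrate both sides of \cref{eq:transition-density-lower-bound} over the set $A$,'' and you have simply written this out explicitly with the identification $\int_A \pi(x')\,\mu(\mathrm{d}x') = \Pi(A)$.
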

\begin{proof}
Integrate both sides of \cref{eq:transition-density-lower-bound} over the set $A$.
\end{proof}

\begin{proof}[Proof of \Cref{thm:adaptive-total-variation-rate}]
From \cref{cor:small-state-space} it follows that we may express the transition kernel at step $n\in\mathbb{N}$ as
\begin{align}
    K_{\theta_n}(x, A) &= \frac{1}{M_{n}} \Pi(A) + \paren{1 - \frac{1}{M_{n}}} \cdot \underbrace{\frac{K_{\theta_n}(x, A) - \frac{1}{M_{n}}\Pi(A)}{1 - \frac{1}{M_{n}}}}_{\tilde{K}_{\theta_n}(x, A)} \\
    \label{eq:transition-mixture} &= \frac{1}{M_{n}} \Pi(A) + \paren{1 - \frac{1}{M_{n}}} \tilde{K}_{\theta_n}(x, A),
\end{align}
where $\tilde{K}_{\theta_n}(x, A)$ is another probability measure. With \cref{eq:transition-mixture}, $K_{\theta_n}$ may be given the following interpretation: With probability $1 / M_{n}$ generate the next state by the distribution $\Pi$ and with probability $1 - 1 / M_{n}$ generate the next state from the distribution $\tilde{K}_{\theta_n}$. Given an initial state $X_0=x_0$, consider the Markov chain whose transitions are generated according to $X_{n+1} \sim K_{\theta_n}(X_n, \cdot)$ with marginal laws $X_n\sim K^n(x_0, \cdot)$. From the representation in \cref{eq:transition-mixture} and \cref{prop:deterministic-sequence-stationarity}, it follows that the total variation distance is zero as soon as we generate the next state from $\Pi$. Let $T$ be the random variable representing the first step at which $X_{n}$ is generated from $\Pi$. Then $K^n(x_0, \cdot) = \mathrm{Pr}\left[T\leq n\right] \Pi(\cdot) + \mathrm{Pr}\left[T>n\right] \tilde{K}^n(x_0,\cdot)$, where $\tilde{K}^n$ is the mixture component of $K^n$ that is possibly not $\Pi$. Thus,
\begin{align}
    \Vert K^n(x_0, \cdot) - \Pi(\cdot)\Vert_{\mathrm{TV}} &=  \Vert \mathrm{Pr}\left[T\leq n\right] \Pi(\cdot) + \mathrm{Pr}\left[T>n\right] \tilde{K}^n(x_0,\cdot) - \Pi(\cdot)\Vert_{\mathrm{TV}} \\
    &= \Vert \mathrm{Pr}\left[T>n\right] \tilde{K}^n(x_0,\cdot) - \mathrm{Pr}\left[T>n\right]\Pi(\cdot)\Vert_{\mathrm{TV}} \\
    &= \mathrm{Pr}\left[T>n\right] \cdot\Vert \tilde{K}^n(x_0,\cdot) - \Pi(\cdot)\Vert_{\mathrm{TV}} \\
    &\leq 2\prod_{i=0}^{n-1} \paren{1 - \frac{1}{M_{i}}},
\end{align}
since $T>n$ only if we generate the next state from $\tilde{K}_{\theta_{i}}$ for $i=1,\ldots, n-1$, each of which occurs with probability $1 - 1 / M_{i}$.
\end{proof}

\begin{proposition}\label{prop:deterministic-sequence-non-increasing}
  Suppose that $(\theta_0,\theta_1,\ldots)$ is a {\it deterministic} $\mathcal{Y}$-valued sequence. Let $(K_{\theta_n})_{n\in\mathbb{N}}$ be an associated sequence of Markov transition kernels. If $\Pi$ is stationary for each $K_{\theta_n}$, then 
  \begin{align}
      \Vert K^{n+1}(x_0, \cdot) - \Pi(\cdot)\Vert_{\mathrm{TV}} \leq \Vert K^n(x_0, \cdot) -\Pi(\cdot)\Vert_{\mathrm{TV}},
  \end{align}
  where $K^n$ is defined in \cref{eq:deterministic-transition-law}.
\end{proposition}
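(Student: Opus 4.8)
The plan is to exploit the mixture representation of the transition kernel established in the proof of \Cref{thm:adaptive-total-variation-rate}, but carried out one step at a time so as to relate $K^{n+1}$ to $K^n$ directly. First I would write $K^{n+1}(x_0, \cdot) = \int_{\mathcal{X}} K^n(x_0, \mathrm{d}y)\, K_{\theta_n}(y, \cdot)$, i.e. the $(n+1)$-step law is obtained from the $n$-step law by one more application of $K_{\theta_n}$. The key structural fact is that $K_{\theta_n}$ has $\Pi$ as a stationary measure (by \Cref{prop:deterministic-sequence-stationarity}, or directly from \Cref{prop:independent-metropolis-hastings-uniformly-ergodic}), so the operator $\nu \mapsto \int \nu(\mathrm{d}y) K_{\theta_n}(y, \cdot)$ fixes $\Pi$ and is a contraction (non-expansion) in total variation.

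The main step is the contraction property: for any two probability measures $\nu_1, \nu_2$ and any transition kernel $Q$, $\Vert \int \nu_1(\mathrm{d}y) Q(y,\cdot) - \int \nu_2(\mathrm{d}y) Q(y,\cdot)\Vert_{\mathrm{TV}} \leq \Vert \nu_1 - \nu_2\Vert_{\mathrm{TV}}$. This is standard: using the $\mathcal{M}$-characterization of total variation from \Cref{app:total-variation-review}, for any test function $f$ with $|f|\leq 1$, the function $y \mapsto \int f(x) Q(y, \mathrm{d}x)$ also has sup-norm bounded by $1$, so testing the left-hand side against $f$ reduces to testing $\nu_1 - \nu_2$ against this new admissible function; taking the supremum gives the bound. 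Applying this with $\nu_1 = K^n(x_0, \cdot)$, $\nu_2 = \Pi$, and $Q = K_{\theta_n}$, and using $\int \Pi(\mathrm{d}y) K_{\theta_n}(y, \cdot) = \Pi(\cdot)$ on the right, yields
\begin{align}
    \Vert K^{n+1}(x_0, \cdot) - \Pi(\cdot)\Vert_{\mathrm{TV}} &= \Vert \textstyle\int K^n(x_0, \mathrm{d}y) K_{\theta_n}(y, \cdot) - \int \Pi(\mathrm{d}y) K_{\theta_n}(y, \cdot)\Vert_{\mathrm{TV}} \\
    &\leq \Vert K^n(x_0, \cdot) - \Pi(\cdot)\Vert_{\mathrm{TV}},
\end{align}
which is exactly the claim.

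I expect the only real obstacle to be bookkeeping: verifying that the Chapman–Kolmogorov-type identity $K^{n+1}(x_0,\cdot) = \int K^n(x_0,\mathrm{d}y)K_{\theta_n}(y,\cdot)$ is the correct factorization of \cref{eq:deterministic-transition-law} (peeling off the \emph{last} kernel $K_{\theta_n}$ rather than the first), and confirming that the measurability/Fubini conditions needed to swap the order of integration in the contraction argument hold — both of which are routine for transition kernels on a Borel space. No deep idea is required beyond the non-expansiveness of Markov operators in total variation together with the shared stationarity of every $K_{\theta_n}$.
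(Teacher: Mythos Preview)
Your proposal is correct and follows essentially the same route as the paper: peel off the last kernel via $K^{n+1}(x_0,\cdot)=\int K^n(x_0,\mathrm{d}y)\,K_{\theta_n}(y,\cdot)$, use stationarity of $\Pi$ under $K_{\theta_n}$, and invoke the non-expansiveness of a Markov operator in total variation, which the paper (like you) verifies via the test-function characterization $y\mapsto \int f(x)\,K_{\theta_n}(y,\mathrm{d}x)\in\mathcal{M}$. Your opening sentence about exploiting the mixture representation from \Cref{thm:adaptive-total-variation-rate} is a slight red herring---neither you nor the paper actually uses that decomposition here---but the argument you go on to give is exactly the paper's.
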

\begin{proof}
\begin{align}
    \Vert K^{n+1}(x_0, \cdot) - \Pi(\cdot)\Vert_{\mathrm{TV}} &= \sup_{f\in\mathcal{M}} \abs{\int_{\mathcal{X}} f(y) K^{n+1}(x_0, \mathrm{d}y) - \int_{\mathcal{X}} f(y) \Pi(\mathrm{d}y)} \\
    &= \sup_{f\in\mathcal{M}} \abs{\int_{\mathcal{X}} f(y) K^{n+1}(x_0, \mathrm{d}y) - \int_{\mathcal{X}} f(y) \int_{\mathcal{X}} K_{\theta_n}(w, \mathrm{d}y) \Pi(\mathrm{d}w)} \\
    &= \sup_{f\in\mathcal{M}} \abs{\int_{\mathcal{X}} f(y) \int_{\mathcal{X}} K^{n}(x_0, \mathrm{d}w) K_{\theta_n}(w, \mathrm{d}y) - \int_{\mathcal{X}} f(y) \int_{\mathcal{X}} K_{\theta_n}(w, \mathrm{d}y) \Pi(\mathrm{d}w)} \\
    &= \sup_{f\in\mathcal{M}} \abs{\int_{\mathcal{X}} \paren{\int_{\mathcal{X}} f(y) K_{\theta_n}(w, \mathrm{d}y)}  K^{n}(x_0, \mathrm{d}w)  -  \int_{\mathcal{X}} \paren{\int_{\mathcal{X}} f(y) K_{\theta_n}(w, \mathrm{d}y)} \Pi(\mathrm{d}w)} \\
    &\leq \sup_{f\in\mathcal{M}} \abs{\int_{\mathcal{X}} f(w) K^n(x_0, \mathrm{d}w) - \int_{\mathcal{X}} f(w) \Pi(\mathrm{d}w)} \\
    &= \Vert K^n(x_0,\cdot) - \Pi(\cdot)\Vert_{\mathrm{TV}}.
\end{align}
\end{proof}
\begin{definition}\label{def:ergodic-set}
  Let $\Pi$ be a probability measure with density $\pi$. Suppose that every $\theta\in\mathcal{Y}$ parameterizes a probability measure $\tilde{\Pi}_\theta$ on $\mathfrak{B}(\mathcal{X})$ with density $\tilde{\pi}_{\theta}$. Define the ergodic set of $\Pi$ given $\mathcal{Y}$ as
  \begin{align}
      \mathcal{Q} = \set{\theta\in\mathcal{Y} : ~\text{there exists}~ M_\theta < \infty ~\text{such that} ~ \pi(x) \leq M_\theta \tilde{\pi}_\theta(x) ~~\forall~~ x\in\mathrm{Supp}(\pi)}.
  \end{align}
\end{definition}
The combination of \cref{prop:deterministic-sequence-non-increasing} and \cref{def:ergodic-set} allows one to give a slight generalization of \cref{thm:adaptive-total-variation-rate}.
\begin{corollary}\label{cor:deterministic-geometric-ergodicity}
Let $\Pi$ be a probability measure with density $\pi$. Suppose that every $\theta\in\mathcal{Y}$ parameterizes a probability measure $\tilde{\Pi}_\theta$ on $\mathfrak{B}(\mathcal{X})$ with density $\tilde{\pi}_{\theta}$. Suppose that $(\theta_0,\theta_1,\ldots)$ is a deterministic $\mathcal{Y}$-valued sequence. Let $(K_{\theta_n})_{n\in\mathbb{N}}$ be an associated sequence of Markov transition kernels of the independent Metropolis-Hastings sampler of $\Pi$ given $\tilde{\Pi}_{\theta_n}$. Let $K^n(x_0, A)$ denote the $n$-step transition probability from $x_0$ to $A\in\mathfrak{B}(\mathcal{X})$. Then
\begin{align}
    \Vert K^n(x_0,\cdot) - \Pi(\cdot)\Vert_{\mathrm{TV}} \leq 2\prod_{i=0}^{n-1} \paren{1 - L_i},
\end{align}
where
\begin{align}
    \label{eq:strong-doeblin-coefficient} L_i = \begin{cases}
    \frac{1}{M_i} & ~~\mathrm{if}~~ \theta_i \in\mathcal{Q} \\
    0 & ~~\mathrm{otherwise.}
    \end{cases}
\end{align}
\end{corollary}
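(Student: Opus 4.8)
The plan is to adapt, essentially verbatim, the minorization/regeneration argument used to prove \cref{thm:adaptive-total-variation-rate}, the only new ingredient being the bookkeeping for those indices $i$ at which $\theta_i\notin\mathcal{Q}$, which we handle by assigning them a Doeblin coefficient $L_i=0$. First I would observe that, regardless of membership in $\mathcal{Q}$, each $K_{\theta_n}$ is the transition kernel of the independent Metropolis-Hastings sampler of $\Pi$ given $\tilde\Pi_{\theta_n}$, so by \cref{prop:independent-metropolis-hastings-uniformly-ergodic} the measure $\Pi$ is stationary for every $K_{\theta_n}$. Consequently \cref{prop:deterministic-sequence-stationarity} applies to the deterministically driven chain, so $\Pi$ is its stationary distribution, and \cref{prop:deterministic-sequence-non-increasing} guarantees that no single step increases the total variation distance to $\Pi$.

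Next I would split into the two cases of \cref{eq:strong-doeblin-coefficient}. When $\theta_i\in\mathcal{Q}$ there is a finite $M_i\geq 1$ with $\pi(x)\leq M_i\tilde\pi_{\theta_i}(x)$ on $\mathrm{Supp}(\pi)$, so \cref{lem:transition-density-lower-bound} and \cref{cor:small-state-space} give the minorization $K_{\theta_i}(x,A)\geq \frac{1}{M_i}\Pi(A)=L_i\Pi(A)$, hence the mixture decomposition $K_{\theta_i}(x,A)=L_i\Pi(A)+(1-L_i)\tilde K_{\theta_i}(x,A)$ with $\tilde K_{\theta_i}$ a transition kernel, exactly as in \cref{eq:transition-mixture}. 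When $\theta_i\notin\mathcal{Q}$ I would instead take the trivial decomposition $K_{\theta_i}(x,A)=L_i\Pi(A)+(1-L_i)K_{\theta_i}(x,A)$ with $L_i=0$. In both cases $L_i\in[0,1]$ (since $M_i\geq 1$ whenever it exists), so in all cases the chain may be simulated by flipping, independently at step $i$, a coin $B_i\sim\mathrm{Bernoulli}(L_i)$ that is independent of the chain's history: on $\{B_i=1\}$ draw $X_{i+1}\sim\Pi$, and on $\{B_i=0\}$ draw $X_{i+1}$ from the residual kernel.

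Then I would carry out the same argument as in the proof of \cref{thm:adaptive-total-variation-rate}. Let $T$ be the first index $i+1$ with $B_i=1$. Independence of the coins gives $\mathrm{Pr}[T>n]=\prod_{i=0}^{n-1}(1-L_i)$, and conditionally on $T=t\leq n$ the state $X_t$ is a fresh draw from $\Pi$, so by \cref{prop:deterministic-sequence-stationarity} (applied from time $t$ onward, using that $\Pi$ is stationary for every $K_{\theta_i}$) we get $X_n\sim\Pi$; marginalizing, $K^n(x_0,\cdot)=\mathrm{Pr}[T\leq n]\,\Pi(\cdot)+\mathrm{Pr}[T>n]\,\tilde K^n(x_0,\cdot)$ for some probability law $\tilde K^n(x_0,\cdot)$. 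Subtracting $\Pi$ and using $\mathrm{Pr}[T\leq n]=1-\mathrm{Pr}[T>n]$ gives $K^n(x_0,\cdot)-\Pi(\cdot)=\mathrm{Pr}[T>n]\,(\tilde K^n(x_0,\cdot)-\Pi(\cdot))$, whence $\Vert K^n(x_0,\cdot)-\Pi(\cdot)\Vert_{\mathrm{TV}}=\mathrm{Pr}[T>n]\,\Vert \tilde K^n(x_0,\cdot)-\Pi(\cdot)\Vert_{\mathrm{TV}}\leq 2\prod_{i=0}^{n-1}(1-L_i)$, using $\Vert\,\cdot\,\Vert_{\mathrm{TV}}\leq 2$.

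I do not expect a serious obstacle here: the corollary is \cref{thm:adaptive-total-variation-rate} plus bookkeeping for the "bad'' steps, and indeed \cref{prop:deterministic-sequence-non-increasing} already makes precise the intuition that such steps are harmless. The only point demanding care is justifying that, once a regeneration has occurred, the marginal law of the chain remains exactly $\Pi$ at every later step — which requires stationarity of $\Pi$ under \emph{each} $K_{\theta_i}$, not just those in $\mathcal{Q}$ — and confirming that the product expression for $\mathrm{Pr}[T>n]$ is exact, which holds because the coins $B_i$ are, by construction, mutually independent and independent of the chain.
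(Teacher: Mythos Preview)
Your proposal is correct. It differs slightly from the paper's proof in structure: the paper proceeds by induction on $n$, invoking \cref{prop:deterministic-sequence-non-increasing} at each step with $\theta_n\notin\mathcal{Q}$ to absorb the factor $(1-L_n)=1$, and re-running the regeneration argument of \cref{thm:adaptive-total-variation-rate} at each step with $\theta_n\in\mathcal{Q}$. You instead build a single global coupling by using the trivial mixture decomposition $K_{\theta_i}=0\cdot\Pi+1\cdot K_{\theta_i}$ whenever $\theta_i\notin\mathcal{Q}$, so that all steps are treated uniformly and the bound $\Pr[T>n]=\prod_{i=0}^{n-1}(1-L_i)$ drops out in one shot. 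Your route is arguably cleaner, since it never actually needs \cref{prop:deterministic-sequence-non-increasing} (you cite it, but your argument does not use it); the paper's route, on the other hand, makes explicit the intuition that a ``bad'' step cannot increase the total variation distance. Both arrive at the same conclusion by essentially the same minorization idea.
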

\begin{proof}
The proof proceeds by induction. If $\theta_0 \in \mathcal{Q}$ then by argument in the proof of \cref{thm:adaptive-total-variation-rate} we have
\begin{align}
    \Vert K^1(x_0, \cdot) -\Pi(\cdot)\Vert_{\mathrm{TV}} &= \Vert K_{\theta_0}(x_0,\cdot) - \Pi(\cdot)\Vert_{\mathrm{TV}} \\
    &\leq 2 \paren{1 - \frac{1}{M_0}}.
\end{align}
If $\theta_0 \not\in Q$ then we obtain the vacuously true inequality $\Vert K^1(x_0, \cdot) -\Pi(\cdot)\Vert_{\mathrm{TV}}\leq 1$. Now assume that $\Vert K^n(x_0,\cdot)-\Pi(\cdot)\Vert_{\mathrm{TV}} \leq \prod_{i=0}^{n-1} (1 - L_i)$. If $\theta_n\not\in \mathcal{Q}$ then by \cref{prop:deterministic-sequence-non-increasing} we have
\begin{align}
    \Vert K^{n+1}(x_0,\cdot)-\Pi(\cdot)\Vert_{\mathrm{TV}} &\leq \Vert K^n(x_0,\cdot)-\Pi(\cdot)\Vert_{\mathrm{TV}} \\
    &\leq 2\prod_{i=0}^{n-1} (1 - L_i) \\
    &= 2\prod_{i=0}^{n-1} (1 - L_i) \cdot 1 \\
    &= 2\prod_{i=0}^{n} (1 - L_i),
\end{align}
since $L_n = 0$. On the other hand, if $\theta_n\in\mathcal{Q}$ then, using the same argument as in the proof of \cref{thm:adaptive-total-variation-rate}, the probability that none of $(X_0,\ldots,X_{n})$ were drawn from $\Pi$ is at most
\begin{align}
    \prod_{i=0}^{n-1} \paren{1 - L_i}.
\end{align}
Correspondingly, the probability that $X_{n+1}$ is also not drawn from $\Pi$ is $1 - 1/M_{n}$ so that the probability that none of $(X_0,\ldots, X_n, X_{n+1})$ is drawn from $\Pi$ is at most
\begin{align}
    \paren{1 - \frac{1}{M_n}}\prod_{i=0}^{n-1} \paren{1 - L_i} = \prod_{i=0}^n \paren{1 - L_i}.
\end{align}
From this the conclusion follows.
\end{proof}
\clearpage
\section{Proofs Concerning Independent Adaptations}\label{app:proofs-concerning-independent-adaptations}

\begin{theorem}\label{thm:total-variation-rate-independent-adaption}
Let $\Pi$ be a probability measure with density $\pi$. Suppose that every $\theta\in\mathcal{Y}$ parameterizes a probability measure $\tilde{\Pi}_\theta$ on $\mathfrak{B}(\mathcal{X})$ with density $\tilde{\pi}_{\theta}$. Suppose that $(\Theta_0,\Theta_1,\ldots)$ is a {\it stochastic} $\mathcal{Y}$-valued sequence. Let $(K_{\Theta_n})_{n\in\mathbb{N}}$ be an associated sequence of Markov transition kernels of the independent Metropolis-Hastings sampler of $\Pi$ given $\tilde{\Pi}_{\Theta_n}$. Suppose that $X_n$ and $\Theta_n$ are independent given the history of the chain to step $n-1$. Let $G^n(x_0, A)$ be the associated marginal transition law. Then
\begin{align}
    \Vert G^n(x_0, \cdot) - \Pi(\cdot)\Vert_{\mathrm{TV}} \leq 2\underset{(\Theta_0,\ldots,\Theta_{n-1})}{\mathbb{E}} \left[\prod_{i=0}^{n-1} \paren{1 - L_i}\right]
\end{align}
where $L_i=1/M_i$ if $M_i<\infty$ and otherwise $L_i=0$.
\end{theorem}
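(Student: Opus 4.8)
The plan is to reduce the stochastic statement to the deterministic one from \Cref{thm:adaptive-total-variation-rate} (or its refinement \Cref{cor:deterministic-geometric-ergodicity}) by conditioning on a realization of the parameter sequence. The key observation is that, because $X_n$ and $\Theta_n$ are independent given the history up to step $n-1$, the conditional $n$-step law given the entire realization $(\Theta_0=\theta_0,\ldots,\Theta_{n-1}=\theta_{n-1})$ is exactly the deterministic adaptive kernel $K^n_{(\theta_i)_{i=0}^{n-1}}(x_0,\cdot)$ defined in \cref{eq:deterministic-transition-law}. Hence by definition $G^n(x_0,A) = \mathbb{E}_{(\Theta_0,\ldots,\Theta_{n-1})} K^n_{(\Theta_i)_{i=0}^{n-1}}(x_0,A)$, with the expectation over the marginal law of the parameter trajectory.

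First I would fix a realization $(\theta_0,\ldots,\theta_{n-1})$ and apply \Cref{cor:deterministic-geometric-ergodicity}: this gives $\Vert K^n_{(\theta_i)_{i=0}^{n-1}}(x_0,\cdot) - \Pi(\cdot)\Vert_{\mathrm{TV}} \leq 2\prod_{i=0}^{n-1}(1-L_i)$, where $L_i = 1/M_i$ when $\theta_i$ admits a finite bound $M_i$ (i.e.\ lies in the ergodic set $\mathcal{Q}$ of \Cref{def:ergodic-set}) and $L_i=0$ otherwise. Note this requires checking that $\Pi$ is stationary for each $K_\theta$ in the relevant sense so that the cited corollary applies; since each $K_\theta$ is an independent Metropolis-Hastings kernel for $\Pi$, stationarity holds by \Cref{prop:independent-metropolis-hastings-uniformly-ergodic}. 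Then I would take expectations over the random trajectory: by the triangle inequality for the total variation norm applied to the mixture representation $G^n = \mathbb{E}[K^n_{(\Theta_i)}]$ — concretely, $\Vert G^n(x_0,\cdot)-\Pi\Vert_{\mathrm{TV}} = \Vert \mathbb{E}[K^n_{(\Theta_i)}(x_0,\cdot) - \Pi]\Vert_{\mathrm{TV}} \leq \mathbb{E}\Vert K^n_{(\Theta_i)}(x_0,\cdot)-\Pi\Vert_{\mathrm{TV}}$ — the bound passes through the expectation, yielding $\Vert G^n(x_0,\cdot)-\Pi(\cdot)\Vert_{\mathrm{TV}} \leq 2\,\mathbb{E}_{(\Theta_0,\ldots,\Theta_{n-1})}\big[\prod_{i=0}^{n-1}(1-L_i)\big]$, where now $L_i$ is the random coefficient $1/M_i$ if $M_i<\infty$ and $0$ otherwise.

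The main technical point to get right is the interchange of the total variation norm and the expectation over the parameter trajectory, i.e.\ justifying $\Vert \mathbb{E}[\,\cdot\,]\Vert_{\mathrm{TV}} \leq \mathbb{E}\Vert \cdot\Vert_{\mathrm{TV}}$ for the signed-measure-valued integrand $K^n_{(\Theta_i)}(x_0,\cdot)-\Pi(\cdot)$. This is cleanest via the functional (dual) representation of total variation recalled in \Cref{app:total-variation-review}: for any fixed $f$ with $\sup_x|f(x)|\le 1$, Fubini gives $\int f\, d(G^n(x_0,\cdot)-\Pi) = \mathbb{E}\big[\int f\, d(K^n_{(\Theta_i)}(x_0,\cdot)-\Pi)\big]$, and bounding the inner integral by $\Vert K^n_{(\Theta_i)}(x_0,\cdot)-\Pi\Vert_{\mathrm{TV}}$ pointwise in the realization and then taking the supremum over $f$ gives the claim. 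The remaining ingredients — measurability of the map from the parameter trajectory to $K^n_{(\Theta_i)}(x_0,A)$ so that the expectation is well defined, and the applicability of \Cref{cor:deterministic-geometric-ergodicity} realization-by-realization — are routine given the setup.
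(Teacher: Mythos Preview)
Your proposal is correct and follows essentially the same route as the paper: represent $G^n$ as an expectation of the deterministic adaptive kernel $K^n_{(\Theta_i)}$, apply \Cref{cor:deterministic-geometric-ergodicity} realization-by-realization, and then pull the total variation norm inside the expectation. The only cosmetic difference is that the paper justifies the interchange via the $\sup_{A\in\mathfrak{B}(\mathcal{X})}$ characterization together with Jensen's inequality, whereas you use the dual $\sup_{|f|\le 1}$ representation plus Fubini---these are interchangeable standard arguments.
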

A proof is given in \cref{app:proofs-concerning-independent-adaptations}. This result was previously demonstrated in \citet{adaptive-independent-metropolis}, though we have offered a different proof procedure.
\begin{example}\label{ex:normalizing-flow-stochastic}
Let $\Pi$ be a probability measure with density $\pi$. Let $\mathcal{Y}=\R^m$ and suppose that every $\theta\in\mathcal{Y}$ smoothly parameterizes a probability measure $\tilde{\Pi}$ on $\mathfrak{B}(\mathcal{X})$ with density $\tilde{\pi}_{\theta}$ for which $\mathrm{Supp}(\pi)= \mathrm{Supp}(\tilde{\pi}_{\theta})$. Consider the sequence of updates,
\begin{align}
  \label{eq:gd-rev-kl}
  \theta_{n} &= \theta_{n-1} - \epsilon \nabla_\theta \left(\frac{1}{s}\sum_{i=1}^s \log \frac{\tilde{\pi}_{\theta_{n-1}}(Y_s(\theta_{n-1}))}{\pi(Y_s(\theta_{n-1}))}\right)
\end{align}
where $Y_1,\ldots,Y_s\overset{\mathrm{i.i.d.}}{\sim} \tilde{\Pi}_{\theta_{n-1}}$. This corresponds to the stochastic gradient approximation of \cref{ex:exact-flow}.
Consider the family of Markov chain transition operators of the independent Metropolis-Hastings sampler of $\Pi$ given $\tilde{\Pi}_{\theta_n}$ with transition kernels $K_{\theta_n}$. Then $\Pi$ is the stationary distribution of the Markov chain whose transitions satisfy $X_{n+1}\sim K_{\theta_n}(X_n, \cdot)$. To see this, let $\tilde{X}$ be a sample from $\tilde{\Pi}_{\theta_{n-1}}$ independent of $(Y_1,\ldots,Y_s)$ and let $U\sim\mathrm{Uniform}(0, 1)$ be independent of both. Then $X_{n} = g(x_{n-1}, \theta_{n-1}, \tilde{X}, U)$ where $g$ is given by,
\begin{align}
g(x, \theta, \tilde{x}, u) = \begin{cases}
\tilde{x} & ~~\mathrm{if}~~ u <\min\set{1, \frac{\pi(\tilde{x}) \tilde{\pi}_{\theta}(x)}{\pi(x) \tilde{\pi}_{\theta}(\tilde{x})}} \\
x & ~~\mathrm{otherwise}
\end{cases}
\end{align}
and $\Theta_n = f(\theta_{n-1}, Y_1,\ldots,Y_s)$ where $f$ is given by,
\begin{align}
    f(\theta, y_1(\theta),\ldots, y_s(\theta)) = \theta - \epsilon \nabla_\theta \left(\frac{1}{s}\sum_{i=1}^s \log \frac{\tilde{\pi}_{\theta}(y_s(\theta))}{\pi(y_s(\theta))}\right).
\end{align}
By \cref{lem:functions-of-independent-variables}, $\Theta_n$ and $X_n$ are independent given the history of the chain to step $n-1$ and therefore, by \cref{prop:stochastic-sequence-stationarity}, $\Pi$ is the stationary distribution.
\end{example}

\begin{lemma}\label{lem:functions-of-independent-variables}
Suppose that $(X_1^{(a)},\ldots, X_r^{(a)})$ and $(X_1^{(b)},\ldots,X_s^{(b)})$ are two sets of random variables which are independent given the history of the chain to step $n-1$. Suppose that $\Theta_n = f(x_{n-1}, \theta_{n-1}, X_1^{(a)},\ldots, X_r^{(a)})$ and $X_{n} = g(x_{n-1}, \theta_{n-1}, X_1^{(b)},\ldots,X_s^{(b)})$ for two functions $f$ and $g$. Then $X_n$ and $\Theta_n$ are independent given the history of the chain to step $n-1$.
\end{lemma}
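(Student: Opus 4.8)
The plan is to reduce the statement to the elementary fact that measurable images of conditionally independent random vectors are conditionally independent, taking care that the maps $f$ and $g$ also consume the history-measurable quantities $x_{n-1}$ and $\theta_{n-1}$. Write $\mathcal{H}_{n-1}$ for the $\sigma$-algebra generated by the history of the chain through step $n-1$, so that $x_{n-1}$ and $\theta_{n-1}$ are $\mathcal{H}_{n-1}$-measurable. First I would make the hypothesis precise: ``$(X_1^{(a)},\dots,X_r^{(a)})$ and $(X_1^{(b)},\dots,X_s^{(b)})$ are independent given the history to step $n-1$'' means that for all bounded measurable $u$ and $v$,
\[
  \mathbb{E}\bigl[u(X_{1:r}^{(a)})\,v(X_{1:s}^{(b)}) \mid \mathcal{H}_{n-1}\bigr]
  = \mathbb{E}\bigl[u(X_{1:r}^{(a)}) \mid \mathcal{H}_{n-1}\bigr]\,
    \mathbb{E}\bigl[v(X_{1:s}^{(b)}) \mid \mathcal{H}_{n-1}\bigr]
  \quad\text{a.s.}
\]

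Next, I would fix bounded measurable test functions $\phi$ and $\psi$ and aim to establish
\[
  \mathbb{E}\bigl[\phi(\Theta_n)\,\psi(X_n) \mid \mathcal{H}_{n-1}\bigr]
  = \mathbb{E}\bigl[\phi(\Theta_n) \mid \mathcal{H}_{n-1}\bigr]\,
    \mathbb{E}\bigl[\psi(X_n) \mid \mathcal{H}_{n-1}\bigr]
  \quad\text{a.s.},
\]
which is precisely conditional independence of $\Theta_n$ and $X_n$ given $\mathcal{H}_{n-1}$. The crux is that conditioning on $\mathcal{H}_{n-1}$ ``freezes'' the arguments $x_{n-1}$ and $\theta_{n-1}$: then $\phi(\Theta_n) = (\phi\circ f)(x_{n-1},\theta_{n-1},X_{1:r}^{(a)})$ becomes a bounded measurable function of $X_{1:r}^{(a)}$ alone, and $\psi(X_n) = (\psi\circ g)(x_{n-1},\theta_{n-1},X_{1:s}^{(b)})$ a bounded measurable function of $X_{1:s}^{(b)}$ alone, so the desired factorization is a special case of the displayed hypothesis. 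To make ``freezing'' rigorous I would run a monotone-class argument: it suffices to verify the factorization when $\phi\circ f$ and $\psi\circ g$ have product form $\mathbf{1}_{C}(x_{n-1},\theta_{n-1})\,\mathbf{1}_{A}(X_{1:r}^{(a)})$ and $\mathbf{1}_{D}(x_{n-1},\theta_{n-1})\,\mathbf{1}_{B}(X_{1:s}^{(b)})$, in which case the $\mathcal{H}_{n-1}$-measurable indicators pull out of every conditional expectation and one is left exactly with the hypothesis applied to $u=\mathbf{1}_A$, $v=\mathbf{1}_B$; a standard Dynkin / linearity / monotone-convergence extension then upgrades to general bounded measurable $\phi,\psi$. (Alternatively, one can pass to a regular conditional distribution of $(X^{(a)},X^{(b)})$ given $\mathcal{H}_{n-1}$ and argue pointwise in the conditioning value, where $x_{n-1}$ and $\theta_{n-1}$ are genuine constants.)

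I expect no real analytic difficulty here; the only point requiring care is exactly this bookkeeping — that the shared dependence of $f$ and $g$ on the history-measurable inputs $x_{n-1},\theta_{n-1}$ does not couple $\Theta_n$ and $X_n$, because conditioning on $\mathcal{H}_{n-1}$ neutralizes precisely those inputs while leaving the genuinely random inputs $X_{1:r}^{(a)}$ and $X_{1:s}^{(b)}$ conditionally independent. Once that is in hand the conclusion is immediate, and it supplies the ingredient invoked in \cref{ex:normalizing-flow-stochastic} to conclude that $\Theta_n$ and $X_n$ are independent given the history, and hence that $\Pi$ remains stationary.
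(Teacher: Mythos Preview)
Your proposal is correct and follows essentially the same idea as the paper: both arguments reduce to the elementary fact that measurable functions of conditionally independent random vectors remain conditionally independent, with the history-measurable inputs $x_{n-1},\theta_{n-1}$ treated as frozen under the conditioning. The paper's proof is simply the one-line $\sigma$-algebra version of this (noting $\sigma(\Theta_n)\subseteq\sigma(X_{1:r}^{(a)})$ and $\sigma(X_n)\subseteq\sigma(X_{1:s}^{(b)})$ once the history is fixed), whereas you spell out the conditional-expectation factorization and the monotone-class bookkeeping; your version is more careful about the freezing step but not a different route.
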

\begin{proof}
The $\sigma$-algebra generated by $\Theta_n$ is a subset of the $\sigma$-algebra generated by $(X_1^{(a)},\ldots, X_r^{(a)})$. Likewise, the $\sigma$-algebra generated by $X_n$ is a subset of the $\sigma$-algebra generated by $(X_1^{(b)},\ldots,X_s^{(b)})$. Since $(X_1^{(a)},\ldots, X_r^{(a)})$ and $(X_1^{(b)},\ldots,X_s^{(b)})$ are assumed independent given the history of the chain to step $n-1$, the conclusion follows.
\end{proof}
\begin{proof}[Proof of \Cref{thm:total-variation-rate-independent-adaption}]
\begin{align}
    \Vert G^n(x_0, \cdot) - \Pi(\cdot)\Vert_{\mathrm{TV}} &= \Vert \underset{(\Theta_0,\ldots,\Theta_{n-1})}{\mathbb{E}} K^n_{(\Theta_0,\ldots,\Theta_{n-1})}(x_0, \cdot) - \Pi(\cdot)\Vert_{\mathrm{TV}} \\
    &= 2\sup_{A\in\mathfrak{B}(\mathcal{X})} \abs{\underset{(\Theta_0,\ldots,\Theta_{n-1})}{\mathbb{E}} K^n_{(\Theta_0,\ldots,\Theta_{n-1})}(x_0, A) - \Pi(A)} \\
    &\leq \underset{(\Theta_0,\ldots,\Theta_{n-1})}{\mathbb{E}} 2\sup_{A\in\mathfrak{B}(\mathcal{X})} \abs{ K^n_{(\Theta_0,\ldots,\Theta_{n-1})}(x_0, A) - \Pi(A)} \\
    &= \underset{(\Theta_0,\ldots,\Theta_{n-1})}{\mathbb{E}} \Vert K^n_{(\Theta_0,\ldots,\Theta_{n-1})}(x_0, \cdot) - \Pi(\cdot)\Vert_{\mathrm{TV}} \\
    &\leq 2\underset{(\Theta_0,\ldots,\Theta_{n-1})}{\mathbb{E}} \left[\prod_{i=0}^{n-1} \paren{1 - L_i}\right],
\end{align}
where the first inequality can be deduced as follows: By Jensen's inequality,
\begin{align}
    \abs{\underset{(\Theta_0,\ldots,\Theta_{n-1})}{\mathbb{E}} K^n_{(\Theta_0,\ldots,\Theta_{n-1})}(x_0, A) - \Pi(A)} \leq \underset{(\Theta_0,\ldots,\Theta_{n-1})}{\mathbb{E}} \abs{K^n_{(\Theta_0,\ldots,\Theta_{n-1})}(x_0, A) - \Pi(A)}.
\end{align}
Moreover,
\begin{align}
    &\abs{K^n_{(\Theta_0,\ldots,\Theta_{n-1})}(x_0, A) - \Pi(A)} \leq \sup_{A\in\mathfrak{B}(\mathcal{X})} \abs{K^n_{(\Theta_0,\ldots,\Theta_{n-1})}(x_0, A) - \Pi(A)} \\
    \implies& \underset{(\Theta_0,\ldots,\Theta_{n-1})}{\mathbb{E}} \abs{K^n_{(\Theta_0,\ldots,\Theta_{n-1})}(x_0, A) - \Pi(A)} \leq \underset{(\Theta_0,\ldots,\Theta_{n-1})}{\mathbb{E}} \sup_{A\in\mathfrak{B}(\mathcal{X})} \abs{K^n_{(\Theta_0,\ldots,\Theta_{n-1})}(x_0, A) - \Pi(A)} \\
    \implies& \sup_{A\in\mathfrak{B}(\mathcal{X})} \underset{(\Theta_0,\ldots,\Theta_{n-1})}{\mathbb{E}} \abs{K^n_{(\Theta_0,\ldots,\Theta_{n-1})}(x_0, A) - \Pi(A)} \leq \underset{(\Theta_0,\ldots,\Theta_{n-1})}{\mathbb{E}} \sup_{A\in\mathfrak{B}(\mathcal{X})} \abs{K^n_{(\Theta_0,\ldots,\Theta_{n-1})}(x_0, A) - \Pi(A)}
\end{align}
The second inequality follows from \cref{cor:deterministic-geometric-ergodicity} as follows:
\begin{align}
    &\Vert K^n_{(\Theta_0,\ldots,\Theta_{n-1})}(x_0, \cdot) - \Pi(\cdot)\Vert_{\mathrm{TV}} \leq 2\prod_{i=0}^{n-1} (1 - L_i) \\
    \implies& \underset{(\Theta_0,\ldots,\Theta_{n-1})}{\mathbb{E}} \Vert K^n_{(\Theta_0,\ldots,\Theta_{n-1})}(x_0, \cdot) - \Pi(\cdot)\Vert_{\mathrm{TV}} \leq 2\underset{(\Theta_0,\ldots,\Theta_{n-1})}{\mathbb{E}} \left[\prod_{i=0}^{n-1} (1 - L_i)\right].
\end{align}
\end{proof}

\clearpage
\section{Proofs Concerning Continuity of Independent Metropolis-Hastings Transition Kernels}\label{app:proofs-concerning-continuity}


\begin{theorem}\label{thm:continuous-mapping-theorem}
  Let $f$ be a continuous function from the metric space $(\mathcal{X}, d_{\mathcal{X}})$ to the metric space $(\mathcal{Y},d_{\mathcal{Y}})$. If $(X_0, X_1, \ldots)$ is a sequence of $\mathcal{X}$-valued random variables converging in probability to the random variable $X$ then $(f(X_0), f(X_1),\ldots)$ is a sequence of $\mathcal{Y}$-valued converging in probability to $f(X)$.
\end{theorem}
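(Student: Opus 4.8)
The statement is the classical continuous mapping theorem for convergence in probability, and the only real subtlety is that $f$ is assumed merely continuous rather than uniformly continuous, so its modulus of continuity varies from point to point; the plan is to localize this modulus through an exceptional ``bad set'' that $X$ avoids with high probability. Concretely, fix $\epsilon>0$; it suffices to show $\mathrm{Pr}\left[d_{\mathcal{Y}}(f(X_n),f(X))>\epsilon\right]\to 0$, since establishing this for every $\epsilon>0$ is equivalent to convergence in probability of $(f(X_n))$ to $f(X)$.

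For each $\delta>0$ define
\[
  A_\delta=\{\,x\in\mathcal{X}:\ \text{there exists }y\in\mathcal{X}\text{ with }d_{\mathcal{X}}(x,y)<\delta\text{ and }d_{\mathcal{Y}}(f(x),f(y))>\epsilon\,\}.
\]
First I would record three facts about $A_\delta$. (i) $A_\delta$ is open: if $x\in A_\delta$ with witness $y$, then for $x'$ sufficiently close to $x$ one still has $d_{\mathcal{X}}(x',y)<\delta$ by the triangle inequality, and, using continuity of $f$ at $x$, $d_{\mathcal{Y}}(f(x'),f(y))>\epsilon$; hence $x'\in A_\delta$. In particular $A_\delta\in\mathfrak{B}(\mathcal{X})$. (ii) $A_\delta$ is nonincreasing in $\delta$. (iii) $\bigcap_{\delta>0}A_\delta=\emptyset$: a point lying in every $A_\delta$ would admit points $y$ arbitrarily close to it whose images stay at distance greater than $\epsilon$ from $f(x)$, contradicting continuity of $f$ at $x$. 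Next I would note the event inclusion
\[
  \{\,d_{\mathcal{Y}}(f(X_n),f(X))>\epsilon\,\}\subseteq\{\,X\in A_\delta\,\}\cup\{\,d_{\mathcal{X}}(X_n,X)\geq\delta\,\},
\]
valid for every $\delta>0$, because if the left-hand event holds and $d_{\mathcal{X}}(X_n,X)<\delta$ then $y=X_n$ witnesses $X\in A_\delta$.

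To finish, fix $\eta>0$. By (ii), (iii) and continuity from above of the law of $X$ (a finite measure), $\mathrm{Pr}\left[X\in A_\delta\right]\downarrow 0$ as $\delta\downarrow 0$, so choose $\delta>0$ with $\mathrm{Pr}\left[X\in A_\delta\right]<\eta/2$. By convergence of $(X_n)$ to $X$ in probability, there is $N$ with $\mathrm{Pr}\left[d_{\mathcal{X}}(X_n,X)\geq\delta\right]<\eta/2$ for all $n\geq N$. The event inclusion and a union bound then give $\mathrm{Pr}\left[d_{\mathcal{Y}}(f(X_n),f(X))>\epsilon\right]<\eta$ for $n\geq N$. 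Since $\eta>0$ and $\epsilon>0$ were arbitrary, the claim follows. The one place demanding care is fact (i): working with strict inequalities in the definition of $A_\delta$ is precisely what makes it open, hence Borel, so that $\mathrm{Pr}\left[X\in A_\delta\right]$ is well defined; without this one would be forced to pass to outer measure.
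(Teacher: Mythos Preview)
The paper does not actually prove this theorem: it is stated in the appendix as the classical continuous mapping theorem and then immediately invoked in the proof of \cref{lem:convergence-diminishing-adaptation}, with no argument supplied. So there is no ``paper's own proof'' to compare against; you have supplied a genuine proof where the authors simply cite the result.

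Your argument is a correct and standard one. One small wording slip: in fact (ii) you write that $A_\delta$ is \emph{nonincreasing} in $\delta$, but the opposite holds---if $\delta_1<\delta_2$ then any witness $y$ for $x\in A_{\delta_1}$ is also a witness for $x\in A_{\delta_2}$, so $A_{\delta_1}\subseteq A_{\delta_2}$ and the family is nondecreasing in $\delta$. This is harmless for the proof, since what you actually use in the final paragraph is that the sets $A_\delta$ \emph{decrease to $\emptyset$ as $\delta\downarrow 0$}, which is exactly the monotonicity that does hold and is what continuity from above requires. It would be cleaner to state (ii) as ``$A_{\delta_1}\subseteq A_{\delta_2}$ whenever $\delta_1\le\delta_2$'' and then take a sequence $\delta_k\downarrow 0$ explicitly when invoking continuity from above.
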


\begin{proof}[Proof of \Cref{lem:convergence-diminishing-adaptation}]
Given that $\theta\mapsto K_{\theta}$ is continuous, if $(\Theta_0,\Theta_1,\ldots)$ converges in probability to $\Theta$, we have immediately from \cref{thm:continuous-mapping-theorem} that $(K_{\Theta_0},K_{\Theta_1},\ldots)$ converges in probability to $K_{\Theta}$. This means that for all $\epsilon>0$ and $\delta>0$, there exists $N(\epsilon,\delta)\in\mathbb{N}$ such that $\mathrm{Pr}\left[d(K_{\Theta_n}, K_{\Theta}) < \epsilon\right] \geq 1-\delta$ for every $n\geq N$. For fixed $\epsilon > 0$ and $\delta > 0$, set $n \geq N(\epsilon/2, \delta)$ so that $\mathrm{Pr}\left[d(K_{\Theta_n}, K_{\Theta}) < \epsilon/2\right]\geq 1-\delta$. Thus,
\begin{align}
    \mathrm{Pr}\left[d(K_{\Theta_n}, K_{\Theta_{n+1}}) < \epsilon\right] &\geq \mathrm{Pr}\left[d(K_{\Theta_n}, K_{\Theta}) + d(K_{\Theta_{n+1}}, K_{\Theta} < \epsilon)\right] \\
    &\geq \mathrm{Pr}\left[d(K_{\Theta_n}, K_{\Theta}) < \epsilon / 2 ~\mathrm{and}~ d(K_{\Theta_{n+1}}, K_{\Theta}) < \epsilon / 2\right] \\
    &\geq \mathrm{Pr}\left[d(K_{\Theta_n}, K_{\Theta}) < \epsilon / 2\right] + \mathrm{Pr}\left[d(K_{\Theta_{n+1}}, K_{\Theta}) < \epsilon / 2\right] - 1 \\
    &\geq 1 - \delta + 1 - \delta - 1 \\
    &= 1 - 2\delta.
\end{align}
This establishes diminishing adaptation in the sense of \cref{def:diminishing-adaptation}.
\end{proof}

\begin{proof}[Proof of \Cref{prop:transition-kernel-distance}]
  To prove symmetry we write,
  \begin{align}
    d(K, K') &= \sup_{x\in\mathcal{X}} \Vert K(x, \cdot) - K'(x, \cdot)\Vert_{\mathrm{TV}} \\
    &= \sup_{x\in\mathcal{X}} \Vert K'(x, \cdot) - K(x, \cdot)\Vert_{\mathrm{TV}} \\
    &= d(K', K).
  \end{align}
  Identifiability follows from the definition of equality of Markov chain kernels given in \cref{def:transition-kernel-equality}. The triangle inequality is then proven as follows. Let $K''$ be another transition kernel on $\mathcal{X}\times\mathfrak{B}(\mathcal{X})$.
  \begin{align}
    d(K, K') &= \sup_{x\in\mathcal{X}} \Vert K(x, \cdot) - K'(x, \cdot)\Vert_{\mathrm{TV}} \\
    &\leq \sup_{x\in\mathcal{X}} \paren{\Vert K(x, \cdot) - K''(x, \cdot)\Vert_{\mathrm{TV}} + \Vert K''(x,\cdot) - K'(x,\cdot)\Vert_{\mathrm{TV}}} \\
    &\leq \sup_{x\in\mathcal{X}} \Vert K(x, \cdot) - K''(x, \cdot)\Vert_{\mathrm{TV}} + \sup_{x\in\mathcal{X}}\Vert K''(x,\cdot) - K'(x,\cdot)\Vert_{\mathrm{TV}} \\
    &= d(K, K'') + d(K'', K').
  \end{align}
\end{proof}

In the sequel, we will limit our discussion to the transition kernel of the independent Metropolis-Hastings sampler. Recall that this transition kernel has the following form,
\begin{align}
    K_{\theta}(x, A) = \int_A \alpha_{\theta}(x, y)\tilde{\pi}_{\theta}(y)~\mu(\mathrm{d}y) + \paren{1 - \int_{\mathcal{X}} \alpha_{\theta}(x, w)\tilde{\pi}_{\theta}(w) ~\mu(\mathrm{d}w)}\mathbf{1}\set{x\in A},
\end{align}
where
\begin{align}
    \alpha_{\theta}(x, y) = \min\set{1, \frac{\pi(y)\tilde{\pi}_{\theta}(x)}{\pi(x)\tilde{\pi}_{\theta}(y)}}.
\end{align}

\begin{lemma}\label{lem:convergence-independent-metropolis-hastings}
  Let $(\theta_1,\theta_2,\ldots)$ be a $\mathcal{Y}$-valued sequence converging to $\theta$. If for all $x\in\mathcal{X}$ and $A\in\mathfrak{B}(\mathcal{X})$ we have 
  \begin{align}
      \lim_{n\to\infty} \int_A \alpha_{\theta_n}(x, y) \tilde{\pi}_{\theta_n}(y)~\mu(\mathrm{d}y) &= \int_A \lim_{n\to\infty}\left[\alpha_{\theta_n}(x, y) \tilde{\pi}_{\theta_n}(y)\right] ~\mu(\mathrm{d}y) \\
      &= \int_A \alpha_{\theta}(x, y) \tilde{\pi}_{\theta}(y) ~\mu(\mathrm{d}y).
  \end{align}
  then $\lim_{n\to\infty} K_{\theta_n} = K_\theta$.
\end{lemma}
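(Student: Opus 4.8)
The plan is to bound the transition-kernel metric $d(K_{\theta_n},K_\theta)=\sup_{x\in\mathcal{X}}\Vert K_{\theta_n}(x,\cdot)-K_\theta(x,\cdot)\Vert_{\mathrm{TV}}$ by an $L^1(\mu)$ distance between the accepted-transition densities and then apply Scheffé's lemma, feeding in the hypothesis essentially only through the case $A=\mathcal{X}$. First I would fix $x$, abbreviate $a_\theta(x,A)=\int_A\alpha_\theta(x,y)\tilde{\pi}_\theta(y)\,\mu(\mathrm{d}y)$ and $r_\theta(x)=1-a_\theta(x,\mathcal{X})$, so that \cref{eq:independent-metropolis-transition-kernel} reads $K_\theta(x,\cdot)=a_\theta(x,\cdot)+r_\theta(x)\,\delta_x$. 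With $f_n(y):=\alpha_{\theta_n}(x,y)\tilde{\pi}_{\theta_n}(y)$ and $f(y):=\alpha_\theta(x,y)\tilde{\pi}_\theta(y)$, the signed measure $K_{\theta_n}(x,\cdot)-K_\theta(x,\cdot)$ has total mass zero, $\mu$-density $f_n-f$ away from $x$, and mass $r_{\theta_n}(x)-r_\theta(x)=-\int_{\mathcal{X}}(f_n-f)\,\mathrm{d}\mu$ at $x$; so, by the total variation representation recalled in \cref{app:total-variation-review},
\[
  \Vert K_{\theta_n}(x,\cdot)-K_\theta(x,\cdot)\Vert_{\mathrm{TV}}\ \le\ 2\int_{\mathcal{X}}\abs{\alpha_{\theta_n}(x,y)\tilde{\pi}_{\theta_n}(y)-\alpha_\theta(x,y)\tilde{\pi}_\theta(y)}\,\mu(\mathrm{d}y),
\]
i.e. the atomic ``holding'' discrepancy is automatically dominated by that of the continuous accepted parts, and it remains to control the latter in $L^1(\mu)$.

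For fixed $x$ this is Scheffé's lemma: $f_n\ge 0$; $f_n\to f$ pointwise (continuity of $\theta\mapsto\tilde{\pi}_\theta$ and strict positivity of the proposal densities make $\alpha$ continuous in its parameter, which is exactly the middle equality in the hypothesis); and, taking $A=\mathcal{X}$, the hypothesis gives $\int_{\mathcal{X}}f_n\,\mathrm{d}\mu\to\int_{\mathcal{X}}f\,\mathrm{d}\mu\le 1<\infty$. Hence $\int_{\mathcal{X}}\abs{f_n-f}\,\mathrm{d}\mu\to 0$, and therefore $\Vert K_{\theta_n}(x,\cdot)-K_\theta(x,\cdot)\Vert_{\mathrm{TV}}\to 0$ for each $x$.

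The remaining step --- making this convergence uniform in $x$, so as to get $d(K_{\theta_n},K_\theta)\to 0$ --- is the one I expect to be the main obstacle, since Scheffé's lemma is not quantitative. The structural fact I would exploit is that $x$ enters $f_n$ only through the scalar $c_n(x):=\tilde{\pi}_{\theta_n}(x)/\pi(x)$, because $f_n(y)=\min\{\tilde{\pi}_{\theta_n}(y),\,c_n(x)\pi(y)\}$. Splitting $\abs{f_n(y)-f(y)}\le\abs{\tilde{\pi}_{\theta_n}(y)-\tilde{\pi}_\theta(y)}+\abs{\min\{\tilde{\pi}_\theta(y),c_n(x)\pi(y)\}-\min\{\tilde{\pi}_\theta(y),c(x)\pi(y)\}}$ with $c(x):=\tilde{\pi}_\theta(x)/\pi(x)$, the first term integrates to $\Vert\tilde{\pi}_{\theta_n}-\tilde{\pi}_\theta\Vert_{L^1(\mu)}$, which vanishes by Scheffé applied to the proposals and is already uniform in $x$; the integral of the second term equals $\abs{F(c_n(x))-F(c(x))}$, where $F(\beta):=\int_{\mathcal{X}}\min\{\tilde{\pi}_\theta(y),\beta\pi(y)\}\,\mu(\mathrm{d}y)$ is concave, nondecreasing, and extends to a (hence uniformly) continuous function on $[0,\infty]$ with $F(\infty)=1$. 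One is then reduced to showing $c_n(x)\to c(x)$ uniformly in $x$ in the metric of $[0,\infty]$; this is exactly where one needs more than pointwise continuity of $\theta\mapsto\tilde{\pi}_\theta$ (e.g.\ uniform convergence $\tilde{\pi}_{\theta_n}\to\tilde{\pi}_\theta$ together with a positive lower bound on $\pi+\tilde{\pi}_\theta$, both automatic if $\mathcal{X}$ is compact and the densities are continuous and strictly positive), and verifying it for the concrete normalizing-flow families is, I expect, where the real work lies.
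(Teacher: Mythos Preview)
Your approach is genuinely different from the paper's and, in an important respect, more careful. The paper's proof simply writes
\[
\lim_{n\to\infty} d(K_{\theta_n},K_\theta)=d\bigl(\lim_{n\to\infty}K_{\theta_n},K_\theta\bigr)=\sup_{x\in\mathcal{X}}\sup_{A\in\mathfrak{B}(\mathcal{X})}\bigl|\lim_{n\to\infty}K_{\theta_n}(x,A)-K_\theta(x,A)\bigr|,
\]
invoking ``continuity of the distance function,'' then verifies $\lim_n K_{\theta_n}(x,A)=K_\theta(x,A)$ for each fixed $(x,A)$ directly from the hypothesis, and concludes. That is, the paper freely interchanges $\lim_n$ with $\sup_{x,A}$; it never justifies why pointwise convergence of $K_{\theta_n}(x,A)$ should upgrade to convergence in the uniform metric $d$.

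Your route via the $L^1$ bound and Scheff\'e is a real improvement on one axis: because you control $\int_{\mathcal{X}}|f_n-f|\,\mathrm{d}\mu$ rather than $|\int_A f_n-\int_A f|$ set-by-set, you obtain uniformity over $A\in\mathfrak{B}(\mathcal{X})$ rigorously and for free, something the paper's argument does not deliver. You then correctly isolate the remaining obstruction --- uniformity in $x$ --- and propose a structural attack through the scalar $c_n(x)=\tilde\pi_{\theta_n}(x)/\pi(x)$ and the concave envelope function $F$. This analysis goes strictly beyond what the paper offers; the paper's proof simply does not confront the $x$-uniformity issue at all. So your identification of that step as ``where the real work lies'' is apt, and the additional hypotheses you flag (compactness, continuity, strict positivity) are indeed what one would need to close it; the paper does not supply them within this lemma either.
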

\begin{proof}
By continuity of the distance function,
\begin{align}
    \lim_{n\to\infty} d(K_{\theta_n}, K_\theta) &= d(\lim_{n\to\infty} K_{\theta_n}, K_\theta) \\
    &= \sup_{x\in\mathcal{X}} \sup_{A\in\mathfrak{B}(\mathcal{X})} \abs{\lim_{n\to\infty} K_{\theta_n}(x, A) - K_\theta(x, A)}.
\end{align}
Therefore,
\begin{align}
    \lim_{n\to\infty} K_{\theta_n}(x, A) &= \lim_{n\to\infty}\paren{\int_A \alpha_{\theta_n}(x, y)\tilde{\pi}_{\theta_n}(y)~\mu(\mathrm{d}y) + \paren{1 - \int_{\mathcal{X}} \alpha_{\theta_n}(x, w)\tilde{\pi}_{\theta_n}(w) ~\mu(\mathrm{d}w)}\mathbf{1}\set{x\in A}} \\
    &= \lim_{n\to\infty} \int_A \alpha_{\theta_n}(x, y)\tilde{\pi}_{\theta_n}(y)~\mu(\mathrm{d}y) + \lim_{n\to\infty} \paren{1 - \int_{\mathcal{X}} \alpha_{\theta_n}(x, w)\tilde{\pi}_{\theta_n}(w) ~\mu(\mathrm{d}w)}\mathbf{1}\set{x\in A} \\
    &= \int_A \lim_{n\to\infty} \left[\alpha_{\theta_n}(x, y)\tilde{\pi}_{\theta_n}(y)\right]~\mu(\mathrm{d}y) + \paren{1 - \int_{\mathcal{X}} \lim_{n\to\infty}\left[\alpha_{\theta_n}(x, w)\tilde{\pi}_{\theta_n}(w)\right] ~\mu(\mathrm{d}w)}\mathbf{1}\set{x\in A} \\
    &= \int_A \alpha_{\theta}(x, y)\tilde{\pi}_{\theta}(y)~\mu(\mathrm{d}y) + \paren{1 - \int_{\mathcal{X}} \alpha_{\theta}(x, w)\tilde{\pi}_{\theta}(w) ~\mu(\mathrm{d}w)}\mathbf{1}\set{x\in A} \\
    &= K_{\theta}(x, A).
\end{align}
Finally,
\begin{align}
    \lim_{n\to\infty} d(K_{\theta_n}, K_\theta) &= \sup_{x\in\mathcal{X}} \sup_{A\in\mathfrak{B}(\mathcal{X})} \abs{\lim_{n\to\infty} K_{\theta_n}(x, A) - K_\theta(x, A)} \\
    &= \sup_{x\in\mathcal{X}} \sup_{A\in\mathfrak{B}(\mathcal{X})} \abs{K_{\theta}(x, A) - K_\theta(x, A)} \\
    &= 0.
\end{align}
\end{proof}

The following result is called Scheff\'{e}'s lemma; see \citet{lebanon-probability,pollard_2001}.
\begin{lemma}\label{lem:scheffes-lemma}
  Let $\pi_n$ be a sequence of probability densities that converge pointwise to another density $\pi$. Then, let $\Pi(A) = \int_A \pi(x)~\mu(\mathrm{d}x)$ and $\Pi_n(A) = \int_A \pi_n(x)~\mu(\mathrm{d}x)$ be the measures whose densities are $\pi$ and $\pi_n$ with respect to dominating measure $\mu$, respectively. Then $\lim_{n\to\infty}\Vert \Pi(\cdot) - \Pi_n(\cdot)\Vert_{\mathrm{TV}} = 0$.
\end{lemma}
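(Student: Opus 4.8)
The plan is to reduce the total variation statement to an $L^1$ convergence statement and then exploit a truncation trick. First I would record the identity that, under the normalization of \Cref{def:total-variation-norm}, whenever $\Pi$ and $\Pi_n$ have densities $\pi$ and $\pi_n$ with respect to $\mu$ one has $\Vert \Pi(\cdot) - \Pi_n(\cdot)\Vert_{\mathrm{TV}} = \int_{\mathcal{X}} \abs{\pi(x) - \pi_n(x)}~\mu(\mathrm{d}x)$. Indeed $\abs{\Pi(A) - \Pi_n(A)} = \abs{\int_A (\pi - \pi_n)~\mathrm{d}\mu}$ is maximized over $A\in\mathfrak{B}(\mathcal{X})$ by taking $A = \set{x : \pi(x) \geq \pi_n(x)}$, where it equals $\int (\pi - \pi_n)^+~\mathrm{d}\mu$, and its complement contributes the equal amount $\int (\pi-\pi_n)^-~\mathrm{d}\mu$; the factor $2$ in \Cref{def:total-variation-norm} then converts the supremum into the full $L^1$ norm. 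So it suffices to show $\int_{\mathcal{X}} \abs{\pi - \pi_n}~\mathrm{d}\mu \to 0$.

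Next I would introduce $g_n(x) = (\pi(x) - \pi_n(x))^+ = \max\set{\pi(x) - \pi_n(x), 0}$ and make two observations. First, $0 \leq g_n(x) \leq \pi(x)$ for every $x$: when the positive part is active $g_n = \pi - \pi_n \leq \pi$ since $\pi_n \geq 0$, and otherwise $g_n = 0 \leq \pi$; moreover $\pi$ is $\mu$-integrable with $\int_{\mathcal{X}} \pi~\mathrm{d}\mu = 1 < \infty$. Second, $g_n(x) \to 0$ pointwise as $n\to\infty$, because $\pi_n(x) \to \pi(x)$ for every $x$ by hypothesis. The dominated convergence theorem then gives $\lim_{n\to\infty} \int_{\mathcal{X}} g_n~\mu(\mathrm{d}x) = 0$.

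Finally I would close the loop using that both densities integrate to $1$, so $\int_{\mathcal{X}} (\pi - \pi_n)~\mu(\mathrm{d}x) = 0$, whence $\int (\pi - \pi_n)^+~\mathrm{d}\mu = \int (\pi - \pi_n)^-~\mathrm{d}\mu$. Therefore $\int_{\mathcal{X}} \abs{\pi - \pi_n}~\mathrm{d}\mu = \int (\pi-\pi_n)^+~\mathrm{d}\mu + \int (\pi-\pi_n)^-~\mathrm{d}\mu = 2 \int_{\mathcal{X}} g_n~\mathrm{d}\mu \to 0$, and combining with the first step yields $\Vert \Pi(\cdot) - \Pi_n(\cdot)\Vert_{\mathrm{TV}} \to 0$.

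The step that actually requires an idea, rather than bookkeeping, is the choice of dominating function in the middle paragraph: a direct attempt to dominate $\abs{\pi - \pi_n}$ fails because there is no a priori $\mu$-integrable bound on the sequence $\pi_n$. Passing to the positive part $g_n$, which is automatically dominated by the fixed integrable density $\pi$, and then recovering the full $L^1$ norm via the cancellation $\int(\pi - \pi_n)~\mathrm{d}\mu = 0$, is precisely what makes dominated convergence applicable; I expect this to be the crux of the argument.
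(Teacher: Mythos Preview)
Your argument is correct and is in fact the classical proof of Scheff\'{e}'s lemma: reduce total variation to the $L^1$ distance between densities, dominate the positive part $(\pi-\pi_n)^+$ by the fixed integrable function $\pi$, apply dominated convergence, and then use $\int(\pi-\pi_n)\,\mathrm{d}\mu=0$ to recover the full $L^1$ norm.

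There is nothing to compare against in the paper itself: the authors do not prove \cref{lem:scheffes-lemma} but simply state it and defer to the references \citet{lebanon-probability,pollard_2001}. Your proof is exactly the standard argument one finds in those sources, so it is entirely appropriate here.
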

We will also require the following theorem from \citet[Page 270]{alma991018098349703276}.
\begin{theorem}\label{thm:variable-measure-dominated-convergence}
  Let $(\mathcal{X},\mathfrak{B}(\mathcal{X}))$ be a measurable space and let $(\Pi_n)_{n\in\mathbb{N}}$ be a sequence of probability measures converging to the probability measure $\Pi$. Let $\alpha_n: \mathcal{X}\to\R$ and $\beta_n : \mathcal{X}\to\R$ be two sequences of functions converging pointwise to the functions $\alpha$ and $\beta$, respectively. Suppose further that $\abs{\alpha_n(x)} \leq \beta_n(x)$ for every $x\in\mathcal{X}$ and that,
  \begin{align}
      \lim_{n\to\infty} \int_{\mathcal{X}} \beta_n(x) ~\Pi_n(\mathrm{d}x) = \int_{\mathcal{X}} \beta(x)~\Pi(\mathrm{d}x) < \infty.
  \end{align}
  Then,
  \begin{align}
      \lim_{n\to\infty} \int_{A} \alpha_n(x)~\Pi_n(\mathrm{d}x) = \int_A \alpha(x) ~\Pi(\mathrm{d}x), 
  \end{align}
  for $A\in\mathfrak{B}(\mathcal{X})$
\end{theorem}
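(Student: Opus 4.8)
The plan is to prove this as a \emph{generalized dominated convergence theorem for varying measures}, reading the hypothesis ``$(\Pi_n)_{n\in\mathbb{N}}$ converges to $\Pi$'' in the total-variation sense; this is the form actually needed in \cref{lem:convergence-independent-metropolis-hastings}, since Scheff\'e's lemma (\cref{lem:scheffes-lemma}) delivers precisely total-variation convergence of $\tilde{\Pi}_{\theta_n}$ to $\tilde{\Pi}_\theta$. Everything rests on one elementary observation: if $\Vert\Pi_n-\Pi\Vert_{\mathrm{TV}}\to 0$ and $(h_n)_{n\in\mathbb{N}}$ converges pointwise to $h$ with $C = \sup_n \Vert h_n\Vert_\infty < \infty$, then $\int_A h_n\,\Pi_n(\mathrm{d}x)\to\int_A h\,\Pi(\mathrm{d}x)$ for every $A\in\mathfrak{B}(\mathcal{X})$. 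Indeed, $\int_A h_n\,\Pi_n(\mathrm{d}x) = \int_A h_n\,(\Pi_n-\Pi)(\mathrm{d}x) + \int_A h_n\,\Pi(\mathrm{d}x)$; the first term has absolute value at most $C\,\Vert\Pi_n-\Pi\Vert_{\mathrm{TV}}\to 0$ by the dual characterization of the total-variation norm recalled in \cref{app:total-variation-review}, and the second converges to $\int_A h\,\Pi(\mathrm{d}x)$ by the ordinary bounded convergence theorem, since the constant $C$ dominates and $\Pi$ is a probability measure.

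Second, I would upgrade this to a Fatou-type inequality for varying measures: if $g_n\geq 0$, $g_n\to g$ pointwise, and $\Pi_n\to\Pi$ in total variation, then $\liminf_n \int_A g_n\,\Pi_n(\mathrm{d}x)\geq\int_A g\,\Pi(\mathrm{d}x)$. Fix $M>0$; the truncations $g_n\wedge M$ are uniformly bounded by $M$ and converge pointwise to $g\wedge M$, so the observation above gives $\int_A (g_n\wedge M)\,\Pi_n(\mathrm{d}x)\to\int_A(g\wedge M)\,\Pi(\mathrm{d}x)$, whence $\liminf_n\int_A g_n\,\Pi_n(\mathrm{d}x)\geq\int_A(g\wedge M)\,\Pi(\mathrm{d}x)$; now let $M\to\infty$ and apply monotone convergence on the right.

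The theorem then follows by the textbook generalized-dominated-convergence argument applied on the fixed set $A$. Note $\beta_n\geq|\alpha_n|\geq 0$, so $\beta\geq 0$ and $\beta\pm\alpha\geq 0$. A brief $\liminf/\limsup$ manipulation --- applying the Fatou inequality to $\beta_n$ on both $A$ and $A^c$ and combining with the hypothesis $\int_{\mathcal{X}}\beta_n\,\Pi_n(\mathrm{d}x)\to\int_{\mathcal{X}}\beta\,\Pi(\mathrm{d}x)<\infty$ --- promotes that hypothesis to $\int_A\beta_n\,\Pi_n(\mathrm{d}x)\to\int_A\beta\,\Pi(\mathrm{d}x)$. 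Applying the Fatou inequality to the nonnegative sequences $\beta_n+\alpha_n$ and $\beta_n-\alpha_n$ on $A$, and subtracting off $\int_A\beta_n\,\Pi_n(\mathrm{d}x)\to\int_A\beta\,\Pi(\mathrm{d}x)$ from each, then yields $\int_A\alpha\,\Pi(\mathrm{d}x)\leq\liminf_n\int_A\alpha_n\,\Pi_n(\mathrm{d}x)$ and $\limsup_n\int_A\alpha_n\,\Pi_n(\mathrm{d}x)\leq\int_A\alpha\,\Pi(\mathrm{d}x)$, which together give the conclusion.

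The main obstacle is the very first observation, and it is where the hypotheses get used essentially: for a fixed measure, bounded pointwise convergence of the integrands gives convergence of the integrals for free, but once the measure itself moves one needs uniform control of $\int_A h_n\,(\Pi_n-\Pi)(\mathrm{d}x)$, and total-variation convergence is exactly what supplies the sup-norm bound providing it --- under merely setwise or weak convergence the statement would fail without extra assumptions, so pinning down this reading of ``converging'' is the crux; everything after that is a routine adaptation of the classical proof. A minor technical wrinkle is that $\alpha_n$ and $\beta_n$ are only guaranteed $\Pi_n$-integrable once $\int_{\mathcal{X}}\beta_n\,\Pi_n(\mathrm{d}x)$ is finite, i.e., for $n$ large, which is harmless since every assertion concerns limits.
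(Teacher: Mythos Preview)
The paper does not prove this theorem; it merely quotes it as a known result from a textbook (the line preceding the statement reads ``We will also require the following theorem from \citet[Page 270]{alma991018098349703276}''). So there is no proof in the paper to compare against.

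Your argument is correct and self-contained. You read ``$\Pi_n\to\Pi$'' as convergence in total variation, which is exactly what \cref{lem:scheffes-lemma} supplies in the only place the theorem is invoked (\cref{cor:variable-measure-bounded-convergence}), so the interpretation is appropriate for the paper's purposes. The three-step structure --- bounded convergence for varying measures via the TV dual characterization, a truncation-based Fatou inequality, and then the standard $\beta_n\pm\alpha_n$ sandwich --- is clean and the details check out, including the $A/A^c$ splitting that upgrades the global $\beta$-hypothesis to convergence on the fixed set $A$. One small remark: the classical textbook formulation (e.g.\ Royden) is usually stated under the weaker hypothesis of \emph{setwise} convergence $\Pi_n(E)\to\Pi(E)$ for every $E\in\mathfrak{B}(\mathcal{X})$, and the proof there proceeds somewhat differently since the TV bound in your first step is then unavailable; your version trades a slightly stronger hypothesis for a more direct argument, and since the paper only ever applies the result downstream of Scheff\'e, nothing is lost.
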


\begin{lemma}\label{lem:continuous-pointwise}
  Suppose that for fixed $x\in\mathcal{X}$ the mapping $\theta\mapsto \tilde{\pi}_{\theta}(x)$ is continuous, that $y\in \mathcal{X}$, and that $\mathrm{Supp}(\tilde{\pi}_\theta) = \mathcal{X}$ for every $\theta\in\mathcal{Y}$. Let $(\theta_1,\theta_2,\ldots)$ be a $\mathcal{Y}$-valued sequence converging to $\theta$. Then $\lim_{n\to\infty} \alpha_{\theta_n}(x, y)\tilde{\pi}_{\theta_n}(y) =\alpha_{\theta}(x, y)\tilde{\pi}_{\theta}(y)$ pointwise.
\end{lemma}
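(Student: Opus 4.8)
The plan is to reduce the claim to the continuity of a minimum of two functions, each of which is continuous in $\theta$ by the standing hypothesis. The key observation is that the only apparent singularity in $\alpha_{\theta}(x,y) = \min\set{1, \frac{\pi(y)\tilde{\pi}_{\theta}(x)}{\pi(x)\tilde{\pi}_{\theta}(y)}}$, namely the division by $\tilde{\pi}_{\theta}(y)$, disappears once we multiply through by $\tilde{\pi}_{\theta}(y)$; the full-support hypothesis $\mathrm{Supp}(\tilde{\pi}_{\theta}) = \mathcal{X}$ guarantees $\tilde{\pi}_{\theta}(y) > 0$, so this manipulation is legitimate and the ratio defining $\alpha_{\theta}$ makes sense for every $\theta \in \mathcal{Y}$.

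Concretely, I would fix $x, y \in \mathcal{X}$ and first treat the case $\pi(x) > 0$. For every $\theta' \in \mathcal{Y}$, pushing $\tilde{\pi}_{\theta'}(y)$ inside the minimum gives
\[
  \alpha_{\theta'}(x, y)\,\tilde{\pi}_{\theta'}(y) \;=\; \min\set{\,\tilde{\pi}_{\theta'}(y),\ \frac{\pi(y)}{\pi(x)}\,\tilde{\pi}_{\theta'}(x)\,}.
\]
By hypothesis $\theta' \mapsto \tilde{\pi}_{\theta'}(z)$ is continuous at $\theta$ for each fixed $z$, so both $\theta' \mapsto \tilde{\pi}_{\theta'}(y)$ and $\theta' \mapsto \frac{\pi(y)}{\pi(x)}\tilde{\pi}_{\theta'}(x)$ are continuous at $\theta$ (the factor $\pi(y)/\pi(x)$ being a fixed finite constant). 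Since a pointwise minimum of finitely many continuous functions is continuous, the map $\theta' \mapsto \alpha_{\theta'}(x,y)\tilde{\pi}_{\theta'}(y)$ is continuous at $\theta$; as $\theta_n \to \theta$ this yields $\alpha_{\theta_n}(x,y)\tilde{\pi}_{\theta_n}(y) \to \alpha_{\theta}(x,y)\tilde{\pi}_{\theta}(y)$. In the degenerate case $\pi(x) = 0$ one has $\alpha_{\theta'}(x,y) = 1$ for all $\theta'$ under the usual convention that the acceptance probability equals $1$ when the current state carries zero target mass, so $\alpha_{\theta'}(x,y)\tilde{\pi}_{\theta'}(y) = \tilde{\pi}_{\theta'}(y)$, which is again continuous in $\theta'$, and the conclusion follows as before.

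I do not anticipate a real obstacle; the only care needed is the bookkeeping around denominators. The hypothesis $\mathrm{Supp}(\tilde{\pi}_{\theta}) = \mathcal{X}$ is exactly what makes $\tilde{\pi}_{\theta'}(y) > 0$, so that the ``multiply the minimum through'' step introduces no singularity, and the case $\pi(x) = 0$ has to be isolated (or simply excluded by restricting attention to $x \in \mathrm{Supp}(\pi)$, which is all that is ever needed when this lemma is used to establish \cref{lem:convergence-independent-metropolis-hastings} and \cref{thm:continuity-parameterization}).
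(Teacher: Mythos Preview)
Your argument is correct and follows essentially the same idea as the paper's proof: both reduce the claim to the fact that $\min$, products, and quotients of continuous functions are continuous, together with the full-support hypothesis to keep denominators nonzero. The only organizational difference is that the paper first shows $\alpha_{\theta_n}(x,y)\to\alpha_{\theta}(x,y)$ by pushing the limit through the quotient $\tilde{\pi}_{\theta_n}(x)/\tilde{\pi}_{\theta_n}(y)$ (using $\tilde{\pi}_{\theta}(y)>0$ for the reciprocal step) and then implicitly multiplies by $\tilde{\pi}_{\theta_n}(y)$, whereas you multiply through the $\min$ first to obtain $\min\{\tilde{\pi}_{\theta'}(y),\tfrac{\pi(y)}{\pi(x)}\tilde{\pi}_{\theta'}(x)\}$ and thereby avoid the reciprocal altogether. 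Your reorganization is marginally cleaner, and your explicit treatment of the case $\pi(x)=0$ is a detail the paper's chain of equalities leaves implicit.
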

\begin{proof}
\begin{align}
    \lim_{n\to\infty} \alpha_{\theta_n}(x, y) &= \lim_{n\to\infty} \min\set{1, \frac{\pi(y)\tilde{\pi}_{\theta_n}(x)}{\pi(x)\tilde{\pi}_{\theta_n}(y)}} \\
    &= \paren{\lim_{n\to\infty} \min\set{1, \frac{\pi(y)\tilde{\pi}_{\theta_n}(x)}{\pi(x)\tilde{\pi}_{\theta_n}(y)}}} \\
    &= \paren{\lim_{n\to\infty} \min\set{1, \frac{\pi(y)\tilde{\pi}_{\theta_n}(x)}{\pi(x)\tilde{\pi}_{\theta_n}(y)}}} \\
    &= \paren{ \min\set{1, \lim_{n\to\infty} \frac{\pi(y)\tilde{\pi}_{\theta_n}(x)}{\pi(x)\tilde{\pi}_{\theta_n}(y)}}} \\
    &= \paren{ \min\set{1, \frac{\pi(y)}{\pi(x)}\lim_{n\to\infty} \frac{\tilde{\pi}_{\theta_n}(x)}{\tilde{\pi}_{\theta_n}(y)}}} \\
    &= \paren{ \min\set{1, \frac{\pi(y)}{\pi(x)}\paren{\lim_{n\to\infty} \tilde{\pi}_{\theta_n}(x)} \paren{\lim_{n\to\infty} \frac{1}{\tilde{\pi}_{\theta_n}(y)}}}} \\
    \label{eq:continuity-full-support} &= \paren{ \min\set{1, \frac{\pi(y)}{\pi(x)}\paren{ \tilde{\pi}_{\theta}(x)} \paren{ \frac{1}{\lim_{n\to\infty} \tilde{\pi}_{\theta_n}(y)}}}} \\
    &= \paren{ \min\set{1, \frac{\pi(y)}{\pi(x)}\paren{ \tilde{\pi}_{\theta}(x)} \paren{ \frac{1}{ \tilde{\pi}_{\theta}(y)}}}} \\
    &= \alpha_{\theta}(x, y).
\end{align}
The assumption that $\mathrm{Supp}(\tilde{\pi}_\theta) = \mathcal{X}$ is used in \cref{eq:continuity-full-support}.
\end{proof}

\begin{corollary}\label{cor:variable-measure-bounded-convergence}
  Let $(\theta_1,\theta_2,\ldots)$ be a $\mathcal{Y}$-valued sequence converging to $\theta$.  Let $\pi$ be a probability density function on a compact space $\mathcal{X}$ and let $\tilde{\pi}_{\theta}$ be a family of density functions on $\mathcal{X}$ indexed by $\theta$ such that the map $\theta\mapsto \tilde{\pi}_{\theta}$ is continuous (i.e. $\pi_{\theta_n}\to \pi_{\theta}$). Assume further that $\mathrm{Supp}(\tilde{\pi}_\theta) = \mathcal{X}$ for every $\theta\in\mathcal{Y}$.
  Let $x\in\mathcal{X}$ be fixed and let $y\in\mathcal{X}$. Define
  \begin{align}
      \alpha_{\theta}(x, y) = \min\set{1, \frac{\pi(y)\tilde{\pi}_{\theta}(x)}{\pi(x)\tilde{\pi}_{\theta}(y)}}.
  \end{align}
  Then,
  \begin{align}
      \lim_{n\to\infty} \int_A \alpha_{\theta_n}(x, y) \tilde{\pi}_{\theta_n}(y)~\mu(\mathrm{d}y) = \int_A \alpha_{\theta}(x, y) \tilde{\pi}_{\theta}(y)~\mu(\mathrm{d}y)
  \end{align}
\end{corollary}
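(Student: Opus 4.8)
The plan is to invoke the variable-measure dominated convergence theorem, \Cref{thm:variable-measure-dominated-convergence}, with the sequence of probability measures $\tilde{\Pi}_{\theta_n}$ playing the role of $\Pi_n$ and the functions $\alpha_{\theta_n}(x,\cdot)$ playing the role of $\alpha_n$. First I would rewrite the quantities of interest in measure form: since $\tilde{\Pi}_{\theta_n}$ has density $\tilde{\pi}_{\theta_n}$ with respect to $\mu$, we have $\int_A \alpha_{\theta_n}(x,y)\tilde{\pi}_{\theta_n}(y)\,\mu(\mathrm{d}y) = \int_A \alpha_{\theta_n}(x,y)\,\tilde{\Pi}_{\theta_n}(\mathrm{d}y)$, and likewise the limiting object equals $\int_A \alpha_\theta(x,y)\,\tilde{\Pi}_\theta(\mathrm{d}y)$. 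It then suffices to verify the three hypotheses of \Cref{thm:variable-measure-dominated-convergence}: convergence of the measures, pointwise convergence of the integrands, and the existence of a dominating sequence of functions whose integrals converge to a finite limit.

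Convergence of the measures follows from Scheff\'e's lemma, \Cref{lem:scheffes-lemma}: continuity of $\theta\mapsto\tilde{\pi}_\theta$ gives pointwise convergence $\tilde{\pi}_{\theta_n}\to\tilde{\pi}_\theta$, hence $\tilde{\Pi}_{\theta_n}\to\tilde{\Pi}_\theta$ in total variation and so, a fortiori, as probability measures. Pointwise convergence $\alpha_{\theta_n}(x,y)\to\alpha_\theta(x,y)$ is precisely the computation carried out in the proof of \Cref{lem:continuous-pointwise}, which uses the full-support hypothesis $\mathrm{Supp}(\tilde{\pi}_\theta)=\mathcal{X}$ in order to pass the limit through the reciprocal $1/\tilde{\pi}_{\theta_n}(y)$. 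For the dominating sequence, the key observation is that $\alpha_{\theta_n}(x,y)\le 1$ for every $n$ and every $y$ by the very definition of the acceptance probability; we may therefore take the constant functions $\beta_n\equiv 1\equiv\beta$, for which $|\alpha_{\theta_n}(x,y)|\le\beta_n(y)$ holds trivially and $\int_{\mathcal{X}}\beta_n\,\tilde{\Pi}_{\theta_n}(\mathrm{d}y)=1=\int_{\mathcal{X}}\beta\,\tilde{\Pi}_\theta(\mathrm{d}y)<\infty$.

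With all three hypotheses verified, \Cref{thm:variable-measure-dominated-convergence} yields $\lim_{n\to\infty}\int_A\alpha_{\theta_n}(x,y)\,\tilde{\Pi}_{\theta_n}(\mathrm{d}y)=\int_A\alpha_\theta(x,y)\,\tilde{\Pi}_\theta(\mathrm{d}y)$, which upon rewriting in terms of densities is exactly the claimed identity. I do not anticipate a genuine obstacle: the only step requiring care is the choice of the dominating sequence, and the uniform bound $\alpha_{\theta_n}\le 1$ makes the constant choice work at once. This is also the reason the compactness of $\mathcal{X}$, though assumed in the statement, is not actually exploited by this particular argument; it would be relevant only if one had to produce an integrable envelope that were not already bounded.
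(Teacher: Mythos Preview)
Your proposal is correct and matches the paper's own proof essentially line for line: the paper also applies \Cref{thm:variable-measure-dominated-convergence} with the measures $\tilde{\Pi}_{\theta_n}$ (convergent by Scheff\'e's lemma) and the constant dominating functions $\beta_n\equiv 1$. Your write-up is in fact more detailed than the paper's one-sentence proof, and your closing remark that compactness is not actually used is a valid observation the paper does not make.
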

\begin{proof}
This follows immediate from \cref{thm:variable-measure-dominated-convergence} with $\beta_n(x) \equiv 1$ and the measures $\Pi_n(A) = \int_A \tilde{\pi}_{\theta_n}(x)~\mu(\mathrm{d}x)$ and $\Pi(A) = \int_A \tilde{\pi}_{\theta}(x)~\mu(\mathrm{d}x)$, which converge by \cref{lem:scheffes-lemma}.
\end{proof}


\begin{proof}[Proof of \cref{thm:continuity-parameterization}]
Fix $x\in\mathcal{X}$ and $A\in\mathfrak{B}(\mathcal{X})$. Thus,
\begin{align}
    \lim_{n\to\infty} K_{\theta_n}(x, A) &= \lim_{n\to\infty}\paren{\int_A \alpha_{\theta_n}(x, y)\tilde{\pi}_{\theta_n}(y)~\mu(\mathrm{d}y) + \paren{1 - \int_{\mathcal{X}} \alpha_{\theta_n}(x, w)\tilde{\pi}_{\theta_n}(w) ~\mu(\mathrm{d}w)}\mathbf{1}\set{x\in A}} \\
    &= \lim_{n\to\infty} \int_A \alpha_{\theta_n}(x, y)\tilde{\pi}_{\theta_n}(y)~\mu(\mathrm{d}y) + \lim_{n\to\infty} \paren{1 - \int_{\mathcal{X}} \alpha_{\theta_n}(x, w)\tilde{\pi}_{\theta_n}(w) ~\mu(\mathrm{d}w)}\mathbf{1}\set{x\in A} \\
    \label{eq:interchange-limits} &= \int_A \alpha_{\theta}(x, y)\tilde{\pi}_{\theta}(y)~\mu(\mathrm{d}y) + \paren{1 - \int_{\mathcal{X}} \alpha_{\theta}(x, w)\tilde{\pi}_{\theta}(w) ~\mu(\mathrm{d}w)}\mathbf{1}\set{x\in A} \\
    &= K_{\theta}(x, A).
\end{align}
where we have used \cref{cor:variable-measure-bounded-convergence} in \cref{eq:interchange-limits}. The conclusion then follows from \cref{lem:convergence-independent-metropolis-hastings}.
\end{proof}


\clearpage
\section{Proofs Concerning Simultaneous Uniform Ergodicity on Compact Spaces}\label{app:proofs-concerning-compact}

\begin{proof}[Proof of \Cref{cor:compact-uniformly-ergodic}]
  Because $\mathcal{X}$ is compact and $\pi$ and $\tilde{\pi}$ are continuous, we know that $\pi$ and $\tilde{\pi}$ attain maximum and minimum values on $\mathcal{X}$. Therefore, the ratio $\pi(x) / \tilde{\pi}(x)$ (i) does not diverge on $\mathrm{Supp}(\pi)$ because $\mathrm{Supp}(\pi)\subseteq \mathrm{Supp}(\tilde{\pi})$ and (ii) is bounded by
  \begin{align}
    \label{eq:upper-bound-ergodicity-constant} \frac{\max_{x\in\mathcal{X}} \pi(x)}{\min_{x\in\mathcal{X}} \tilde{\pi}(x)},
  \end{align}
  and $M$ is at most this value, with equality if the maximum of $\pi$ and the minimum of $\tilde{\pi}$ occur at the same point in $\mathcal{X}$.
\end{proof}

\begin{proof}[Proof of \Cref{prop:adaptive-simultaneous-ergodicity}]
  Define
  \begin{align}
      M_\theta = \max_{x\in\mathrm{Supp}(\pi)} \frac{\pi(x)}{\tilde{\pi}_{\theta}(x)}
  \end{align}
  and recall from \cref{cor:compact-uniformly-ergodic} that
  \begin{align}
    \Vert K_\theta^n(x,\cdot) - \pi\Vert_{\mathrm{TV}} \leq \paren{1 - \frac{1}{M_\theta}}^n.
  \end{align}
  From \cref{eq:upper-bound-ergodicity-constant} and \cref{eq:minimum-probability-density}, $M_\theta$ is bounded as
  \begin{align}
      M_\theta \leq  \frac{\max_{x\in\mathrm{Supp}(\pi)} \pi(x)}{\min_{x\in\mathrm{Supp}(\pi)}\tilde{\pi}_{\theta}(x)} \leq \frac{\max_{x\in\mathrm{Supp}(\pi)} \pi(x)}{\delta} = M_\delta
  \end{align}
  The quantity $M_\delta$ does not depend on $\theta\in\mathcal{Y}$ and therefore we have, for all $\theta\in\mathcal{Y}$,
  \begin{align}
    \Vert K_\theta^n(x,\cdot) - \pi\Vert_{\mathrm{TV}} \leq \paren{1 - \frac{1}{M_\delta}}^n.
  \end{align}
  Using this worst-case bound, we may find an $n$ satisfying \cref{def:simultaneous-uniform-ergodicity} for all $\theta\in\mathcal{Y}$.
\end{proof}

\begin{proof}[Proof of \Cref{lem:mixture-proposal-density}]
Fix $\theta\in\mathcal{Y}$. Then
\begin{align}
    \min_{x\in\mathrm{Supp}(\pi)} \tilde{\pi}^*_{\theta}(x) &= \min_{x\in\mathrm{Supp}(\pi)} \paren{\beta \pi^*_{\Pi^*}(x) + (1-\alpha) \tilde{\pi}_{\theta}(x)} \\
    &\geq \min_{x\in\mathrm{Supp}(\pi)} \beta \pi^*_{\Pi^*}(x) \\
    &= \delta.
\end{align}
The quantity $\delta$ is greater than zero since $\beta > 0$ and $\mathrm{Supp}(\pi)\subseteq \mathrm{Supp}(\pi^*_{\Pi^*})$. Since $\theta$ was arbitrary, the conclusion follows.
\end{proof}

\clearpage
\section{Proofs Concerning Containment}\label{app:proofs-concerning-containment}

\begin{proof}[Proof of \Cref{prop:containment-log-density}]
  \begin{align}
      & \mathrm{Pr}\left[\log \pi(x) - \log\tilde{\pi}_{\Theta_n}(x) < \log M~~\forall~~x\in\mathcal{X}\right] \geq 1-\delta~~\forall~~n\in\mathbb{N} \\
      \implies &\mathrm{Pr}\left[\frac{\pi(x)}{\tilde{\pi}_{\Theta_n}(x)} <  M~~\forall~~x\in\mathcal{X}\right] \geq 1-\delta~~\forall~~n\in\mathbb{N}
  \end{align}
  Then for all $\epsilon > 0$ there exists $N\equiv N(\epsilon, \delta)\in\mathbb{N}$ such that for all $x\in\mathcal{X}$,
  \begin{align}
      &\mathrm{Pr}\left[\Vert K_{\Theta_n}^N(x,\cdot) - \Pi(\cdot)\Vert_{\mathrm{TV}} <\epsilon\right]\geq 1-\delta ~~\forall~~n\in\mathbb{N}\\
      \implies& \mathrm{Pr}\left[W_\epsilon(x, K_{\Theta_n}) \leq N\right] \geq 1-\delta ~~\forall~~n\in\mathbb{N}.
  \end{align}
\end{proof}

\clearpage
\section{Adaptation via the Flow of the KL Divergence}

Consider an adaptive sequence of parameters $(\theta_0,\theta_1,\ldots)$ which parameterize proposal densities $(\tilde{\pi}_{\theta_0},\tilde{\pi}_{\theta_1},\ldots)$. Given a target distribution $\Pi$ with density $\pi$, the efficacy of the adaptive independent Metropolis-Hastings sampler is dictated by the ratio of the target density to the proposal density,
\begin{align}
    \label{eq:least-upper-bound-ratio} U(\theta) \defeq \sup_{x\in \mathrm{Supp}(\pi)} \frac{\pi(x)}{\tilde{\pi}_{\theta}(x)}.
\end{align}
The smaller this upper bound, the better the mixing properties of the independent Metropolis-Hastings algorithm with proposal distribution $\tilde{\Pi}_{\theta_n}$. A central question is whether or not the sequence $(\theta_0,\theta_1,\ldots)$ actually produces improvements in these upper bounds; i.e. is $U(\theta_{n+1}) \leq U(\theta_n)$?

It is important that the bound in \cref{eq:least-upper-bound-ratio} is actually the {\it least} upper bound. This is because an arbitrary upper bound may decrease while the least upper bound decreases.

In estimating the parameters of normalizing flows, it is typical that parameters follow, at least approximately, the gradient flow of a prescribed loss function, such as a KL divergence. Since gradient flows are initial value problems with deterministic solutions, by examining the case wherein adaptations are obtained exactly by gradient flow allows us to bypass the added difficulty of contending with stochastic adaptations. 
\begin{proposition}\label{prop:flow-descent-direction}
Let $\mathcal{X}$ be a state space and let $\theta\in\R^m$ parameterize a probability measure $\tilde{\Pi}_\theta$ on $\mathfrak{B}(\mathcal{X})$ with density $\tilde{\pi}_\theta$. Given a target density $\pi$, consider the function $U : \R^m\to\R$ defined by \cref{eq:least-upper-bound-ratio} and assume further that $U$ is smooth with respect to its argument. Let $L : \R^m\to\R$ be a loss function and consider the gradient flow $\theta_t = - \nabla L(\theta_t)$ given an initial condition $\theta_0$. A sufficient condition that $U(\theta_{t+s}) \leq U(\theta_t)$ is that
\begin{align}
    \nabla U(\theta_{t'}) \cdot \nabla L(\theta_{t'}) \geq 0,
\end{align}
where $t'\in (t, t+s)$; i.e. $\nabla L(\theta_{t'})$ is an ascent direction of $U$ at $\theta_{t'}$.
\end{proposition}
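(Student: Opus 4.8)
The plan is to reduce the claim to an elementary one–dimensional statement about the scalar function $g(r) \defeq U(\theta_r)$ evaluated along the gradient flow, and then to invoke the mean value theorem. First I would note that, for the gradient flow $\frac{\mathrm{d}}{\mathrm{d}t}\theta_t = -\nabla L(\theta_t)$ to be well posed with the stated initial condition $\theta_0$, the vector field $\nabla L$ must be (at least) continuous, so that the solution curve $t\mapsto\theta_t$ is continuously differentiable; combined with the standing assumption that $U$ is smooth, this makes $g$ continuous on $[t,t+s]$ and differentiable on $(t,t+s)$. The chain rule then gives
\begin{align}
    g'(r) = \nabla U(\theta_r)\cdot \tfrac{\mathrm{d}}{\mathrm{d}r}\theta_r = -\,\nabla U(\theta_r)\cdot\nabla L(\theta_r).
\end{align}

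Next I would apply the mean value theorem to $g$ on $[t,t+s]$: there exists $t'\in(t,t+s)$ with $g(t+s)-g(t) = s\,g'(t')$, and substituting the chain-rule identity yields
\begin{align}
    U(\theta_{t+s}) - U(\theta_t) = -\,s\,\nabla U(\theta_{t'})\cdot\nabla L(\theta_{t'}).
\end{align}
Thus the hypothesis $\nabla U(\theta_{t'})\cdot\nabla L(\theta_{t'})\geq 0$ at this point forces $U(\theta_{t+s})\leq U(\theta_t)$, which is the assertion of \cref{prop:flow-descent-direction}. I would also remark that the more natural and robust reading of the hypothesis is that $\nabla U\cdot\nabla L\geq 0$ holds at \emph{every} point of $(t,t+s)$ rather than at the (uncontrolled) mean-value point; under that reading one simply replaces the mean value theorem by the fundamental theorem of calculus and integrates $g'$ over $[t,t+s]$ to reach the same conclusion.

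The step most worth dwelling on — and the only one that is not completely routine — is the justification of differentiating $t\mapsto U(\theta_t)$, which rests on two ingredients: that the curve $\theta_t$ is $C^1$ (immediate from continuity of $\nabla L$ along the flow, as above), and that $U$ itself is differentiable. The latter is an explicit assumption of the proposition, but it conceals a genuinely delicate issue, since $U(\theta)$ is defined in \cref{eq:least-upper-bound-ratio} as a \emph{supremum} over $x\in\mathrm{Supp}(\pi)$ of $\pi(x)/\tilde\pi_\theta(x)$; smoothness of such a value function is only available under additional structure (for instance an envelope/Danskin-type argument when the supremum is attained at a unique interior maximizer that varies smoothly with $\theta$). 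I would flag this as the real limitation of the result rather than attempt to remove it, since beyond this point the argument is purely the elementary calculus displayed above and presents no further obstacle.
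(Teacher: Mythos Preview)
Your proof is correct and shares the same core idea as the paper's: compute $\frac{\mathrm{d}}{\mathrm{d}r}U(\theta_r) = -\nabla U(\theta_r)\cdot\nabla L(\theta_r)$ via the chain rule, then pass to the increment $U(\theta_{t+s})-U(\theta_t)$. The only difference is in how that last step is executed. The paper reads the hypothesis as holding for \emph{all} $t'\in(t,t+s)$ and applies the fundamental theorem of calculus,
\begin{align}
U(\theta_{t+s})-U(\theta_t) = -\int_t^{t+s} \nabla U(\theta_{t''})\cdot\nabla L(\theta_{t''})\,\mathrm{d}t'' \leq 0,
\end{align}
which is exactly the alternative you describe in your final paragraph. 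Your primary route via the mean value theorem instead matches the literal ``there exists $t'$'' reading of the statement; this is technically a weaker hypothesis, but, as you point out, the MVT point is uncontrolled, so the FTC version is the one of practical interest. Your closing caveat about the smoothness of $U$ being a nontrivial assumption (given its definition as a supremum) is a fair observation that the paper does not make explicit.
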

\begin{proof}
By applying the chain rule,
\begin{align}
    \frac{\mathrm{d}}{\mathrm{d}t} U(\theta_t) &= \nabla U(\theta_t) \cdot \dot{\theta}_t \\
    &= \nabla U(\theta_t) \cdot -\nabla L(\theta_t).
\end{align}
By the fundamental theorem of calculus,
\begin{align}
    U(\theta_{t+s}) - U(\theta_t) &= \int_t^{t+s} \paren{\frac{\mathrm{d}}{\mathrm{d}t'} U(\theta_{t'})}\bigg|_{t' = t''} ~\mathrm{d}t'' \\
    &= -\int_t^{t+s} \nabla U(\theta_{t''}) \cdot \nabla L(\theta_{t''}) ~\mathrm{d}t'' \\
    &\leq 0
\end{align}
Therefore, $U(\theta_{t+s}) \leq U(\theta_t)$.
\end{proof}
While verifying the conditions of \cref{prop:flow-descent-direction} in general appears a daunting task, we can do some analysis in simple cases.
\begin{example}
Consider the problem of sampling $\mathrm{Normal}(0, 1)$ by adapting a proposal of the form $\mathrm{Normal}(\mu, \sigma^2)$. Assume further that $\sigma^2 > 1$. We can deduce a an upper bound on the ratio of the target density to the proposal density as follows:
\begin{align}
    \max_{x\in \R} \frac{\exp(-x^2/2) / \sqrt{2\pi}}{\exp(-(x-\mu)^2 / 2\sigma^2) / \sqrt{2\pi\sigma^2}} &= \sigma \max_{x\in \R} \exp\paren{-\frac{x^2}{2} + \frac{(x-\mu)^2}{2\sigma^2}} \\
    \label{eq:gaussian-upper-bound} &\leq \sigma \exp\paren{\frac{\mu^2}{2(\sigma^2 - 1)}},
\end{align}
which can be deduced by maximizing $-\frac{x^2}{2} + \frac{(x-\mu)^2}{2\sigma^2}$ using calculus. The reverse KL divergence between the proposal distribution and the target distribution is seen to be,
\begin{align}
    \mathbb{KL}(\mathrm{Normal}(\mu, \sigma^2)\Vert \mathrm{Normal}(0, 1)) = -\log\sigma + \frac{\sigma^2 + \mu^2}{2} - \frac{1}{2}.
\end{align}
Consider the gradient flow of the reverse KL divergence:
\begin{align}
    \dot{\mu}_t  &= -\frac{\partial}{\partial\mu} \mathbb{KL}(\mathrm{Normal}(\mu, \sigma^2)\Vert \mathrm{Normal}(0, 1)) \\
    &= -\mu \\
    \dot{\sigma}_t &= -\frac{\partial}{\partial \sigma}\mathbb{KL}(\mathrm{Normal}(\mu, \sigma^2)\Vert \mathrm{Normal}(0, 1)) \\
    &= \frac{1}{\sigma_t} - \sigma_t.
\end{align}
When we specify initial conditions $\mu_0\in\R$ and $\sigma_0 > 1$, this produces an initial value problem.
To verify that adapting by following the gradient flow of KL divergence produces a provable improvement to the upper bound deduced in \cref{eq:gaussian-upper-bound}, it suffices to check that the time derivative of the upper bound in decreasing under the postulated gradient flow dynamics. That is,
\begin{align}
\begin{split}
    \frac{\mathrm{d}}{\mathrm{d}t} \sigma_t \exp\paren{\frac{\mu_t^2}{2(\sigma^2_t - 1)}} &= \frac{\mu_t\sigma_t \exp(\mu_t^2 / (2(\sigma^2_t - 1)))}{\sigma_t^2 - 1} \cdot -\mu_t \\
    &\qquad +~ \frac{\exp(\mu_t^2 / (2(\sigma^2_t - 1)))(-(\mu_t^2+2)\sigma_t^2 + \sigma_t^4 + 1)}{(\sigma_t^2 - 1)^2} \cdot \paren{\frac{1}{\sigma_t} - \sigma_t}
\end{split} \\
&= - \frac{(\sigma_t - 1)(\sigma_t + 1)\exp(\mu_t^2 / (2(\sigma^2_t - 1)))}{\sigma_t}.
\end{align}
It follows that this is a negative quantity if we can establish that $\sigma_t > 1$. From the initial condition $\sigma_0>1$, it follows that the positive solution of the differential equation $\dot{\sigma}_t = \frac{1}{\sigma_t} - \sigma_t$ is $\sigma_t = \sqrt{e^{-2t}(\sigma_0^2-1) + 1}$, so we see, indeed, that $\sigma_t > 1$. Therefore, the upper bound is a decreasing function of $t$ given the prescribed gradient flow dynamics. The differential equation $\dot{\mu}_t = -\mu_t$ also has an explicit solution given the initial condition $\mu_0$, which is $\mu_t = \mu_0 e^{-t}$. These explicit solutions to the gradient flow of the KL divergence allow us to express the evolution of the upper bound concretely as,
\begin{align}
    \sqrt{e^{-2t}(\sigma_0^2-1) + 1}\exp\paren{\frac{\mu_0^2 e^{-2t}}{2e^{-2t}(\sigma_0^2-1)}} = \sqrt{e^{-2t}(\sigma_0^2-1) + 1}\exp\paren{\frac{\mu_0^2 }{2(\sigma_0^2-1)}}
\end{align}
This is an intriguing formula since it suggests that although the sequence $M_{t}$ is decreasing, it decreases only to a non-unit limit $\exp(\mu_0^2 / (2(\sigma_0^2 - 1)))$; indeed, unless $\mu_0 =0$, the limit of this upper bound does not approach one. For the purposes of MCMC, this may be acceptable, since uniform ergodicity can be obtained so long as the bound is finite; however, were the upper bound to equal one, this would be optimal.
Connecting this back to the question of adaptation, choosing an increasing sequence of times $t_0< t_1<t_2<\ldots$ and consider using $\mathrm{Normal}(\mu_{t_n}, \sigma^2_{t_n})$ as the proposal distribution at step $n$. The Doeblin coefficient at step $n$ is therefore,
\begin{align}
    L_n = \frac{1}{\sqrt{e^{-2t}(\sigma_0^2-1) + 1}\exp\paren{\frac{\mu_0^2}{2(\sigma_0^2-1)}}}.
\end{align}
Finally, let us remark that the undesirable property that the upper bounds do not converge to unity can be easily corrected. The principle issue is that the factors of $e^{-2t}$ cancel in the exponent. However, consider that instead of the using $(\mu_t,\sigma_t)$ to inform adaptations one instead uses $(\mu_{2t}, \sigma_t)$ so that the mean value is further along in the solution to its initial value problem than the scale. Plugging this into the formula for the upper bound yields,
\begin{align}
    \sqrt{e^{-2t}(\sigma_0^2-1) + 1}\exp\paren{\frac{\mu_0^2 e^{-4t}}{2e^{-2t}(\sigma_0^2-1)}} =  \sqrt{e^{-2t}(\sigma_0^2-1) + 1}\exp\paren{\frac{\mu_0^2 e^{-2t}}{2(\sigma_0^2-1)}},
\end{align}
which converges to unity as $t\to\infty$ as desired.

Instead of the reverse KL divergence we may consider the forward KL divergence between the target distribution and the proposal.
\begin{align}
    \mathbb{KL}(\mathrm{Normal}(0, 1)\Vert \mathrm{Normal}(\mu, \sigma^2)) = \log\sigma + \frac{1 + \mu^2}{2\sigma^2} - \frac{1}{2}.
\end{align}
The forward KL divergence produces the following equations of motion.
\begin{align}
    \dot{\mu}_t &= -\frac{\mu_t}{\sigma_t^2} \\
    \dot{\sigma}_t &= \frac{\mu_t^2 + 1}{\sigma_t^3} - \frac{1}{\sigma_t}
\end{align}
Applying the chain rule to \cref{eq:gaussian-upper-bound} with these equations of motion yields the following time derivative of the upper bound,
\begin{align}
    \frac{\mathrm{d}}{\mathrm{d}t} \sigma_t \exp\paren{\frac{\mu_t^2}{2(\sigma_t^2 - 1)}} = \exp\paren{\frac{\mu_t^2}{2(\sigma_t^2 - 1)}} \frac{-\mu_t^4 \sigma^2 + \mu_t^2\sigma_t^4 - 2\mu_t^2\sigma_t^2 + \mu_t^2 - \sigma_t^6 + 3\sigma_t^4 - 3\sigma_t^2 + 1}{\sigma_t^3(\sigma_t^2 - 1)^2}.
\end{align}
This derivative is less than or equal to zero iff
\begin{align}
    &-\mu_t^4 \sigma^2 + \mu_t^2\sigma_t^4 - 2\mu_t^2\sigma_t^2 + \mu_t^2 - \sigma_t^6 + 3\sigma_t^4 - 3\sigma_t^2 + 1 \leq 0 \\
    \iff & \mu_t^2(\sigma_t^2 - 1)^2 \leq \mu_t^4 \sigma_t^2 + (\sigma_t^2 - 1)^3,
\end{align}
which is true for $\sigma_t^2 > 1$. 

\end{example}
\clearpage
\section{Adaptation of a Kernel Density Proposal Distribution}

A principle difficulty in using normalizing flows as proposal distributions is that it is unclear whether or not a given adaptation of the neural network parameters will provably decrease the upper bound on the ratio of the target density and the normalizing flow density. Analyzing this property theoretically in the case of normalizing flows does not appear to be forthcoming. Nevertheless, we have been able to analyze certain behaviors of proposals based on kernel density estimators. A version of this procedure (to use kernel density estimators as a proposal in independent Metropolis-Hastings) was previously pursued in \citet{doi:10.1080/10618600.2019.1598872}; however, they do not appear to have looked at the question of when the addition of a new component into the mixture is beneficial, which is the topic under consideration in the sequel.

Let $x_1,\ldots,x_n\in\R^m$. We consider the kernel density estimator computed by,
\begin{align}
    \tilde{K}_n(x) &= \frac{1}{n} \sum_{i=1}^n \frac{1}{\mathrm{Vol}(\bar{B}_{\epsilon}(x_i))} \mathbf{1}\set{x \in \bar{B}_{\epsilon}(x_i)}.
\end{align}
where $\bar{B}_\epsilon(x)$ is the closed ball of radius $\epsilon$ centered at $x$. For notational convenience, we observe that $V = \mathrm{Vol}(\bar{B}_{\epsilon}(x)) = \mathrm{Vol}(\bar{B}_{\epsilon}(x'))$ for any $x,x'\in \R^m$ and we write $m_n(x) = \#\set{i\in (1, \ldots, n) : x \in \bar{B}_{\epsilon}(x_i)}$. Therefore, we have the simple expression for the kernel density estimator as $\tilde{K}_n(x) = m_n(x) / nV$.

Suppose that $\Pi$ is a probability measure on $\R^m$ with compactly supported density $\pi :\R^m\to\R_+$. Define,
\begin{align}
    M_n &=  n V \sup_{x\in\mathrm{Supp}(\pi)} \frac{\pi(x)}{m_n(x)}
\end{align}
Hence $\pi(x) / \tilde{K}_n(x) \leq M_n$.
Given a new observation $x_{n+1}\in\R^m$, we would like to understand conditions under which one can show $M_{n+1} \leq M_n$. This means that the inclusion of a new observation in the kernel density estimate reduces the upper bound on the ratio of the target density and the proposal density. To discuss this, we begin with two definitions.
\begin{definition}
The inner bound is defined by,
\begin{align}
    M_n' = \sup_{x\in \bar{B}_\epsilon(x_{n+1})\cap\mathrm{Supp}(\pi)} \frac{\pi(x)}{m_n(x)}.
\end{align}
\end{definition}
\begin{definition}
The outer bound is defined by,
\begin{align}
    M_n'' = \sup_{x\in \mathrm{Supp}(\pi) \setminus \bar{B}_\epsilon(x_{n+1})} \frac{\pi(x)}{m_n(x)}.
\end{align}
\end{definition}

\begin{lemma}\label{lem:max-supremum}
Let $A$ and $B$ be sets. Then $\max\set{\sup A, \sup B} = \sup A\cup B$.
\end{lemma}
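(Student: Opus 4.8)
The plan is to prove the two inequalities $\sup(A\cup B) \ge \max\{\sup A, \sup B\}$ and $\sup(A\cup B) \le \max\{\sup A, \sup B\}$ separately, which together force equality. This is a purely order-theoretic fact about suprema of subsets of $\mathbb{R}$ (or of any totally ordered set in which the relevant suprema exist), so no machinery beyond the definition of least upper bound is needed.

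For the first inequality, I would observe that $A \subseteq A\cup B$ and $B\subseteq A\cup B$. Since the supremum is monotone with respect to set inclusion — any upper bound of a superset is an upper bound of a subset, hence $\sup$ of the superset dominates $\sup$ of the subset — we get $\sup A \le \sup(A\cup B)$ and $\sup B \le \sup(A\cup B)$. Taking the maximum of the two left-hand sides preserves the inequality, yielding $\max\{\sup A, \sup B\} \le \sup(A\cup B)$.

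For the reverse inequality, I would show that $\max\{\sup A,\sup B\}$ is itself an upper bound for $A\cup B$. Indeed, take any $x\in A\cup B$; then either $x\in A$, in which case $x\le \sup A \le \max\{\sup A,\sup B\}$, or $x\in B$, in which case $x\le \sup B\le \max\{\sup A,\sup B\}$. In both cases $x\le \max\{\sup A,\sup B\}$, so this quantity bounds $A\cup B$ from above; since $\sup(A\cup B)$ is the \emph{least} upper bound, $\sup(A\cup B)\le \max\{\sup A,\sup B\}$. Combining the two displayed inequalities gives the claim.

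There is essentially no obstacle here; the only point warranting a word of care is the implicit assumption that the suprema in question exist (e.g. the sets are nonempty and bounded above, or one works in the extended reals so that $\sup\emptyset = -\infty$ and unbounded sets have supremum $+\infty$), under which convention the argument above goes through verbatim. In the intended application both $A$ and $B$ will be sets of nonnegative real numbers arising as ranges of $\pi(x)/m_n(x)$ over subsets of the compact support of $\pi$, so existence and finiteness of the suprema is automatic and need not be belabored.
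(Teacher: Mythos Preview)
Your proof is correct and follows essentially the same two-inequality argument as the paper: monotonicity of $\sup$ under inclusion gives $\max\{\sup A,\sup B\}\le \sup(A\cup B)$, and showing $\max\{\sup A,\sup B\}$ is an upper bound for $A\cup B$ gives the reverse. The only cosmetic difference is that the paper argues the second inequality by assuming without loss of generality that $\sup B\ge \sup A$ rather than working with the $\max$ directly.
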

\begin{proof}
Since $A\subset A\cup B$, it is immediate that $\sup A \leq \sup A\cup B$. Identical reasoning shows that $\sup B\leq \sup A\cup B$. Therefore, $\max\set{\sup A, \sup B} \leq \sup A\cup B$. Now suppose without loss of generality that $\sup B\geq \sup A$. Then for all $a\in A$ we have $a \leq \sup B$; moreover, for all $b\in B$, $b \leq \sup B$. Therefore, for all $x\in A\cup B$, $x\leq \sup B$. Thus, $\sup A\cup B \leq \sup B$, since $\sup A \cup B$ is by definition the least upper bound. Applying identical reasoning to the case $\sup A\geq \sup B$ reveals $\sup A\cup B\leq \max\set{\sup A, \sup B}$. Thus, we must have $\sup A\cup B = \max\set{\sup A, \sup B}$.
\end{proof}
\begin{corollary}
Since $\mathrm{Supp}(\pi)\setminus \bar{B}_\epsilon(x_{n+1})= \mathrm{Supp}(\pi) \cap \bar{B}_\epsilon(x_{n+1})^\mathrm{c}$, it follows from the distributive law of set relationships that,
\begin{align}
    \set{\mathrm{Supp}(\pi) \cap \bar{B}_\epsilon(x_{n+1})} \cup \set{\mathrm{Supp}(\pi)\setminus \bar{B}_\epsilon(x_{n+1})} = \mathrm{Supp}(\pi).
\end{align}
Applying \cref{lem:max-supremum} shows that $n V \max\set{M_n', M_n''} = M_n$.
\end{corollary}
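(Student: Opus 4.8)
The plan is to reduce the identity to \cref{lem:max-supremum} by observing that all three quantities $M_n$, $M_n'$, $M_n''$ are (scaled) suprema of one and the same function over three related domains. Concretely, I would introduce the map $g : \mathrm{Supp}(\pi)\to[0,\infty]$ given by $g(x) = \pi(x)/m_n(x)$, adopting the convention $g(x) = +\infty$ whenever $m_n(x) = 0$. Unwinding the definitions of $M_n$, $M_n'$ and $M_n''$ then shows $M_n = nV\sup_{x\in\mathrm{Supp}(\pi)} g(x)$, $M_n' = \sup_{x\in \mathrm{Supp}(\pi)\cap \bar{B}_{\epsilon}(x_{n+1})} g(x)$, and $M_n'' = \sup_{x\in \mathrm{Supp}(\pi)\setminus \bar{B}_{\epsilon}(x_{n+1})} g(x)$, so it suffices to prove $\max\set{M_n', M_n''} = \sup_{x\in\mathrm{Supp}(\pi)} g(x)$ and then multiply by the positive constant $nV$.

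First I would record the set identity. Since $\bar{B}_{\epsilon}(x_{n+1})$ and its complement $\bar{B}_{\epsilon}(x_{n+1})^{\mathrm{c}}$ partition the ambient space, intersecting with $\mathrm{Supp}(\pi)$ and applying the distributive law of union over intersection yields
\begin{align}
    \paren{\mathrm{Supp}(\pi)\cap \bar{B}_{\epsilon}(x_{n+1})} \cup \paren{\mathrm{Supp}(\pi)\setminus \bar{B}_{\epsilon}(x_{n+1})} = \mathrm{Supp}(\pi),
\end{align}
the first displayed equation of the corollary. Next I would pass to images: setting $A = \set{g(x) : x\in\mathrm{Supp}(\pi)\cap \bar{B}_{\epsilon}(x_{n+1})}$ and $B = \set{g(x) : x\in\mathrm{Supp}(\pi)\setminus \bar{B}_{\epsilon}(x_{n+1})}$, the set identity gives $A\cup B = \set{g(x) : x\in\mathrm{Supp}(\pi)}$. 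Applying \cref{lem:max-supremum} to $A$ and $B$ then delivers
\begin{align}
    \max\set{M_n', M_n''} = \max\set{\sup A, \sup B} = \sup (A\cup B) = \sup_{x\in\mathrm{Supp}(\pi)} g(x),
\end{align}
and multiplying by $nV$ concludes.

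I do not anticipate a genuine obstacle here; the argument is essentially bookkeeping around \cref{lem:max-supremum}. The one point I would be careful about is that $m_n(x)$ can vanish for some $x\in\mathrm{Supp}(\pi)$ --- namely when $x$ lies in none of the balls $\bar{B}_{\epsilon}(x_i)$ --- in which case $g(x) = +\infty$ and hence $M_n = +\infty$. This is harmless: \cref{lem:max-supremum} and the arithmetic above hold verbatim in the extended reals $[0,\infty]$, and the identity then simply records that the kernel estimator fails to dominate $\pi$ either inside or outside $\bar{B}_{\epsilon}(x_{n+1})$. For the case relevant to uniform ergodicity, $M_n < \infty$, all the quantities are finite and the manipulations are unambiguous.
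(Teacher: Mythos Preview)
Your proposal is correct and is essentially the same approach the paper takes: the paper gives no separate proof for this corollary beyond the statement itself, treating the conclusion as an immediate application of \cref{lem:max-supremum} to the two pieces of the partition of $\mathrm{Supp}(\pi)$, which is exactly what you spell out. Your extra care about the $m_n(x)=0$ case and the extended-real interpretation is a harmless elaboration not present in the paper.
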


\begin{lemma}
The kernel density estimator $\tilde{K}_{n+1}(x)$ can be written as,
\begin{align}
    \tilde{K}_{n+1}(x) = \begin{cases}
        \frac{m_n(x)}{(n+1)V} & ~\mathrm{if}~ x\not\in \bar{B}_\epsilon(x_{n+1}) \\
        \frac{m_n(x) + 1}{(n+1)V} & ~\mathrm{otherwise}.
    \end{cases}
\end{align}
\end{lemma}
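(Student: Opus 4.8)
The plan is to compute $\tilde{K}_{n+1}(x)$ directly from its definition, isolating the contribution of the newly appended point $x_{n+1}$. First I would write $\tilde{K}_{n+1}(x) = \frac{1}{n+1}\sum_{i=1}^{n+1} \frac{1}{V}\mathbf{1}\set{x\in\bar{B}_\epsilon(x_i)}$ and split off the $i=n+1$ summand, obtaining $\frac{1}{(n+1)V}\paren{\sum_{i=1}^{n}\mathbf{1}\set{x\in\bar{B}_\epsilon(x_i)} + \mathbf{1}\set{x\in\bar{B}_\epsilon(x_{n+1})}}$. By the definition of $m_n$, the remaining sum over $i=1,\dots,n$ equals $m_n(x)$ exactly, so $\tilde{K}_{n+1}(x) = \frac{m_n(x) + \mathbf{1}\set{x\in\bar{B}_\epsilon(x_{n+1})}}{(n+1)V}$; equivalently $m_{n+1}(x) = m_n(x) + \mathbf{1}\set{x\in\bar{B}_\epsilon(x_{n+1})}$.

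Then I would finish with a case split on the single remaining indicator. If $x\notin\bar{B}_\epsilon(x_{n+1})$ the indicator vanishes and the expression reduces to $m_n(x)/((n+1)V)$; if $x\in\bar{B}_\epsilon(x_{n+1})$ the indicator equals $1$ and the expression is $(m_n(x)+1)/((n+1)V)$. These are precisely the two cases in the statement, completing the proof.

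There is essentially no obstacle here: the argument is a one-line manipulation of the defining sum, together with the observation that appending $x_{n+1}$ either does or does not increment the count at the evaluation point $x$, according to whether $x$ lies in the new ball. The only point requiring (already established) care is that $V=\mathrm{Vol}(\bar{B}_\epsilon(x))$ is the same constant for every ball, so it may be pulled outside the sum; this was noted just before the statement.
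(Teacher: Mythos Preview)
Your proposal is correct and mirrors the paper's proof almost verbatim: the paper simply writes $\tilde{K}_{n+1}(x) = \frac{1}{(n+1)V}\sum_{i=1}^{n+1}\mathbf{1}\set{x\in\bar{B}_\epsilon(x_i)} = \frac{1}{(n+1)V}\paren{m_n(x) + \mathbf{1}\set{x\in\bar{B}_\epsilon(x_{n+1})}}$ and stops, leaving the final case split implicit. Your version is slightly more explicit but the argument is identical.
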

\begin{proof}
This follows immediately from the equation,
\begin{align}
    \tilde{K}_{n+1}(x) &= \frac{1}{(n+1)V} \sum_{i=1}^{n+1} \mathbf{1}\set{x\in \bar{B}_\epsilon(x_{i})} \\
    &= \frac{1}{(n+1)V} \paren{m_n(x) + \mathbf{1}\set{x\in\bar{B}_\epsilon(x_{n+1})}}.
\end{align}
\end{proof}

\begin{lemma}\label{lem:outer-bound}
We have,
\begin{align}
    \sup_{x\in \mathrm{Supp}(\pi) \setminus \bar{B}_\epsilon(x_{n+1})}\frac{\pi(x)}{\tilde{K}_{n+1}(x)} = (n+1) V M_n''.
\end{align}
\end{lemma}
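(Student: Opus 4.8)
The plan is to reduce the claim to a direct substitution followed by pulling a positive constant out of the supremum. First I would invoke the piecewise formula for $\tilde{K}_{n+1}$ established in the immediately preceding lemma: on the region $\mathrm{Supp}(\pi)\setminus \bar{B}_\epsilon(x_{n+1})$ the point $x$ does not lie in $\bar{B}_\epsilon(x_{n+1})$, so $\tilde{K}_{n+1}(x) = m_n(x)/((n+1)V)$. Substituting this into the ratio gives, for every such $x$,
\begin{align}
    \frac{\pi(x)}{\tilde{K}_{n+1}(x)} = (n+1)V\,\frac{\pi(x)}{m_n(x)}.
\end{align}

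Next I would take the supremum of both sides over $x\in\mathrm{Supp}(\pi)\setminus\bar{B}_\epsilon(x_{n+1})$. Since $(n+1)V>0$ is a constant independent of $x$, it factors out of the supremum, yielding
\begin{align}
    \sup_{x\in\mathrm{Supp}(\pi)\setminus\bar{B}_\epsilon(x_{n+1})}\frac{\pi(x)}{\tilde{K}_{n+1}(x)} = (n+1)V\sup_{x\in\mathrm{Supp}(\pi)\setminus\bar{B}_\epsilon(x_{n+1})}\frac{\pi(x)}{m_n(x)},
\end{align}
and the remaining supremum is exactly $M_n''$ by its definition. This is the asserted identity.

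The only point meriting a word of care is the degenerate case where $m_n(x)=0$ for some $x$ in the region under consideration: then both sides equal $+\infty$ in the extended reals and the equality persists, but one may equally just observe that the regime of interest (and the one tacitly assumed in the definitions of $M_n$, $M_n'$, and $M_n''$) is the one in which $m_n \geq 1$ on $\mathrm{Supp}(\pi)$, so that all the quantities involved are finite real numbers. Beyond this, there is no genuine obstacle; the lemma is a bookkeeping step that isolates the effect of the newly inserted ball on the density ratio outside that ball, to be paired later with the analogous ``inner'' computation.
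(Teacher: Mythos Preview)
Your proposal is correct and follows exactly the paper's own argument: substitute the piecewise formula for $\tilde{K}_{n+1}$ on the region outside $\bar{B}_\epsilon(x_{n+1})$, factor out the positive constant $(n+1)V$ from the supremum, and recognize the remaining supremum as $M_n''$ by definition. Your additional remark on the degenerate case $m_n(x)=0$ is extra care that the paper does not include.
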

\begin{proof}
For $x\in \mathrm{Supp}(\pi) \setminus \bar{B}_\epsilon(x_{n+1})$,
\begin{align}
    \frac{\pi(x)}{\tilde{K}_{n+1}(x)} = (n+1) V \frac{\pi(x)}{m_n(x)}.
\end{align}
Therefore,
\begin{align}
    \sup_{x\in \mathrm{Supp}(\pi) \setminus \bar{B}_\epsilon(x_{n+1})} \frac{\pi(x)}{\tilde{K}_{n+1}(x)} &= (n+1) V \sup_{x\in \mathrm{Supp}(\pi) \setminus \bar{B}_\epsilon(x_{n+1})} \frac{\pi(x)}{m_n(x)} \\
    &= (n+1)V M_n''.
\end{align}
\end{proof}
\begin{lemma}\label{lem:inner-bound}
We have,
\begin{align}
    \sup_{x \in \mathrm{Supp}(\pi)\cap \bar{B}_\epsilon(x_{n+1})} \frac{\pi(x)}{\tilde{K}_{n+1}(x)} \leq nV Q_n M_n',
\end{align}
where
\begin{align}
    Q_n = \sup_{x\in \mathrm{Supp}(\pi)\cap \bar{B}_\epsilon(x_{n+1})} \frac{m_n(x) (n+1)}{(m_n(x) + 1)n} \leq 1.
\end{align}
\end{lemma}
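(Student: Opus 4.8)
The plan is to start from the closed form for $\tilde{K}_{n+1}$ on the ball established in the preceding lemma, namely $\tilde{K}_{n+1}(x) = (m_n(x)+1)/((n+1)V)$ for $x\in\bar{B}_\epsilon(x_{n+1})$, and then to factor the ratio $\pi(x)/\tilde{K}_{n+1}(x)$ so that the quantity defining $M_n'$ appears as one of the factors. Explicitly, for $x\in\mathrm{Supp}(\pi)\cap\bar{B}_\epsilon(x_{n+1})$,
\begin{align*}
  \frac{\pi(x)}{\tilde{K}_{n+1}(x)} = \frac{(n+1)V\,\pi(x)}{m_n(x)+1} = nV\cdot\frac{\pi(x)}{m_n(x)}\cdot\frac{m_n(x)(n+1)}{(m_n(x)+1)\,n}.
\end{align*}

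Before taking the supremum over $x$ I would dispose of the degenerate case: if $m_n(x)=0$ for some $x$ in this set then $\pi(x)/m_n(x)=\infty$, so $M_n'=\infty$ and the claimed inequality is vacuously true; hence one may assume $M_n'<\infty$, which (since $\pi>0$ on $\mathrm{Supp}(\pi)$) forces $m_n(x)\ge 1$ throughout $\mathrm{Supp}(\pi)\cap\bar{B}_\epsilon(x_{n+1})$, so that both factors in the displayed product are finite and nonnegative. Applying the elementary submultiplicativity $\sup_x f(x)g(x)\le\bigl(\sup_x f(x)\bigr)\bigl(\sup_x g(x)\bigr)$ for nonnegative $f,g$, with the supremum taken over $x\in\mathrm{Supp}(\pi)\cap\bar{B}_\epsilon(x_{n+1})$, gives
\begin{align*}
  \sup_{x}\frac{\pi(x)}{\tilde{K}_{n+1}(x)} \le nV\Bigl(\sup_{x}\frac{\pi(x)}{m_n(x)}\Bigr)\Bigl(\sup_{x}\frac{m_n(x)(n+1)}{(m_n(x)+1)\,n}\Bigr) = nV\,Q_n\,M_n',
\end{align*}
which is the assertion. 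It remains to note $Q_n\le 1$: since $m_n(x)$ counts a subset of $\{1,\dots,n\}$ we have $m_n(x)\le n$, and the map $k\mapsto k(n+1)/((k+1)n)$ satisfies $k(n+1)\le(k+1)n\iff k\le n$, so each term in the supremum defining $Q_n$ is at most $1$.

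The one genuinely non-mechanical point, and the step I would watch most carefully, is the passage from the exact ratio to the product of two separate suprema: equality is lost there, which is precisely why the lemma is phrased as an inequality rather than an identity, and the manipulation is only legitimate once the $m_n(x)=0$ possibility has been removed. The rest is routine substitution using the formula for $\tilde{K}_{n+1}$.
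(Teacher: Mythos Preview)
Your proof is correct and follows essentially the same route as the paper: both arguments rest on the factorization
\[
  \frac{\pi(x)}{\tilde{K}_{n+1}(x)} = nV\cdot\frac{\pi(x)}{m_n(x)}\cdot\frac{m_n(x)(n+1)}{(m_n(x)+1)\,n},
\]
bound the first factor by $M_n'$ and the second by $Q_n$, and deduce $Q_n\le 1$ from $m_n(x)\le n$. The paper does not explicitly handle the degenerate case $m_n(x)=0$, so your remark that $M_n'=\infty$ there and the inequality becomes vacuous is a small improvement in rigor.
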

\begin{proof}
Within $ \mathrm{Supp}(\pi)\cap \bar{B}_\epsilon(x_{n+1})$ we have the bound,
\begin{align}
    & nV \frac{\pi(x)}{m_n(x)} \leq nV M_n' \\
    \implies& \frac{\pi(x)}{m_n(x) / nV} \cdot \frac{m_n(x) / n}{(m_n(x) + 1) / (n+1)} \leq \frac{m_n(x) / n}{(m_n(x) + 1) / (n+1)} nV M_n' \\
    \implies& \frac{\pi(x)}{\tilde{K}_{n+1}(x)} \leq nV \frac{m_n(x) / n}{(m_n(x) + 1) / (n+1)}  M_n'
\end{align}
Taking the supremum on both sides yields,
\begin{align}
    \sup_{x \in \mathrm{Supp}(\pi)\cap \bar{B}_\epsilon(x_{n+1})} \frac{\pi(x)}{\tilde{K}_{n+1}(x)} \leq nV Q_n M_n'.
\end{align}
Since $m_n(x)\leq n$, it follows that $Q_n\leq 1$.
\end{proof}

\begin{proposition}\label{prop:larger-exterior-no-improvement}
If $M_n'' > M_n'$, then $M_{n+1} > M_n$. In this case, the inclusion of the new observation degrades the quality of the proposal distribution.
\end{proposition}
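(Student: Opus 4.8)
The plan is to assemble the statement directly from \cref{lem:outer-bound,lem:inner-bound,lem:max-supremum} together with the identity $M_n = nV\max\{M_n',M_n''\}$ recorded in the corollary following \cref{lem:max-supremum}. There is no deep content here; essentially all the work is in tracking the $n$ versus $n+1$ normalizations carefully.

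First I would write $M_{n+1} = \sup_{x\in\mathrm{Supp}(\pi)}\pi(x)/\tilde{K}_{n+1}(x)$ and partition the domain of the supremum as $\mathrm{Supp}(\pi) = \big(\mathrm{Supp}(\pi)\cap\bar{B}_\epsilon(x_{n+1})\big)\cup\big(\mathrm{Supp}(\pi)\setminus\bar{B}_\epsilon(x_{n+1})\big)$. Applying \cref{lem:max-supremum}, $M_{n+1}$ equals the maximum of the supremum of $\pi/\tilde{K}_{n+1}$ over the interior piece and over the exterior piece. By \cref{lem:outer-bound} the exterior supremum is exactly $(n+1)VM_n''$, while by \cref{lem:inner-bound} the interior supremum is at most $nVQ_nM_n'$, which is $\le nVM_n'$ since $Q_n\le1$. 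Hence $M_{n+1}\ge(n+1)VM_n''$.

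To finish, invoke $M_n = nV\max\{M_n',M_n''\}$: under the hypothesis $M_n''>M_n'$ this reads $M_n = nVM_n''$. The hypothesis also guarantees that the exterior set $\mathrm{Supp}(\pi)\setminus\bar{B}_\epsilon(x_{n+1})$ is nonempty, and every point of it lies in $\mathrm{Supp}(\pi)$, hence has $\pi>0$ and $m_n(\cdot)\le n$, so $M_n''>0$. Therefore $(n+1)VM_n''>nVM_n''$ strictly, and chaining the inequalities gives $M_{n+1}\ge(n+1)VM_n''>nVM_n''=M_n$, which is the claim. The only subtlety — a bookkeeping matter rather than a genuine obstacle — is confirming $M_n''>0$ so that the passage from $n$ to $n+1$ is a \emph{strict} increase, and noting that we are implicitly in the regime $M_n<\infty$ where the suprema involved are finite; once that is in place the argument is a one-line composition of the preceding lemmas.
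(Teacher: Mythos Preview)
Your proposal is correct and follows essentially the same route as the paper: partition the supremum defining $M_{n+1}$ into the interior and exterior pieces, invoke \cref{lem:outer-bound} and \cref{lem:inner-bound} to evaluate them, and then compare with $M_n = nVM_n''$ via the corollary to \cref{lem:max-supremum}. The only cosmetic difference is that the paper first shows the interior supremum is strictly less than the exterior one (so that $M_{n+1} = (n+1)VM_n''$ exactly), whereas you establish only the inequality $M_{n+1}\ge (n+1)VM_n''$ and add an explicit check that $M_n''>0$ to secure strictness; both arrive at the same conclusion with the same ingredients.
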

\begin{proof}
Since $M_n'' > M_n'$, it follows that
\begin{align}
    \sup_{x\in \mathrm{Supp}(\pi)\cap \bar{B}_\epsilon(x_{n+1})} \frac{\pi(x)}{\tilde{K}_{n+1}(x)} &\leq nV Q_n M_n' \\
    &\leq nV M_n' \\
    &< nV M_n'' \\
    &\leq (n+1)V M_n'' \\
    &= \sup_{x\in \mathrm{Supp}(\pi)\setminus \bar{B}_\epsilon(x_{n+1})} \frac{\pi(x)}{\tilde{K}_{n+1}(x)}.
\end{align}
Hence $M_{n+1} = (n+1)VM_n'' > nVM_n'' = M_n$.
\end{proof}
\begin{proposition}\label{prop:necessary-and-sufficient-kernel-proposal}
It is necessary and sufficient that $(1 + 1/n) M''_n \leq M_n'$ in order for $M_{n+1}\leq M_n$.
\end{proposition}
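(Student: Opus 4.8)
The plan is to reduce the claim to a comparison between the ``inner'' and ``outer'' parts of the post-update ratio $\pi/\tilde{K}_{n+1}$, reusing the set decomposition already set up through \cref{lem:max-supremum}. First I would observe that $\mathrm{Supp}(\pi)$ is the union of $\mathrm{Supp}(\pi)\cap\bar{B}_\epsilon(x_{n+1})$ and $\mathrm{Supp}(\pi)\setminus\bar{B}_\epsilon(x_{n+1})$, so \cref{lem:max-supremum} gives
\[
  M_{n+1} = \max\left\{ \sup_{x\in\mathrm{Supp}(\pi)\cap\bar{B}_\epsilon(x_{n+1})}\frac{\pi(x)}{\tilde{K}_{n+1}(x)},\ \sup_{x\in\mathrm{Supp}(\pi)\setminus\bar{B}_\epsilon(x_{n+1})}\frac{\pi(x)}{\tilde{K}_{n+1}(x)} \right\},
\]
while, as recorded above, $M_n = nV\max\set{M_n',M_n''}$. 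By \cref{lem:outer-bound} the outer supremum equals $(n+1)VM_n''$, and by \cref{lem:inner-bound} the inner supremum is at most $nVQ_nM_n'\le nVM_n'$ since $Q_n\le 1$.

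For sufficiency, assume $(1+1/n)M_n''\le M_n'$. Then the outer supremum satisfies $(n+1)VM_n'' = nV(1+1/n)M_n''\le nVM_n'\le nV\max\set{M_n',M_n''}=M_n$, and the inner supremum satisfies $nVQ_nM_n'\le nVM_n'\le M_n$; taking the maximum of the two gives $M_{n+1}\le M_n$.

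For necessity I would argue the contrapositive: suppose $(1+1/n)M_n''> M_n'$. Since $M_{n+1}$ dominates its outer part, $M_{n+1}\ge (n+1)VM_n''$. Now $(n+1)VM_n'' = nV(1+1/n)M_n''> nVM_n'$, and also $(n+1)VM_n''> nVM_n''$ because $n+1>n$ and $M_n''>0$; hence $(n+1)VM_n''> nV\max\set{M_n',M_n''}=M_n$, so $M_{n+1}>M_n$. Combining the two implications yields the stated equivalence.

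I do not expect a real conceptual obstacle here; the only point requiring care is bookkeeping around degenerate configurations — for instance $\mathrm{Supp}(\pi)$ lying entirely inside or outside $\bar{B}_\epsilon(x_{n+1})$, or $m_n(x)=0$ for some $x\in\mathrm{Supp}(\pi)$ making the relevant suprema infinite — which should either be ruled out by the standing assumptions on $\pi$ (compact support, positivity on its support) or handled as easy special cases. One should also confirm that passing from the inner supremum to the upper bound of \cref{lem:inner-bound} loses nothing: it does not, since necessity uses only the outer supremum as a \emph{lower} bound on $M_{n+1}$, while sufficiency needs only an \emph{upper} bound on the inner supremum.
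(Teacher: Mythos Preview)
Your proof is correct and follows essentially the same route as the paper: the same inner/outer decomposition via \cref{lem:max-supremum}, with sufficiency coming from \cref{lem:outer-bound} and \cref{lem:inner-bound}, and necessity from bounding $M_{n+1}$ below by the outer supremum $(n+1)VM_n''$. Your necessity argument is in fact slightly tidier than the paper's, which splits into the subcases $M_n''>M_n'$ (handled separately by \cref{prop:larger-exterior-no-improvement}) and $M_n''\le M_n'<(1+1/n)M_n''$, whereas you cover both at once by also observing $(n+1)VM_n''>nVM_n''$.
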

\begin{proof}
To establish sufficiency, we have:
\begin{align}
    M_{n+1} &\leq \max\set{(n+1) VM_n'', nVQ_n M_n'} \\
    &\leq \max\set{nVM_n', nVQ_n M_n'} \\
    &= nVM_n' \\
    &= M_n
\end{align}
To show that this is actually necessary, consider $M_n'' \leq M_n' < (1 + 1/n) M_n''$. Then, from \cref{lem:outer-bound} we know,
\begin{align}
    \sup_{x\in \mathrm{Supp}(\pi) \setminus \bar{B}_\epsilon(x_{n+1})}\frac{\pi(x)}{\tilde{K}_{n+1}(x)} = (n+1) V M_n''.
\end{align}
But from \cref{lem:inner-bound} we have,
\begin{align}
    \sup_{x \in \mathrm{Supp}(\pi)\cap \bar{B}_\epsilon(x_{n+1})} \frac{\pi(x)}{\tilde{K}_{n+1}(x)} &\leq nV Q_n M_n' \\
    &< (n+1)VQ_n M_n'' \\
    &\leq (n+1)VM_n''
\end{align}
Now $M_n = nVM_n' < (n+1)VM_n'' = M_{n+1}$.
\end{proof}

\Cref{prop:larger-exterior-no-improvement} informs us that if the worst case bound is outside of $\bar{B}_\epsilon(x_{n+1})$, then the adaptation has actually made the bound worse. This is because the density outside of $\bar{B}_\epsilon(x_{n+1})$ behaves in a predictable manner: it is decreased by a factor of $n / (n+1)$. Therefore, if the worst case bound on the ratio occurs outside of $\bar{B}_\epsilon(x_{n+1})$, the ratio can only get worse. At the same time, \cref{prop:necessary-and-sufficient-kernel-proposal} informs us that the worst-case bound on the ratio inside $\bar{B}_\epsilon(x_{n+1})$ must be greater than the worst-case bound outside of $\bar{B}_\epsilon(x_{n+1})$ by at least a factor of $(1 + 1/n)$ in order for the inclusion of the additional observation $x_{n+1}$ to improve the worst-case bound of the ratio of the target density to the kernel density estimator.

\clearpage
\section{Diminishing Adaptation and Containment for Mixture Kernels}
\label{app:mixture-kernels}
\begin{proposition}\label{prop:containment-from-doeblin}
Let $\Pi$ be a probability measure with density $\pi$ with respect to measure $\mu$. Let $\theta\in\mathcal{Y}$ and suppose that $\theta$ parameterizes a transition kernel $K_{\theta}$. Let $(\Theta_0,\Theta_1,\ldots)$ be a sequence of $\mathcal{Y}$-valued random variables.
Suppose that with probability $1-\delta$ there exists $M\equiv M(\delta)\in\mathbb{N}$ such that $K_{\Theta_n}(x, \mathrm{d}x') \geq \frac{\pi(x')}{M} \mu(\mathrm{d}x')$ for every $n=1,2,\ldots$. Then $(K_{\Theta_0},K_{\Theta_1},\ldots)$ exhibits containment.
\end{proposition}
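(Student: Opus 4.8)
The plan is to turn the hypothesized Doeblin-type minorization into a uniform geometric ergodicity bound for each kernel $K_{\Theta_n}$, with a convergence rate that is independent of $n$ on a high-probability event, and then read containment off \cref{def:containment}.

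First I would record the consequence of the minorization for a single kernel. Suppose $\theta\in\mathcal{Y}$ satisfies $K_\theta(x,\mathrm{d}x')\geq\frac{\pi(x')}{M}\mu(\mathrm{d}x')$ for all $x\in\mathcal{X}$, and that $\Pi$ is stationary for $K_\theta$ (which holds in the settings where this proposition is invoked, e.g. a mixture of an independent Metropolis-Hastings kernel and a MALA kernel, each of which preserves $\Pi$). Integrating the minorization over $A\in\mathfrak{B}(\mathcal{X})$ gives $K_\theta(x,A)\geq\frac1M\Pi(A)$, which is exactly the conclusion of \cref{cor:small-state-space}; hence the mixture decomposition of \eqref{eq:transition-mixture} holds with a genuine probability kernel $\tilde K_\theta$, and the regeneration argument in the proof of \cref{thm:adaptive-total-variation-rate} — which uses only that conclusion together with stationarity, and therefore applies verbatim to the constant sequence $\theta_i\equiv\theta$ — yields $\Vert K_\theta^n(x,\cdot)-\Pi(\cdot)\Vert_{\mathrm{TV}}\leq 2(1-1/M)^n$ for all $x\in\mathcal{X}$ and all $n\geq1$.

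Next, fix $\epsilon>0$ and $\delta>0$, let $M=M(\delta)$ be the constant furnished by the hypothesis, and let $E_\delta$ be the event that $K_{\Theta_n}(x,\mathrm{d}x')\geq\frac{\pi(x')}{M}\mu(\mathrm{d}x')$ holds for every $n$ and every $x\in\mathcal{X}$, so that $\mathrm{Pr}[E_\delta]\geq 1-\delta$. Choose an integer $N=N(\epsilon,\delta)\geq1$ with $2(1-1/M)^N<\epsilon$. On $E_\delta$ the previous paragraph applies to every $K_{\Theta_n}$ with the \emph{same} $M$, so $\Vert K_{\Theta_n}^N(x,\cdot)-\Pi(\cdot)\Vert_{\mathrm{TV}}\leq 2(1-1/M)^N<\epsilon$ for all $n$ and all $x$, whence $W_\epsilon(X_n,K_{\Theta_n})\leq N$ for all $n$ on $E_\delta$. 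Consequently $\mathrm{Pr}[\sup_n W_\epsilon(X_n,K_{\Theta_n})\leq N]\geq 1-\delta$, and since $\delta>0$ was arbitrary this shows that, for every $\epsilon>0$, the sequence $(W_\epsilon(X_n,K_{\Theta_n}))_{n\in\mathbb{N}}$ is bounded in probability. By \cref{def:containment}, $(\Theta_n)_{n\in\mathbb{N}}$ exhibits containment.

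The substantive point — and the main obstacle if one is not careful — is the uniformity in $n$: the hypothesis deliberately supplies one constant $M(\delta)$, hence one $\epsilon$-mixing time $N$, valid for all $K_{\Theta_n}$ simultaneously on $E_\delta$, so no summation or union bound over $n$ is incurred. The secondary ingredient is stationarity of $\Pi$ for each $K_{\Theta_n}$, needed so that the minorization yields convergence to $\Pi$ itself rather than to some other invariant law; this should be treated as part of the ambient hypotheses and holds by construction for the mixture kernels of \cref{app:mixture-kernels}. (The $n=0$ term $W_\epsilon(x_0,K_{\theta_0})$ is a fixed finite number under these hypotheses and is absorbed into the bound.)
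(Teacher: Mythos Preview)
Your proposal is correct and follows essentially the same route as the paper: Doeblin minorization $\Rightarrow$ uniform geometric ergodicity with rate $(1-1/M)^n$ for each $K_{\Theta_n}$ on the high-probability event, then pick $N(\epsilon,\delta)$ large enough and read off containment from \cref{def:containment}. If anything you are more explicit than the paper in justifying the minorization-to-geometric-bound step (via \cref{cor:small-state-space} and the regeneration argument of \cref{thm:adaptive-total-variation-rate}) and in flagging that stationarity of $\Pi$ for each $K_{\Theta_n}$ is an implicit hypothesis; the paper simply asserts the bound $\Vert K_{\Theta_n}^m(x,\cdot)-\Pi(\cdot)\Vert_{\mathrm{TV}}\leq(1-1/M)^m$ and then writes down the explicit choice $N=\lceil\log\epsilon/\log(1-1/M)\rceil$.
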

\begin{proof}
By assumption, with probability $1-\delta$, there exists $M\equiv M(\delta)\in\mathbb{N}$ such that $K_{\Theta_n}(x, \mathrm{d}x') \geq \frac{\pi(x')}{M} \mu(\mathrm{d}x')$ for every $n=1,2,\ldots$ and any $x\in\mathcal{X}$. Therefore, with probability $1-\delta$, there exists $M\equiv M(\delta)\in\mathbb{N}$ such that,
\begin{align}
    \Vert K^m_{\Theta_n}(x, \cdot) -\Pi(\cdot)\Vert_{\mathrm{TV}} \leq \paren{1 - \frac{1}{M}}^m
\end{align}
for any $m\in\mathbb{N}$, every $n=1,2,\ldots$, and $x\in \mathcal{X}$. Let $\epsilon > 0$ be arbitrary. Then, with probability $1-\delta$, there exists $N=N(\epsilon,\delta)\in\mathbb{N}$ such that
\begin{align}
    \Vert K^N_{\Theta_n}(x, \cdot) - \Pi(\cdot)\Vert_{\mathrm{TV}} \leq \epsilon.
\end{align}
for every $n=1,2,\ldots$ and any $x\in \mathcal{X}$. Namely, the choice
\begin{align}
    N(\epsilon,\delta) = \ceil[\Bigg]{\frac{\log \epsilon}{\log\paren{1 - \frac{1}{M(\delta)}}}}
\end{align}
suffices. Thus as a special case, with probability $1-\delta$, there exists $N=N(\epsilon,\delta)\in\mathbb{N}$ such that
\begin{align}
    \Vert K^N_{\Theta_n}(X_n, \cdot) - \Pi(\cdot)\Vert_{\mathrm{TV}} \leq \epsilon.
\end{align}
for every $n=1,2,\ldots$. Define the function,
\begin{align}
    W_\epsilon(x, \theta) = \inf \set{n\in \mathbb{N} : \Vert K_\theta^n(x, \cdot) - \Pi(\cdot)\Vert_{\mathrm{TV}} \leq \epsilon}.
\end{align}
Hence, with probability $1-\delta$, there exists $N=N(\epsilon,\delta)\in\mathbb{N}$ such that
\begin{align}
    W_\epsilon(X_n, \Theta_n) \leq N
\end{align}
for every $n=1,2,\ldots$. This is the containment condition.
\end{proof}

\begin{proposition}
Let $\Pi$ be probability measure with density $\pi$ with respect to measure $\mu$. Let $\theta\in\mathcal{Y}$ and suppose that $\theta$ parameterizes a probability measure $\tilde{\Pi}_\theta$ with density $\tilde{\pi}_\theta$. Let $K_{\theta}$ be the transition kernel of the independent Metropolis-Hastings sampler of $\Pi$ given $\tilde{\Pi}_{\theta}$:
\begin{align}
    K_{\theta}(x,\mathrm{d}x') = \min\set{1, \frac{\pi(x') \tilde{\pi}_{\theta}(x)}{\pi(x) \tilde{\pi}_{\theta}(x')}} \tilde{\pi}_{\theta}(x') ~\mu(\mathrm{d}x') + \paren{1 - \int_{\mathcal{X}} \min\set{1, \frac{\pi(w) \tilde{\pi}_{\theta}(x)}{\pi(x) \tilde{\pi}_{\theta}(w)}} \tilde{\pi}_{\theta}(w) ~\mu(\mathrm{d}w)} \delta_{x}(\mathrm{d}x').
\end{align}
Let $K'$ be another transition and consider the transition kernel that is formed by the mixture of $K_\theta$ and $K'$: $\hat{K}_\theta(x, A) = \alpha K_\theta(x, A) + (1-\alpha) K'(x, A)$ for $x\in \mathcal{X}$ and $A\in\mathfrak{B}(\mathcal{X})$. Let $(\Theta_0,\Theta_1,\ldots)$ be a sequence of $\mathcal{Y}$-valued random variables. If $(K_{\Theta_0}, K_{\Theta_1}, \ldots)$ exhibits diminishing adaptation then so does $(\hat{K}_{\Theta_0},\hat{K}_{\Theta_1},\ldots)$. Furthermore, suppose that with probability at least $1-\delta$ there exists $M\equiv M(\delta)$ such that $K_{\Theta_n}(x, \mathrm{d}x') \geq \frac{\pi(x')}{M} \mu(\mathrm{d}x')$ for every $n=1,2,\ldots$. Then $(\hat{K}_{\Theta_0},\hat{K}_{\Theta_1},\ldots)$ exhibits containment.
\end{proposition}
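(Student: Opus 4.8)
The plan is to exploit the fact that the $\theta$-dependence of the mixture kernel $\hat{K}_\theta(x,A) = \alpha K_\theta(x,A) + (1-\alpha)K'(x,A)$ resides entirely in the first summand: consequently (i) differences of $\hat{K}_\theta$ across adaptation steps are exactly $\alpha$ times the corresponding differences of $K_\theta$, and (ii) any minorization enjoyed by $K_\theta$ is inherited by $\hat{K}_\theta$, up to the factor $\alpha$, because $K'(x,\cdot)$ is a nonnegative measure. Throughout one assumes $\alpha\in(0,1]$.

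For diminishing adaptation, I would fix $x\in\mathcal{X}$ and $A\in\mathfrak{B}(\mathcal{X})$ and note that the $K'$ contributions cancel, giving $\hat{K}_{\Theta_{n+1}}(x,A) - \hat{K}_{\Theta_n}(x,A) = \alpha\bigl(K_{\Theta_{n+1}}(x,A) - K_{\Theta_n}(x,A)\bigr)$. By \cref{def:total-variation-norm} this yields $\Vert \hat{K}_{\Theta_{n+1}}(x,\cdot) - \hat{K}_{\Theta_n}(x,\cdot)\Vert_{\mathrm{TV}} = \alpha\Vert K_{\Theta_{n+1}}(x,\cdot) - K_{\Theta_n}(x,\cdot)\Vert_{\mathrm{TV}}$, and taking the supremum over $x$ gives $d(\hat{K}_{\Theta_{n+1}},\hat{K}_{\Theta_n}) = \alpha\, d(K_{\Theta_{n+1}},K_{\Theta_n}) \le d(K_{\Theta_{n+1}},K_{\Theta_n})$. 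Since the right-hand side converges to $0$ in probability by the diminishing adaptation hypothesis on $(K_{\Theta_n})_{n\in\mathbb{N}}$, so does the left-hand side, which is precisely \cref{def:diminishing-adaptation} for $(\hat{K}_{\Theta_n})_{n\in\mathbb{N}}$.

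For containment, I would condition on the event of probability at least $1-\delta$ on which there is an $M\equiv M(\delta)\in\mathbb{N}$ with $K_{\Theta_n}(x,\mathrm{d}x') \ge \frac{\pi(x')}{M}\mu(\mathrm{d}x')$ simultaneously for all $n\ge 1$ and all $x$. On that event, using $(1-\alpha)K'(x,\mathrm{d}x') \ge 0$, one gets $\hat{K}_{\Theta_n}(x,\mathrm{d}x') \ge \alpha K_{\Theta_n}(x,\mathrm{d}x') \ge \frac{\alpha\pi(x')}{M}\mu(\mathrm{d}x') \ge \frac{\pi(x')}{M'}\mu(\mathrm{d}x')$ with $M' = \lceil M/\alpha\rceil \in\mathbb{N}$; replacing $M/\alpha$ by the larger integer $M'$ only weakens the bound, and $M'\ge M\ge 1$. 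Hence $(\hat{K}_{\Theta_n})_{n\in\mathbb{N}}$ satisfies the hypothesis of \cref{prop:containment-from-doeblin} with the constant $M'(\delta)$ in place of $M(\delta)$, and containment follows immediately from that proposition. The only point requiring care is this bookkeeping of the Doeblin coefficient — dividing by $\alpha<1$ and rounding up must keep it an integer at least $1$ — together with the observation that the containment argument genuinely uses $\alpha>0$ (for $\alpha=0$ the kernel does not adapt and any minorization would have to be supplied by $K'$ itself); I anticipate no substantive obstacle beyond this.
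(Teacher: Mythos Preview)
Your proposal is correct and follows essentially the same route as the paper: for diminishing adaptation you cancel the common $K'$ term to obtain $d(\hat{K}_{\Theta_{n+1}},\hat{K}_{\Theta_n}) = \alpha\, d(K_{\Theta_{n+1}},K_{\Theta_n})$, and for containment you use $\hat{K}_{\Theta_n}(x,\mathrm{d}x') \ge \alpha K_{\Theta_n}(x,\mathrm{d}x') \ge \frac{\alpha\pi(x')}{M}\mu(\mathrm{d}x')$ and then invoke \cref{prop:containment-from-doeblin}. Your extra bookkeeping (rounding $M/\alpha$ up to an integer, and noting explicitly that $\alpha>0$ is required) is a harmless refinement the paper does not spell out.
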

\begin{proof}
\begin{align}
    d(\hat{K}_{\Theta_n}, \hat{K}_{\Theta_{n+1}}) &= \sup_{x\in\mathcal{X}} \Vert \hat{K}_{\Theta_n}(x, \cdot) - \hat{K}_{\Theta_{n+1}}(x, \cdot)\Vert_{\mathrm{TV}} \\
    &= \sup_{x\in\mathcal{X}} \Vert \alpha K_{\Theta_n}(x, \cdot) - \alpha K_{\Theta_{n+1}}(x, \cdot)\Vert_{\mathrm{TV}} \\
    &= \alpha \sup_{x\in\mathcal{X}} \Vert K_{\Theta_n}(x, \cdot) - K_{\Theta_{n+1}}(x, \cdot)\Vert_{\mathrm{TV}} \\
    &= \alpha d(K_{\Theta_n}, K_{\Theta_{n+1}}).
\end{align}
Hence, if $d(K_{\Theta_n}, K_{\Theta_{n+1}})$ converges in probability to zero then so does $d(\hat{K}_{\Theta_n}, \hat{K}_{\Theta_{n+1}})$.

To show containment, observe that with probability at least $1-\delta$ there exists $M\equiv M(\delta)$ such that,
\begin{align}
    \hat{K}_\theta(x, \mathrm{d}x') &\geq \alpha K_\theta(x, \mathrm{d}x') \\
    &\geq \alpha \frac{\pi(x')}{M} \mu(\mathrm{d}x').
\end{align}
Containment then follows from \cref{prop:containment-from-doeblin}.
\end{proof}

\clearpage
\section{Violations of Stationarity}\label{app:violations-of-stationarity}

\begin{figure*}[t!]
    \centering
    \begin{subfigure}[t]{0.3\textwidth}
        \centering
        \includegraphics[width=\textwidth]{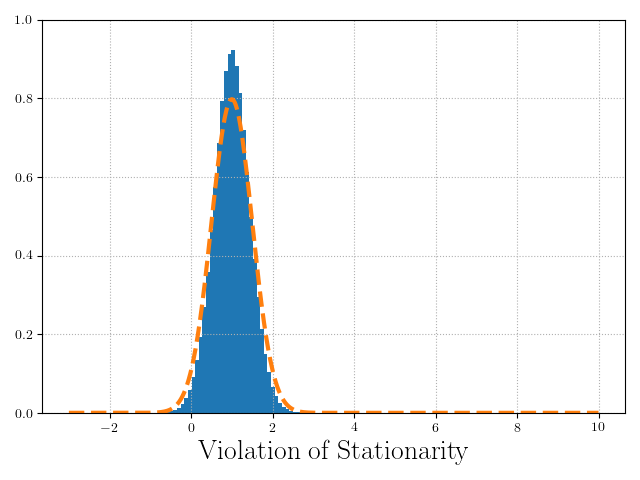}
        \caption{100 Steps}
        \label{subfig:stationarity-violation-100}
    \end{subfigure}
    \hfill
    \begin{subfigure}[t]{0.3\textwidth}
        \centering
        \includegraphics[width=\textwidth]{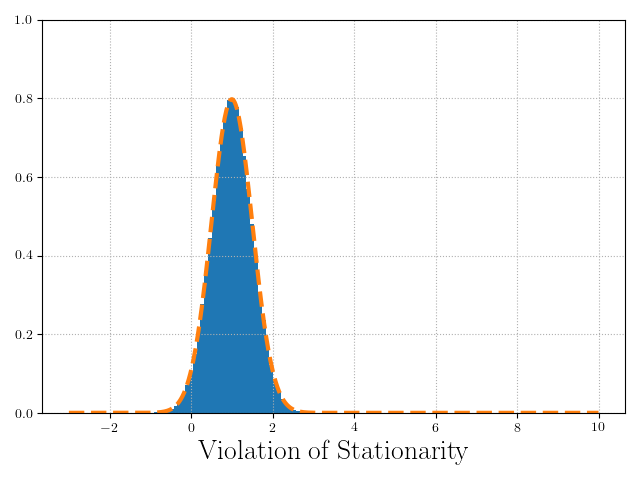}
        \caption{10,000 Steps}
        \label{subfig:stationarity-violation-10000}
    \end{subfigure}
    \hfill
    \begin{subfigure}[t]{0.3\textwidth}
        \centering
        \includegraphics[width=\textwidth]{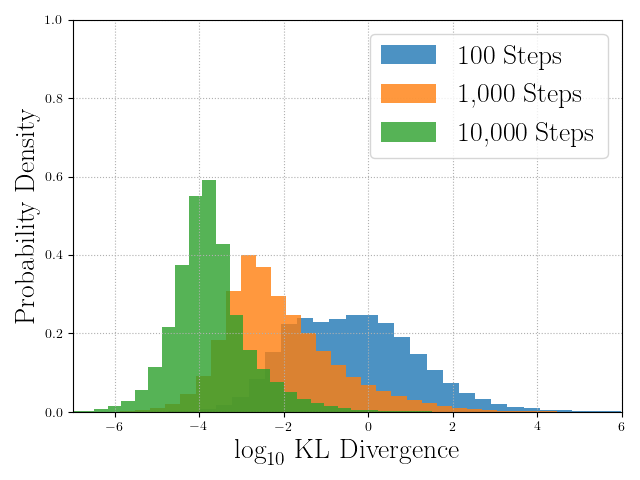}
        \caption{Forward KL divergence by Step}
        \label{subfig:stationarity-violation-kl}
    \end{subfigure}
    \caption{Examinations of the violations of stationarity that result by maximizing the pseudo-likelihood of the accepted samples as an adaptation mechanism. After one-hundred steps of adaptation, one clearly perceives that the distribution of states does {\it not} follow the target distribution. However, after ten-thousand steps, the distribution of state is closer to the target distribution. We also show the KL divergence between the target distribution at the proposal distribution according to the number of steps of the chain. Results are computed over one-million random simulations.}
    \label{fig:stationarity-violation}
\end{figure*}

In general, adaptation of the parameters of the transition kernel will destroy stationarity. However, if the adaptations and the state of the chain enjoy a prescribed independence condition, then stationarity of the target distribution can be conserved.
\begin{proposition}\label{prop:stochastic-sequence-stationarity}
  Suppose that $(\Theta_0,\Theta_1,\ldots)$ is a {\it stochastic} $\mathcal{Y}$-valued sequence. Let $(K_{\Theta_n})_{n\in\mathbb{N}}$ be an associated sequence of Markov transition kernels which produces an $\mathcal{X}$-valued chain as $X_{n+1}\sim K_{\Theta_n}(X_n, \cdot)$. Suppose further that $\Theta_n$ and $X_n$ are independent given the history of the chain to step $n-1$. If $\Pi$ is stationary for each $K_{\Theta_n}$, then $\Pi$ is also the stationary distribution of $(X_n)_{n\in\mathbb{N}}$.
\end{proposition}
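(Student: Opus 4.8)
The statement is the stochastic refinement of \cref{prop:deterministic-sequence-stationarity}, and I would prove it by the same template: induction on $n$ showing that $X_0\sim\Pi$ forces $X_n\sim\Pi$ for every $n$. The base case is the hypothesis. For the inductive step, fix $A\in\mathfrak{B}(\mathcal{X})$, assume $X_n\sim\Pi$, and use the transition law $X_{n+1}\mid(X_n=x,\Theta_n=\theta)\sim K_\theta(x,\cdot)$ to write $\mathrm{Pr}[X_{n+1}\in A]=\mathbb{E}[K_{\Theta_n}(X_n,A)]$. The deterministic proof collapses the parameter via the Dirac mass $\delta_{\theta_n}$; here the assumption that $X_n$ and $\Theta_n$ are independent given the history $\mathcal{H}_{n-1}$ plays that role, letting me decouple the kernel's state argument from its parameter so that $\mathbb{E}[K_{\Theta_n}(X_n,A)]$ factors as an integral over the law of $\Theta_n$ of the quantity $\int_{\mathcal{X}}K_\theta(x,A)\,\mathrm{Pr}[X_n\in\mathrm{d}x]$. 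Applying Tonelli to integrate over $\Theta_n$ last, the inner integral equals $\int_{\mathcal{X}}K_\theta(x,A)\,\Pi(\mathrm{d}x)=\Pi(A)$ by the inductive hypothesis together with the assumed stationarity of each $K_\theta$ for $\Pi$, and the outer integral then contributes a total mass of one, giving $\mathrm{Pr}[X_{n+1}\in A]=\Pi(A)$.

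An equivalent and arguably cleaner route avoids the induction entirely: condition once and for all on the $\sigma$-algebra $\mathcal{G}=\sigma(\Theta_0,\Theta_1,\ldots)$ generated by the whole parameter sequence. The conditional-independence hypothesis is exactly what is needed to see that, conditionally on $\mathcal{G}$, the process $(X_n)_{n\in\mathbb{N}}$ is a time-inhomogeneous Markov chain with the (now deterministic) kernels $K_{\Theta_n}$ and with $\mathrm{Pr}[X_0\in\cdot\mid\mathcal{G}]=\Pi$; \cref{prop:deterministic-sequence-stationarity} then applies pathwise to the realized sequence, giving $\mathrm{Pr}[X_n\in A\mid\mathcal{G}]=\Pi(A)$ almost surely, and taking expectations removes the conditioning.

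The main obstacle, in either route, is the measure-theoretic bookkeeping around the conditioning: making precise how ``$X_n$ and $\Theta_n$ are independent given the history'' licenses the factorization above --- equivalently, that it propagates so the $\mathcal{G}$-conditional process is genuinely a fixed-kernel chain with initial law $\Pi$, with no residual coupling between $X_n$ and $\Theta_n$ surviving the conditioning. Everything else (the interchange of the order of integration, the defining property $\int_{\mathcal{X}}K_\theta(x,A)\,\Pi(\mathrm{d}x)=\Pi(A)$ of stationarity, and the base case) is routine, just as in the deterministic proposition.
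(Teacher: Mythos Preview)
The paper does not actually supply a proof of \cref{prop:stochastic-sequence-stationarity}; it is stated without argument in the appendix on violations of stationarity, so your attempt has to stand on its own.

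There is a genuine gap at the factorization step in your first route, and it is not merely bookkeeping. You assert that conditional independence of $X_n$ and $\Theta_n$ given $\mathcal{H}_{n-1}$ lets you factor $\mathbb{E}[K_{\Theta_n}(X_n,A)]$ as $\int\mathrm{Pr}[\Theta_n\in\mathrm{d}\theta]\int_{\mathcal{X}}K_\theta(x,A)\,\mathrm{Pr}[X_n\in\mathrm{d}x]$, but that product form requires \emph{unconditional} independence. Conditional independence only yields the factorization inside a conditional expectation given $\mathcal{H}_{n-1}$, where the conditional law of $X_n$ is $K_{\Theta_{n-1}}(X_{n-1},\cdot)$, not $\Pi$, and the inductive hypothesis $X_n\sim\Pi$ (a statement about the marginal) cannot be inserted there. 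Concretely: take $\mathcal{X}=\{0,1\}$ with $\Pi$ uniform, $K_a(x,\cdot)=\delta_x$ and $K_b(x,\cdot)=\delta_{1-x}$ (both $\Pi$-stationary); set $X_0\sim\Pi$, $\Theta_0\equiv a$, and $\Theta_1=b$ if $X_0=1$, $\Theta_1=a$ otherwise. Given $\mathcal{H}_0=\sigma(X_0)$ both $X_1=X_0$ and $\Theta_1$ are deterministic, hence trivially conditionally independent, yet $X_2=0$ almost surely and $\Pi$ is not preserved. The same obstruction undermines your second route: the stated hypothesis says nothing about the law of $X_{n+1}$ conditional on \emph{future} parameters $\Theta_{n+1},\Theta_{n+2},\ldots$, so you cannot argue that the $\mathcal{G}$-conditional process is the fixed-kernel chain required to invoke \cref{prop:deterministic-sequence-stationarity} pathwise. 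The statement as literally written seems to need a stronger hypothesis --- unconditional independence of $X_n$ and $\Theta_n$, which is exactly what the product $\delta_{\theta_n}(\mathrm{d}\theta)\cdot\mathrm{Pr}[X_n\in\mathrm{d}x]$ in the deterministic proof encodes and what holds in the paper's intended application (\cref{ex:normalizing-flow-stochastic}) once the fresh randomness driving $\Theta_n$ is taken independent of the entire $X$-chain, not just of the step-$n$ innovation.
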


We first give an illustration of why maximizing the pseudo-likelihood objective may not always be beneficial. In particular, we look for evidence of violations of stationarity; violations of stationarity mean that if one begins with a sample from the target distribution and transforms it according to several steps of the transition kernel with adaptations, then the final state may not be distributed according to the target distribution. This can be interpreted as an undesirable form of sample degradation wherein applications of an adaptive transition kernel move exact samples further from the target distribution.

As a simple example, we consider sampling $\mathrm{Normal}(1, 1/2)$ using a proposal distribution $\mathrm{Normal}(\mu, \sigma^2)$; the proposal distribution can be interpreted as a simple normalizing flow consisting of a shift and scale applied to a standard normal base distribution. We consider adapting the parameters of the proposal distribution by computing the maximum likelihood estimates of the mean and standard deviation using the accepted samples. Specifically, let $(X_0, X_1,\ldots,X_n)$ denote the states of the chain to step $n$; then the parameters of the proposal distribution at step $n+1$ are $\mu_{n+1} = (n+1)^{-1} \sum_{i=0}^n X_i$ and $\sigma_{n+1} = \sqrt{(n+1)^{-1}\sum_{i=0}^n X_i^2 - \mu_{n+1}^2}$. Results for this adaptation mechanism are shown in \cref{fig:stationarity-violation}; after one-hundred steps of adaptation, there is a clear violation of stationarity, but which has nearly vanished by the ten-thousandth step. We observe that adaptations do tend to reduce the forward KL divergence between the target and the proposal distribution, with the closeness improving as the number of adaptation steps increases.
\clearpage
\section{Independence and Product Transition Kernels}
\label{app:product-kernels}
Let $\Pi$ be a probability measure of $\R^m$ with density $\pi$. Let $\hat{\Pi}(A_1,\ldots, A_n) = \prod_{i=1}^n \Pi(A_i)$ be the product probability measure on $(\R^m)^n = \prod_{i=1}^n \R^m$ (the product space of $n$ copies of $\R^m$). Let $(x_1,\ldots,x_n)\in (\R^m)^n$ and define the $i^\mathrm{th}$ transition kernel by,
\begin{align}
    K_i((x_1,\ldots, x_n), (\mathrm{d}y_1,\ldots,\mathrm{d}y_n)) &= \left[\alpha_i(x_i, y_i) ~\mathrm{d}y_i + \beta_i(x_i)~\delta_{x_i}(\mathrm{d}y_i)\right] \cdot \prod_{j=1,j\neq i}^n \delta_{x_j}(\mathrm{d}y_j) \\
    \alpha_i(x, y) &= \min\set{1, \frac{\pi(y) \tilde{\pi}_i(x\vert y)}{\pi(x) \tilde{\pi}_i(y\vert x)}} \tilde{\pi}_i(y\vert x) \\
    \beta_i(x) &= \paren{1 - \int_{\R^m} \alpha_i(x, w)~\mathrm{d}w}
\end{align}
Thus we can understand $K_i$ as a transition kernel that applies a Metropolis-Hastings accept-reject decision to the $i^\mathrm{th}$ dimension of the posterior using a proposal density $\tilde{\pi}_i(\cdot\vert x_i)$ on $\R^m$ and leaves all other dimensions unchanged. Because of the product structure of the joint distribution, these Metropolis-Hastings updates preserve the joint distribution even though the accept-reject decision is computed only on the $i^\mathrm{th}$ marginal.

As an example, consider $K_1$ and $K_2$. What is the composition transition kernel generated by first applying $K_1$ and subsequently applying $K_2$? By the Chapman-Kolmogorov formula, it is,
\begin{align}
    &(K_2\circ K_1)((x_1,\ldots, x_n), (A_1,\ldots, A_n)) \\
    &= \int_{\R^m} \cdots\int_{\R^m} K_2((y_1,\ldots, y_n), (A_1,\ldots, A_n)) K_1((x_1,\ldots, x_n), (\mathrm{d}y_1,\ldots,\mathrm{d}y_n)) \\
    \begin{split}
    &=\int_{\R^m} \cdots\int_{\R^m} \paren{\left[\int_{A_2} \alpha_2(y_2, w) ~\mathrm{d}w + \beta_2(y_2)~\delta_{y_2}(A_2)\right] \cdot \prod_{j=1,j\neq 2}^n \delta_{y_j}(A_j)} \\
    &\qquad\times ~ \paren{\left[\alpha_1(x_1, y_1) ~\mathrm{d}y_1 + \beta_1(x_1)~\delta_{x_1}(\mathrm{d}y_1)\right] \cdot \prod_{j=1,j\neq 1}^n \delta_{x_j}(\mathrm{d}y_j)}
    \end{split} \\
    &= \left[\int_{A_1} \alpha_1(x_1, w)~\mathrm{d}w + \beta_1(x_1) \delta_{x_1}(A_1)\right]\cdot\left[\int_{A_2} \alpha_2(x_2, w)~\mathrm{d}w + \beta_2(x_2) \delta_{x_2}(A_2)\right] \cdot \prod_{j=3}^n \delta_{x_j}(A_j).
\end{align}
Notice that the composition kernel assumes a factorized form. Suppose that $(x_1',\ldots, x_n') \sim (K_2\circ K_1)((x_1,\ldots, x_n), \cdot)$. Drawing a sample from this composition kernel can be achieved by setting $x'_j = x_j$ for $j=3,\ldots, n$ and sampling $x'_1$ and $x'_2$ independently from the distributions
\begin{align}
    \mathrm{Pr}\left[x'_i \in A\vert x_i\right] = \int_{A} \alpha_i(x_i, w)~\mathrm{d}w + \beta_i(x_i) \delta_{x_i}(A),
\end{align}
for $i\in\set{1, 2}$. 

The fact that any composition kernel $K_k\circ\cdots\circ K_1$ has this product distribution form can be established via induction. Assume
\begin{align}
    (K_k\circ\cdots \circ K_1)((x_1,\ldots, x_n), (\mathrm{d}y_1,\ldots,\mathrm{d}y_n)) = \prod_{i=1}^k \left[\alpha_i(x_i, y_i)~\mathrm{d}y_i + \beta_i(x_i)~\delta_{x_i}(\mathrm{d}y_i)\right] \prod_{j=k+1}^n \delta_{x_j}(\mathrm{d}y_j)
\end{align}
Then,
\begin{align}
    &(K_{k+1}\circ\cdots\circ K_1)((x_1,\ldots, x_n), (A_1,\ldots, A_n)) \\
    &= \int_{(\R^m)^n} K_{k+1}((y_1,\ldots, y_n), (A_1,\ldots, A_n)) (K_k\circ\cdots \circ K_1)((x_1,\ldots, x_n), (\mathrm{d}y_1,\ldots,\mathrm{d}y_n)) \\
    \begin{split}
    &= \int_{(\R^m)^n} \left[\paren{\int_{A_{k+1}} \alpha_{k+1}(y_{k+1}, w)~\mathrm{d}w + \beta_{k+1}(y_{k+1}) \delta_{y_{k+1}}(A_{k+1})} \prod_{j=1,j\neq k+1}^n \delta_{y_j}(A_j)\right] \\
    &\qquad \left[\prod_{i=1}^k \left[\alpha_i(x_i, y_i)~\mathrm{d}y_i + \beta_i(x_i)~\delta_{x_i}(\mathrm{d}y_i)\right] \prod_{j=k+1}^n \delta_{x_j}(\mathrm{d}y_j)\right]
    \end{split} \\
    &= \left[\prod_{i=1}^{k+1}\paren{\int_{A_{i}} \alpha_{i}(y_{i}, w)~\mathrm{d}w + \beta_{i}(y_{i}) \delta_{x_{i}}(A_{i})}\right] \left[\prod_{j=k+2}^n \delta_{x_j}(A_j)\right].
\end{align}
This verifies that the composition kernel has the desired product structure. The special case of $k=n$ implies,
\begin{align}
    (K_n\circ\cdots\circ K_1)((x_1,\ldots, x_n), (\mathrm{d}y_1,\ldots,\mathrm{d}y_n)) = \prod_{i=1}^n \left[\alpha_i(x_i, y_i)~\mathrm{d}y_i + \beta_i(x_i)~\delta_{x_i}(\mathrm{d}y_i)\right].
\end{align}
To sample $(x_1',\ldots, x_n')\sim (K_n\circ\cdots\circ K_1)((x_1,\ldots, x_n), \cdot)$, one may simply sample independently from the distributions,
\begin{align}
    \mathrm{Pr}\left[x'_i \in A\vert x_i\right] = \int_{A} \alpha_i(x_i, w)~\mathrm{d}w + \beta_i(x_i) \delta_{x_i}(A)
\end{align}
for $i=1,\ldots, n$. Each of these samples may be drawn in parallel because the $i^\mathrm{th}$ sample depends only on $x_i$.

\clearpage
\section{Reasons for Violations of Containment}\label{app:violations-containment}

\subsubsection{Inexpressive Family} If the family of proposal densities is not sufficiently expressive, it may be that there does not exist any $M\geq 1$ satisfying $\log \pi(x) - \log\tilde{\pi}_{\theta}(x) < \log M$ for any $\theta$. For instance, on $\R^n$, if the proposal densities have tails that vanish exponentially quickly and the target density's tails diminish only polynomially, then the family of proposal densities is not sufficiently expressive for \cref{prop:containment-log-density} to hold; see \citet{pmlr-v119-jaini20a} for details on the tail behavior of normalizing flows. On the other hand, if the family of proposal distributions has a universality property (see, {\it inter alia} \citet{Kobyzev_2020}), then this concern can be alleviated.
\subsubsection{Mode Collapse} Proposal densities obtained through the minimization of certain loss functions, including $\mathbb{KL}(\tilde{\pi}_\theta\Vert\pi)$, may result in the modes of $\pi$ not being properly represented in the proposal density $\tilde{\pi}_\theta$. In the case of $\mathbb{KL}(\tilde{\pi}_\theta\Vert\pi)$, the mode-seeking behavior of the loss function can cause the mode-collapse phenomenon, which can invalidate the assumption of \cref{prop:containment-log-density}. One could alleviate this concern by targeting a tempered version of the target density or if one has confidence that mode collapse will not occur (for instance if the target density is unimodal).
\subsubsection{Unstable Loss} If the adaptations produced by attempting to minimize the loss function are ill-behaved (for instance if the step-size is too large, leading to divergent adaptations), then the sequence $(\Theta_n)_{n\in\mathbb{N}}$ may parameterize poor proposal densities which cause the failure of \cref{eq:containment-log-density}. If the algorithm relies of the convergence of the sequence $(\Theta_n)_{n\in\mathbb{N}}$, then ill-behaved adaptations may lead to a violation of ergodicity due to a failure of both containment and diminishing adaptation. 
\clearpage
\section{Pseudo-Likelihood Algorithm}\label{app:pseudo-likelihood-algorithm}

\begin{algorithm}[t!]
  \caption{Algorithm for sampling from a target distribution by adapting a normalizing flow proposal distribution in the independent Metropolis-Hastings algorithm using the pseudo-likelihood objective.}
  \label{alg:pseudo-likelihood-adaptive}
  \begin{algorithmic}
    \STATE {\bf Input:} A sequence of step-sizes $(\epsilon_0,\epsilon_1,\ldots)$; a sequence of adaptation probabilities $(\alpha_0, \alpha, \ldots)$ an initial state $x_0$ and initial parameter $\theta_0$; a target distribution with density $\pi : \mathcal{X}\to\R$.
    \FOR{$n = 0,1,2,\ldots$}
    \STATE Sample a proposal state from the current proposal distribution $\tilde{x}_{n+1} \sim \tilde{\Pi}_{\theta_{n-1}}$.
    \STATE Generate $u\sim \mathrm{Uniform}(0, 1)$ and compute the Metropolis-Hastings accept-reject decision.
    \begin{align}
        a \gets u < \min\set{1, \frac{\pi(\tilde{x}_{n+1}) \tilde{\pi}_{\theta_{n}}(x_{n})}{\pi(x_{n})\tilde{\pi}_{\theta_{n}}(\tilde{x}_{n+1})}}
    \end{align}
    \IF{$a$}
    \STATE Accept the proposal $x_{n+1}\gets \tilde{x}_{n+1}$.
    \ELSE 
    \STATE Remain at current state $x_{n+1} \gets x_{n}$.
    \ENDIF
    \STATE Generate $u' \sim \mathrm{Uniform}(0, 1)$.
    \IF{$u' < \alpha_n$}
    \STATE Update the parameters
    \begin{align}
        \theta_{n+1} \gets \theta_n + \epsilon_n \nabla \log \tilde{\pi}_{\theta_n}(x_k)
    \end{align}
    where $k\sim\mathrm{Uniform}(\set{0,1\ldots, n+1})$.
    \ELSE
    \STATE Otherwise, keep the current parameters $\theta_{n+1}\gets \theta_n$.
    \ENDIF
    \ENDFOR
  \end{algorithmic}
\end{algorithm}

\clearpage
\section{Simultaneous Uniform Ergodicity on Compact Spaces}\label{app:simultaneous-uniform-ergodicity-compact}

\begin{definition}\label{def:simultaneous-uniform-ergodicity}
  A family of transition kernels $\set{K_\theta : \theta\in\mathcal{Y}}$ is said to exhibit simultaneous uniform ergodicity if, for all $\epsilon>0$, there exists $n\in \mathbb{N}$ such that $\Vert K_\theta^n(x,\cdot) - \Pi(\cdot)\Vert_{\mathrm{TV}} \leq \epsilon$ for all $\theta\in\mathcal{Y}$ and $x\in\mathcal{X}$.
\end{definition}
Simultaneous uniform ergodicity is a strong condition which states that, no matter which parameter $\theta\in\mathcal{Y}$ one selects, there is a finite number of steps one can take with that fixed transition kernel in order to become arbitrarily close to the target distribution in total variation. We will see later that this condition can be made to hold for compactly supported target distributions.

We now turn our attention to the question of simultaneous uniform ergodicity. 
\begin{proposition}\label{prop:adaptive-simultaneous-ergodicity}
  Let $\mathcal{X}$ and $\pi$ satisfy the conditions of \cref{cor:compact-uniformly-ergodic}. Suppose that every $\theta\in\mathcal{Y}$ parameterizes a probability measure $\tilde{\Pi}_\theta$ on $\mathfrak{B}(\mathcal{X})$ whose density $\tilde{\pi}_{\theta}$ is continuous and satisfies $\mathrm{Supp}(\pi)\subseteq \mathrm{Supp}(\tilde{\pi}_{\theta})$. Suppose further that for all $\theta\in\mathcal{Y}$, there exists $\delta > 0$ such that
  \begin{align}
    \label{eq:minimum-probability-density} \delta \leq \min_{x\in\mathrm{Supp}(\pi)} \tilde{\pi}_{\theta}(x)
  \end{align}
  Then the family of Markov chain transition operators of the independent Metropolis-Hastings sampler of $\Pi$ given $\tilde{\Pi}_\theta$ satisfies the simultaneous uniform ergodicity property.
\end{proposition}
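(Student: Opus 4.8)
The plan is to obtain simultaneous uniform ergodicity directly from the per-kernel geometric rate of \cref{prop:independent-metropolis-hastings-uniformly-ergodic}; the only new work is to show that the contraction rate can be chosen uniformly over $\theta\in\mathcal{Y}$, which is where compactness of $\mathcal{X}$ and the uniform lower bound \eqref{eq:minimum-probability-density} enter.

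First I would fix $\theta\in\mathcal{Y}$ and set $M_\theta=\sup_{x\in\mathrm{Supp}(\pi)}\pi(x)/\tilde\pi_\theta(x)$. For each such $\theta$ the hypotheses of \cref{cor:compact-uniformly-ergodic} hold --- $\mathcal{X}$ compact, $\pi$ and $\tilde\pi_\theta$ continuous, and $\mathrm{Supp}(\pi)\subseteq\mathrm{Supp}(\tilde\pi_\theta)$ --- so $M_\theta<\infty$ and it serves as the constant $M$ in \cref{prop:independent-metropolis-hastings-uniformly-ergodic}, giving $\Vert K_\theta^n(x,\cdot)-\Pi\Vert_{\mathrm{TV}}\le 2\paren{1-1/M_\theta}^n$ for all $x\in\mathcal{X}$ and $n\in\mathbb{N}$. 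Next I would bound $M_\theta$ uniformly in $\theta$: the explicit estimate \eqref{eq:upper-bound-ergodicity-constant} from the proof of \cref{cor:compact-uniformly-ergodic} gives $M_\theta\le\paren{\max_{x\in\mathcal{X}}\pi(x)}/\paren{\min_{x\in\mathrm{Supp}(\pi)}\tilde\pi_\theta(x)}$, where the numerator is finite because $\pi$ is continuous on the compact set $\mathcal{X}$ and the denominator is at least $\delta$ by \eqref{eq:minimum-probability-density}. Hence $1\le M_\theta\le M_\delta:=\paren{\max_{x\in\mathcal{X}}\pi(x)}/\delta<\infty$ for every $\theta$, and since $M\mapsto 1-1/M$ is increasing, the rate $1-1/M_\delta\in[0,1)$ does not depend on $\theta$ or $x$.

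To finish, given $\epsilon>0$ I would use $M_\theta\le M_\delta$ to get the uniform bound $\Vert K_\theta^n(x,\cdot)-\Pi\Vert_{\mathrm{TV}}\le 2\paren{1-1/M_\delta}^n$ for all $\theta\in\mathcal{Y}$ and $x\in\mathcal{X}$, and then choose any $n\ge\lceil\log(\epsilon/2)/\log(1-1/M_\delta)\rceil$ (with $n=1$ already sufficing in the degenerate case $M_\delta=1$, where the bound is identically $0$), which yields $\Vert K_\theta^n(x,\cdot)-\Pi\Vert_{\mathrm{TV}}\le\epsilon$ for all $\theta$ and $x$ --- precisely \cref{def:simultaneous-uniform-ergodicity}. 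The one genuine obstacle is the uniformity of the lower bound: since $\mathcal{Y}$ need not be compact, the whole argument collapses unless the $\delta$ in \eqref{eq:minimum-probability-density} can be taken independent of $\theta$, so I would read that hypothesis as furnishing a single $\delta>0$ valid for every $\theta$; with that in hand the remainder is just the short chain of inequalities above and requires no further analysis of the transition kernels.
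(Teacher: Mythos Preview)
Your argument is correct and is essentially identical to the paper's own proof: define $M_\theta=\sup_{x\in\mathrm{Supp}(\pi)}\pi(x)/\tilde\pi_\theta(x)$, invoke the geometric rate from \cref{prop:independent-metropolis-hastings-uniformly-ergodic}, then use \eqref{eq:upper-bound-ergodicity-constant} together with \eqref{eq:minimum-probability-density} to bound $M_\theta\le M_\delta:=(\max\pi)/\delta$ uniformly in $\theta$, and pick $n$ from the resulting $\theta$-free rate. Your closing remark about the quantifier order in \eqref{eq:minimum-probability-density} is apt---the paper's proof likewise tacitly requires a single $\delta$ valid for all $\theta$, so your reading of the hypothesis matches the intended one.
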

A proof is given in \cref{app:proofs-concerning-compact}. Perhaps the most straight-forward mechanism to guarantee \cref{eq:minimum-probability-density} is to consider mixture distributions with a fixed distribution that shares the same support as $\pi$.
\begin{lemma}\label{lem:mixture-proposal-density}
  Suppose that $\Pi^*$ is a distribution on $\mathcal{X}$ with continuous density $\pi^*_{\Pi^*}$ such that $\mathrm{Supp}(\pi)\subseteq \mathrm{Supp}(\pi^*_{\Pi^*})$. Suppose that every $\theta\in\mathcal{Y}$ parameterizes a probability measure $\tilde{\Pi}_\theta$ on $\mathfrak{B}(\mathcal{X})$ whose density $\tilde{\pi}_{\theta}$ is continuous. Consider probability measures $\tilde{\Pi}_\theta^*$ whose densities are constructed from mixtures,
  \begin{align}
    \tilde{\pi}^*_{\theta}(x) = \beta \pi^*_{\Pi^*}(x) + (1-\beta) \tilde{\pi}_{\theta}(x),
  \end{align}
  where $\beta\in(0,1)$. Then $\tilde{\pi}^*_{\theta}$ satisfies \cref{eq:minimum-probability-density} with
  \begin{align}
      \label{eq:mixture-minimum-density} \delta = \beta \min_{x\in\mathrm{Supp}(\pi)} \pi^*_{\Pi^*}(x).
  \end{align}
\end{lemma}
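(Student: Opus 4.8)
The plan is to reduce the claim to an elementary pointwise inequality and then pass to a minimum. First I would fix an arbitrary $\theta\in\mathcal{Y}$ and observe that $\tilde{\pi}^*_\theta$ is a legitimate probability density: it is a convex combination of the nonnegative functions $\pi^*_{\Pi^*}$ and $\tilde{\pi}_\theta$, hence nonnegative, and it integrates to $\beta\cdot 1 + (1-\beta)\cdot 1 = 1$, so $\tilde{\Pi}^*_\theta$ is a well-defined probability measure. The key observation is that, because $\beta\in(0,1)$ and $\tilde{\pi}_\theta(x)\geq 0$ for every $x$, the summand $(1-\beta)\tilde{\pi}_\theta(x)$ is nonnegative, so that
\[
\tilde{\pi}^*_\theta(x) = \beta\,\pi^*_{\Pi^*}(x) + (1-\beta)\,\tilde{\pi}_\theta(x) \;\geq\; \beta\,\pi^*_{\Pi^*}(x)
\]
for all $x\in\mathcal{X}$, and in particular for all $x\in\mathrm{Supp}(\pi)$.

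Next I would take the minimum over $x\in\mathrm{Supp}(\pi)$ on both sides. Under the ambient hypotheses inherited from \cref{prop:adaptive-simultaneous-ergodicity} and \cref{cor:compact-uniformly-ergodic} ($\mathcal{X}$ compact, $\pi^*_{\Pi^*}$ continuous), the continuous function $\pi^*_{\Pi^*}$ attains its minimum over the relevant compact set, so the right-hand side equals precisely $\delta = \beta\min_{x\in\mathrm{Supp}(\pi)}\pi^*_{\Pi^*}(x)$, giving $\min_{x\in\mathrm{Supp}(\pi)}\tilde{\pi}^*_\theta(x)\geq\delta$, which is exactly \cref{eq:minimum-probability-density} for this $\theta$.

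Finally I would verify positivity and uniformity. The support inclusion $\mathrm{Supp}(\pi)\subseteq\mathrm{Supp}(\pi^*_{\Pi^*})$ ensures $\pi^*_{\Pi^*}(x)>0$ at every $x\in\mathrm{Supp}(\pi)$, and together with the attainment of the minimum this forces $\min_{x\in\mathrm{Supp}(\pi)}\pi^*_{\Pi^*}(x)>0$; since $\beta>0$ as well, we get $\delta>0$. As $\theta$ was arbitrary and $\delta$ is independent of $\theta$, the bound holds simultaneously over all $\theta\in\mathcal{Y}$, which is the form needed to feed into \cref{prop:adaptive-simultaneous-ergodicity}. The argument is essentially bookkeeping; the only point deserving care — and the closest thing to an obstacle — is the use of $\min$ rather than $\inf$, which silently leans on the compactness-and-continuity assumptions carried over from the surrounding results (and, if one wishes to be fastidious, on reading $\mathrm{Supp}(\pi)$ as its closure), so I would state those dependencies explicitly rather than prove anything deep.
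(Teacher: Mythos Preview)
Your proposal is correct and follows essentially the same approach as the paper: fix $\theta$, drop the nonnegative term $(1-\beta)\tilde{\pi}_\theta(x)$ to get the pointwise lower bound $\beta\,\pi^*_{\Pi^*}(x)$, take the minimum over $\mathrm{Supp}(\pi)$, argue $\delta>0$ from the support inclusion and $\beta>0$, and conclude by noting $\theta$ was arbitrary. Your additional remarks on compactness, attainment of the minimum, and well-definedness of the mixture density are more explicit than the paper's terse version but do not change the argument.
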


A proof is given in \cref{app:proofs-concerning-compact}. A natural choice of $\Pi^*$ would be the uniform distribution on $\mathcal{X}$. It is conceivable that one could consider adapting $\beta$ in the same way that one adapts $\theta$. However, in order to guarantee that \cref{eq:mixture-minimum-density} is greater than zero, one will require the condition that $\beta \in (\beta_*, 1)$ where $\beta_* > 0$.

\begin{example}
  Let $\Pi$ be a probability measure with density $\pi$ on a compact space $\mathcal{X}$. Let $\mathcal{Y}=\R^m$ and suppose that every $\theta\in\mathcal{Y}$ smoothly parameterizes a probability measure $\tilde{\Pi}_\theta$ on $\mathfrak{B}(\mathcal{X})$ with density $\tilde{\pi}_{\theta}$ for which $\mathrm{Supp}(\pi)= \mathrm{Supp}(\tilde{\pi}_{\theta})$. Let $\tilde{\Pi}_{\theta}^*$ be as in \cref{lem:mixture-proposal-density}. Let $(\alpha_0,\alpha_1,\ldots)$ be a sequence, bounded between zero and one, converging to zero. Consider the sequence of updates,
\begin{align}
  \theta_{n} &= \begin{cases}\theta_{n-1} - \epsilon \nabla_\theta \log \frac{\tilde{\pi}_{\theta_{n-1}}(\tilde{X}(\theta_{n-1}))}{\pi(\tilde{X}(\theta_{n-1}))} &~~\mathrm{w.p.}~~ 1-\alpha_{n-1} \\
  \theta_{n-1} & ~~\mathrm{otherwise.}
  \end{cases}
\end{align}
where $\tilde{X}\sim \tilde{\Pi}_{\theta_{n-1}}^*$.
Consider the family of Markov chain transition operators of the independent Metropolis-Hastings sampler of $\Pi$ given $\tilde{\Pi}_{\theta_n}^*$ with transition kernels $K_{\theta_n}$ where the proposal at step $n$ is $\tilde{X}$. Then by \cref{thm:ergodic-diminishing-simultaneous} the distribution of $X_{n+1} \sim K_{\theta_n}(X_n, \cdot)$ converges to $\Pi$.
\end{example}

Examples of compact spaces on which normalizing flows have been applied include the torus, the sphere, the special orthogonal group, and the Stiefel manifold \citep{DBLP:conf/icml/RezendePRAKSC20,pmlr-v89-falorsi19a}.

The reason we were required to invoke a mixture distribution in the adaptation was because it prevented any sequence from becoming arbitrarily ill-suited to sampling the target distribution; the fact that there was a global limit to how bad any proposal distribution could be allowed us to invoke simultaneous uniform ergodicity of the family of distributions. 

\clearpage
\section{Experimental details on field experiment}
\label{app:exp-field}

We provide additional details on the experiment presented in \cref{sec:phi4}. 

\paragraph{Field distribution.} The $\phi^4$ field model is a popular model used to study phase transition in statistical mechanics (see for example \citep{berglund_eyring_2017}). Here we focus on its $1d$ version, where is field is defined on the segment $[0,1]$, and impose Dirichlet boundary conditions $\phi(0)=\phi(1) = 0$. 
The energy function is an intergral over the segment of two terms:
\begin{align}
    \label{eq:phi4-app}
    U(\phi) = \int_0^1   \left[ \frac{a}{2} (\partial_s \phi )^2 + \frac{1}{4a}\left(1-\phi^2(s)\right)^2  \right] \mathrm{d}s.
\end{align}
The \emph{coupling term} $\frac{a}{2} (\partial_s \phi )^2$ encourages the smoothness of the field, while the local potential term $\frac{1}{4a}\left(1-\phi^2(s)\right)^2$ favors fields taking values close to $1$ or $-1$ over the segment. 
For large values of the parameter $a$, fields with significant statistical weights will take values close to $0$ over the entire segment $[0,1]$. As $a$ decreases, the system undergoes a \emph{phase transition} and two distinct modes forms concentrating around either $+1$ or $-1$. 

Note that here the energy function \eqref{eq:phi4-app} is symmetric under the symmetry $\phi \to - \phi$. We exploit this symmetry to provide high-quality reference samples in the experiments described next. Note however that as a biasing term is added to the energy, the statistical weights of either of the mode becomes unknown. 

The numerical experiments described next shows that the adaptive sampler with normalizing flow proposals can recover the relative statistics thanks to efficient mixing, at least at the level of discretization described. Conversely, the energy barrier between the two modes prevents a Langevin sampler from mixing in a reasonable time.

\paragraph{Numerics.} We sample the field at 100 equally spaced locations between $0$ and $1$. The RealNVP flow \citep{DBLP:conf/iclr/DinhSB17} we optimize has $5$ pairs of affine coupling layers updating each half of the $100$ field variables. The scaling and translation transformations of each coupling layer is a 2-hidden-layer perceptron with relu activations and $100$ units per layer.

The algorithm minimizing the ``pseudo-likelihood'' objective, defined in Example \cref{ex:normalizing-flow-pseudo-lkl}, follows largely the lines of \cref{alg:pseudo-likelihood-adaptive}. Being more specific, we collect the states of $100$ parallel walkers every $10$ sampling iterations and take a gradient step with the corresponding $1000$-sample batch. The initial learning rate of $10^{-3}$ is halved every $5000$ gradient steps.

We initialize $100$ chains: $20$ at the uniform value of $1$ and $80$ at the uniform value of $-1$.
Thanks to the adaptation of the normalizing flow, leading to good acceptance as reported in the main text, these chains can easily mix between modes and recover the proper statistical weights of $50/50$. 

\begin{figure*}[t!]
    \centering
    \begin{subfigure}[t]{\textwidth}
        \centering
        \includegraphics[width=\textwidth,trim={2.5cm 0 2.5cm 0},clip]{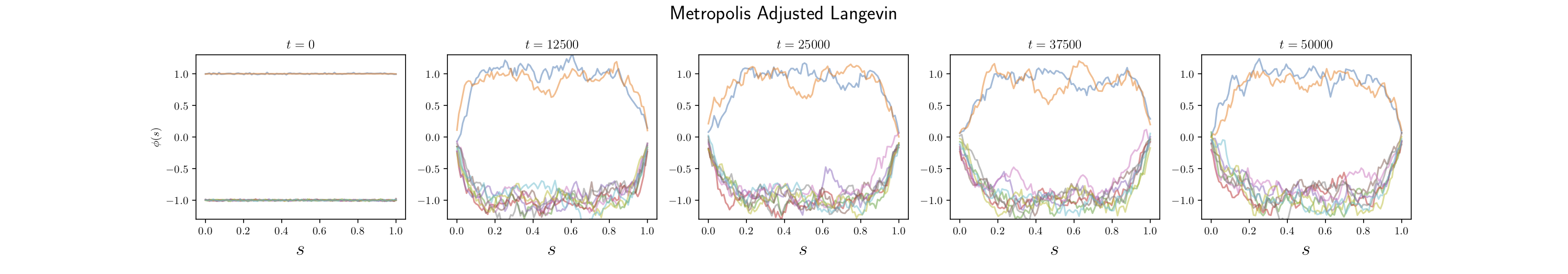}
    \end{subfigure}
    \hfill
    \begin{subfigure}[t]{01\textwidth}
        \centering
        \includegraphics[width=\textwidth,trim={2.5cm 0 2.5cm 0},clip]{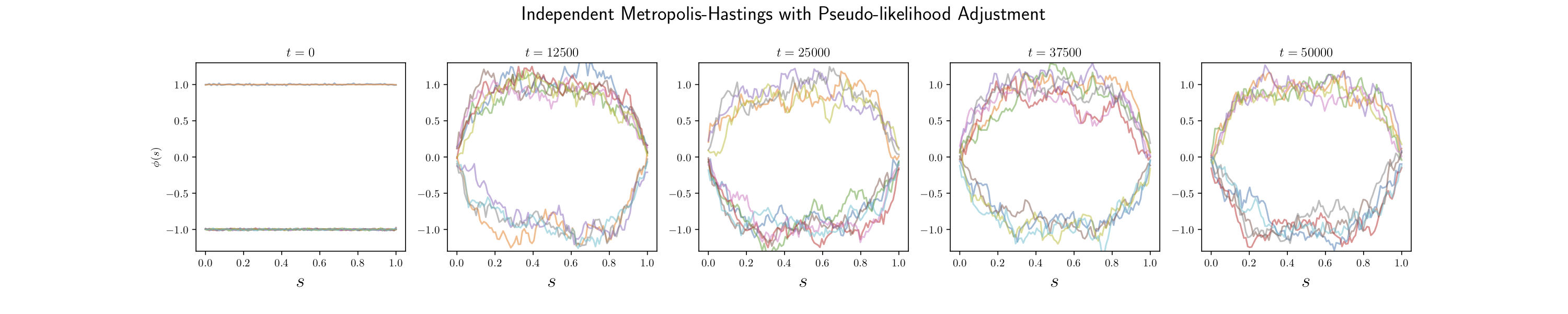}
    \end{subfigure}
    \hfill
    \begin{subfigure}[t]{\textwidth}
        \centering
        \includegraphics[width=\textwidth,trim={2.5cm 0 2.5cm 0},clip]{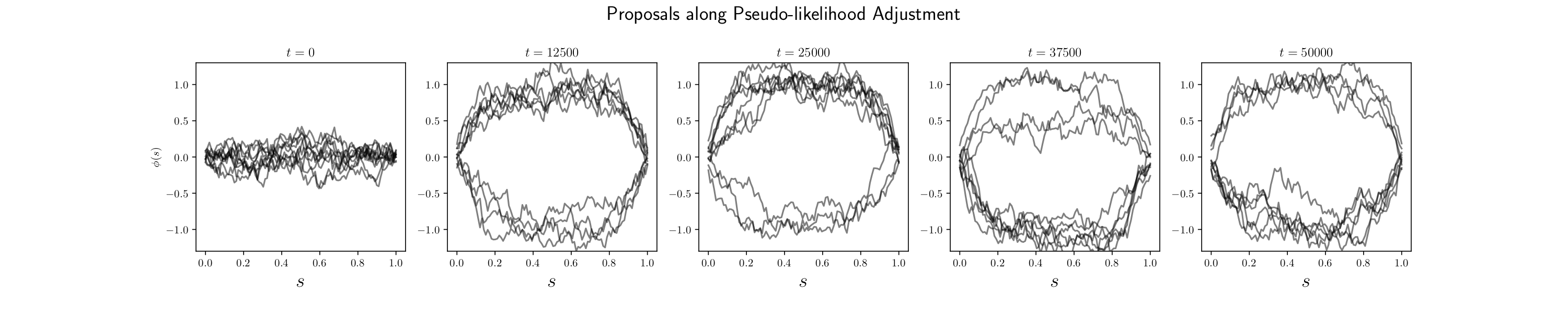}
    \end{subfigure}
    \caption{Samples and proposals in the $\phi^4$ field experiment.}
    \label{fig:field-model-samples}
\end{figure*}

\end{document}